\tikzset{brace/.style={decorate, decoration={brace}},
  brace mirrored/.style={decorate, decoration={brace,mirror}},
}
\newcolumntype{g}{>{\columncolor{red}}c}
\definecolor{myblue}{RGB}{24, 13, 164}
\newcommand{\test}{\textnormal{test}}
\let\hat\widehat
\let\tilde\widetilde
\def \iid {\stackrel{\text{i.i.d.}}{\sim}}
\def \calib {\textrm{calib}}
\def \train {\textrm{train}}
\def \sel {\textrm{sel}}
\def \labelled {\textrm{label}}
\def \test {\textrm{test}}
\def \fdr {\textnormal{FDR}}
\def \bh {\textnormal{BH}}
\def \homo {\textnormal{homo}}
\def \hete {\textnormal{hete}}
\def \dtm {\textnormal{dtm}}
\newcommand{\basename}{\textnormal{SCS} }
\theoremstyle{plain}
\providecommand{\keywords}[1]
{
  \fontsize{9}{12}\selectfont
  \textbf{\textit{Keywords:}} #1
}
\def\@#1\@{\begin{align}#1\end{align}}
\def\$#1\${\begin{align*}#1\end{align*}}
\title{Optimized Conformal Selection: Powerful Selective Inference \\After Conformity Score Optimization}
\author[1]{Tian Bai}
\affil[1]{Department of Mathematics and Statistics, McGill University}
\author[2]{Ying Jin}
\affil[2]{Data Science Initiative and Department of Health Care Policy, Harvard University}
\date{}
\begin{document}

\maketitle

\begin{abstract} 
\fontsize{9}{11}\selectfont
    Model selection/optimization in conformal inference is challenging, since it may break the exchangeability between labeled and unlabeled data. We study this problem in the context of conformal selection, which uses conformal p-values to select ``interesting'' instances with large unobserved labels from  a pool of unlabeled data, while controlling the FDR in finite sample. For validity, existing solutions require the model choice to be independent of the data used to construct the p-values and calibrate the selection set. However, when presented with many model choices and limited labeled data, it is desirable to (i) select the best model in a data-driven manner, and (ii) mitigate power loss due to sample splitting.

    This paper presents OptCS, a general framework that allows valid statistical testing (selection) after flexible data-driven model optimization.  
    We introduce general conditions under which OptCS constructs valid conformal p-values despite substantial data reuse  and handles complex p-value dependencies to maintain  finite-sample FDR control via a novel multiple testing procedure. 
    We instantiate this general recipe to propose three FDR-controlling procedures, each optimizing the models differently: (i) selecting the most powerful one among multiple pre-trained candidate models, (ii) using all data for model fitting without sample splitting, and (iii) combining full-sample model fitting and selection. 
    We demonstrate the efficacy of our methods via simulation studies and real applications in drug discovery and alignment of large language models in radiology report generation. 
\end{abstract}

\keywords{Conformal prediction,  conformal p-value, model selection, FDR control, sample splitting.}
\fontsize{10}{12}\selectfont

\section{Introduction}

Conformal inference is a versatile, distribution-free framework for predictive inference~\citep{vovk2005algorithmic}. 
Given any prediction model, it uses a conformity score to measure how well the outcomes ``conform'' to model predictions; it then leverages data exchangeability to deliver statistical evidence ($p$-value) for various tasks, including prediction set construction~\citep{lei2018distribution} and recent extensions in two-sample testing~\citep{hu2024two} and multiple testing~\citep{bates2023testing,jin2023selection}. Its strength lies in its flexibility as a ``wrapper'' for valid inference, regardless of the model or score used. 
However, despite this flexibility, a common challenge is model selection: determining the optimal conformity score and/or prediction model to improve the performance for the specific statistical task at hand.

We study  model selection in the context of model-free selective inference~\citep{jin2023model}, which aims to identify unlabeled samples with large unobserved labels while controlling selection errors. 
Formally, given labeled data $\{(X_i,Y_i)\}_{i=1}^n$ and unlabeled test samples $\{X_{n+j}\}_{j=1}^m$, the goal is to select a subset $\mathcal{S}\subseteq \{1,\dots,m\}$ of test samples such that most of them obey $Y_{n+j}>c_{n+j}$, where $c_{n+j}$ are user-specified thresholds above which the outcomes are considered interesting. This is formalized by false discovery rate (FDR) control:
\@ \label{eq:fdr}
\fdr := \EE\Bigg[  \frac{\sum_{j=1}^m \ind\{j\in \cS, Y_{n+j}\leq c_{n+j}\}}{1\vee|\cS|} \Bigg] \leq q, 
\@ 
where $q\in (0,1)$ is a user-specified nominal level.   
Since the FDR measures the proportion of correctly selected instances, rigorous control of FDR is crucial for the trustworthiness of such prediction-driven screening and the efficiency of downstream investigations on the selected units (see below for representative applications). 

Conformal selection~\citep{jin2023selection} addresses this problem by extending  conformal inference ideas to multiple test samples; see also recent extensions to  covariate shift~\citep{jin2023model}, online~\citep{xu2024online}, and constrained settings~\citep{wu2023optimal,huoreal,wangconformalized}, and applications to drug discovery~\citep{bai2024conformal} and large language models~\citep{gui2024conformal}. In a nutshell, given any conformity score, it offers an automatic pipeline  to construct $p$-values using a set of labeled data (called calibration data) and determine the selection set by multiple testing. Its performance thus relies on the combined choice of the conformity score and prediction model, which we term the ``model choice''.

To maintain FDR control,  the model choice is required to be independent of the test and calibration data, akin to the celebrated split conformal prediction idea~\citep{lei2018distribution}. This constraint prevents practitioners from selecting models based on their observed performance, and often necessitates splitting labeled data for training, which may also compromise power. We illustrate the practical need for optimizing model selection without these limitations through two typical applications of conformal selection.  

\vspace{0.5em}
\begin{itemize}[itemsep=1ex]
    \item \textbf{Drug discovery}. Early stages of drug discovery aim to find highly viable drug candidates--molecules, antibiotics, or proteins--with desirable biological properties~\citep{szymanski2011adaptation,macarron2011impact}. Such tasks can be characterized by finding drug candidates $j\in\{1,\dots,m\}$ with $Y_{n+j}>c_{n+j}$ for certain unknown property $Y_{n+j}$ and thresholds $c_{n+j}$. Given any model that predicts $Y_{n+j}$ based on physical/chemical features $X_{n+j}$, conformal selection can be used to shortlist drug candidates with FDR control, enabling efficient pre-evaluation before costly validations~\citep{jin2023selection}. 
    With numerous pretrained drug property prediction models by recent advances in empirical machine learning~\citep{xiong2019pushing,dgllife,huang2020deeppurpose,carracedo2021review}, 
    it is desirable to use data at hand to determine the best one for an FDR-controlling drug discovery task. 
    
    \item \textbf{LLM abstention}. Large language models (LLMs) show tremendous potential in high-stakes tasks  such as generating radiology reports, but issues like hallucination and factual errors pose significant risks~\citep{huang2023survey,weidinger2021ethical}. In this context, conformal selection can be used to select radiology images $j\in \{1,\dots,m\}$ for which the generated reports align with expert standards, represented by an unknown `alignment indicator' $Y_{n+j}=1$. \cite{gui2024conformal} trains a model (alignment predictor) that predicts $Y_{n+j}$  based on features of LLM outputs $X_{n+j}$ using images with human-expert reports (labeled training data), and applies conformal selection to select new images with ``aligned'' outputs using a separate calibration dataset. With various model choices for the alignment predictor, it becomes crucial to efficiently use the usually limited labeled data to optimize the selection performance. 
\end{itemize}
\vspace{0.5em}

The challenges in model selection have been more extensively studied in constructing conformal prediction sets for a single test point~\citep{stutz2021learning,yang2024selection,huang2024uncertainty,xie2024boosted,liang2024conformal}. 
First, selecting models based on their observed performance introduces a double-dipping bias, disrupting the exchangeability essential for validity~\citep{liang2024conformal}. To address this, a popular strategy is to conduct model selection in a smaller, independent fold~\citep{stutz2021learning,yang2024selection,huang2024uncertainty,xie2024boosted}. However, this needs additional sample splitting, which may compromise the power gain from model selection. While conformal selection inherits both  challenges in optimizing model choices without violating FDR control, here with multiple test points, the limited data makes it  even harder to design effective strategies, e.g., simulating the performance of individual models.

\begin{figure}
    \centering
    \captionsetup{font=small}
    \includegraphics[width=0.95\linewidth]{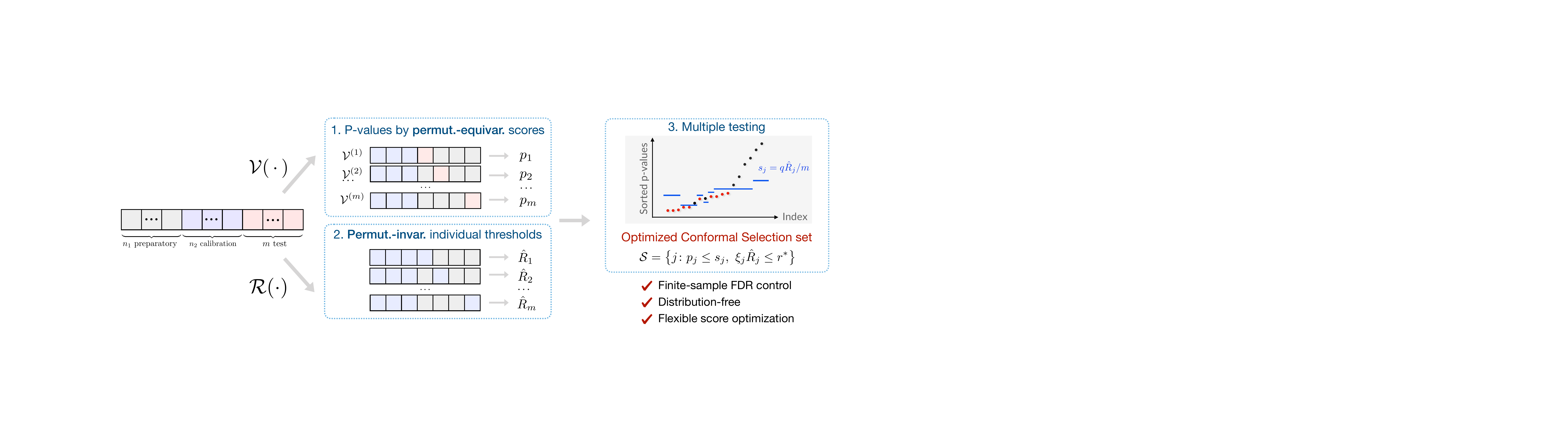}
    \caption{Overview of OptCS. (Optional) Split labeled data into preparatory and calibration sets. \ul{1.\,P-value}. We construct conformal p-values $\{p_j\}$ based on conformity scores from any data-dependent model optimization process $\mathcal{V}(\cdot)$ obeying a mild permutation-equivariance condition. 
    \ul{2.\,Selection threshold}. We calibrate individual thresholds $\hat{R}_j$ obeying a mild permutation-invariance condition. \ul{3.\,Multiple testing}. We combine $\{p_j\}$ and $\{\hat{R}_j\}$ to produce a selection set $\mathcal{S}$ with finite-sample, distribution-free FDR control.}
    \label{fig:overview}
\end{figure}

\vspace{-0.5em}
\paragraph{Our contributions.}

In this paper, we propose Optimized Conformal Selection (OptCS), a general framework for maintaining FDR-controlling sample selection while allowing flexible reuse of data for model optimization. We summarize our methodological contributions below. 
\vspace{0.25em}
\begin{itemize}
    \item In Section~\ref{sec:method}, we present a general procedure (visualized in Figure~\ref{fig:overview}) that encompasses all use cases of OptCS. We introduce permutation-based conditions for constructing p-values and multiple testing, which allow to build conformity scores (test statistics) through flexible data-dependent processes that may involve all the  data while controlling the FDR in screening for large-valued outcomes.  
    \item In Section~\ref{subsec:modsel},  we develop OptCS-MSel, an instantiation of OptCS that exploits power gain from selecting optimal model classes. It yields FDR-controlling conformal selection after adaptively selecting from multiple pre-trained models while permitting ``double-dipping''-like behavior for model selection.  
    \item In Section~\ref{subsec:loo}, we develop OptCS-Full, a second instantiation of OptCS that leverages all labeled data to fit a more accurate prediction model within a given model class (i.e., without model selection). 
    OptCS-Full avoids the randomness and power loss in sample splitting, and can be viewed as a generalization of full conformal prediction~\citep{vovk2005algorithmic} to multiple testing contexts.  
    \item In Section~\ref{subsec:loo_sel}, we propose OptCS-Full-MSel, which combines the preceding two ideas to simultaneously exploit power gain from full-data training and model selection. It involves all labeled data for model training, selection, and constructing the selection set $\cS$, and typically yields the highest power. 
\end{itemize}
\vspace{0.25em}

Figure~\ref{fig:preview} previews the performance of the three procedures with distinct model optimization strategies in our numerical experiments. Each strategy, applicable in different scenarios, significantly improves power upon baseline methods. 
Finally, we demonstrate the efficacy of the methods through numerical simulations in Section~\ref{sec:simu} and real applications in drug discovery and large language model abstention in Section~\ref{sec:real}. 

\vspace{-0.5em}
\paragraph{Data and Code.}
Reproducibility  code for both our simulation and real data experiment can be found at the Github repository \url{https://github.com/Tian-Bai/OptCS}.  

\begin{figure}[htbp]
    \centering
    \includegraphics[width=0.78\linewidth]{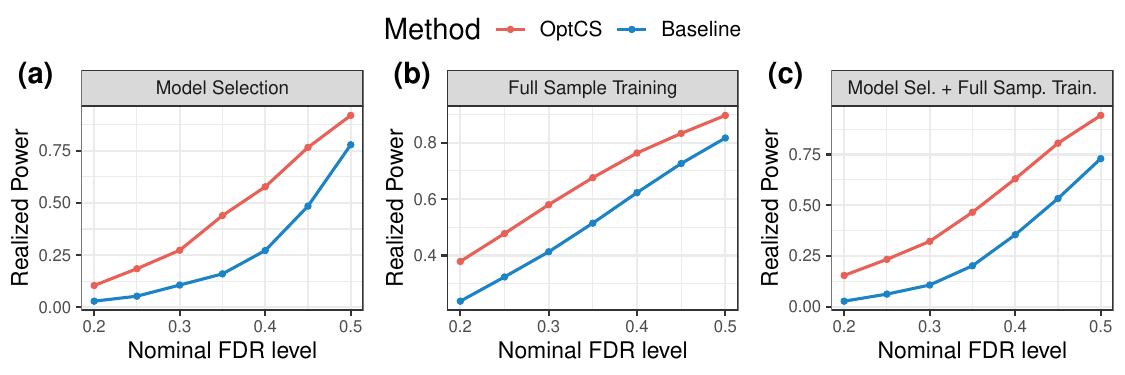}
    \caption{Preview of numerical results. Left: Performance of OptCS-MSel which selects from pre-trained models. Middle: Performance of OptCS-Full which leverages full data for model fitting with a given model class. Right: Performance of OptCS-Full-MSel which combines model selection and full-sample training.}
    \label{fig:preview}
\end{figure}

\section{Preliminaries} \label{sec:prelim}

In this section, we first review the key ideas in the  conformal selection framework~\citep{jin2023selection}, and then discuss the challenges in optimizing the conformity score function, followed by   
a brief overview of related works in the field of selection inference and conformal inference.

\subsection{Recap: Conformal selection}
\label{subsec:recap_cs}

Assume access to a set of labelled data $\cD_\labelled = \{(X_i, Y_i)\}_{i=1}^n$ and a set of unlabeled test data $\cD_\test = \{ X_{n+j} \}_{j=1}^m$ whose responses $\{Y_{n+j}\}_{j=1}^m$ are unobserved. 
Conformal selection begins by randomly splitting the labelled data $\cD_\labelled$ into two disjoint subsets, the training set $\cD_\train := \{(X_i,Y_i)\}_{i=1}^{n_1}$ and the calibration set $\cD_\calib :=  \{(X_i,Y_i)\}_{i=n_1+1}^{n}$. First, $\cD_\train$ is used to train a model and build a \emph{monotone} conformity score function $V: \cX \times \cY \rightarrow \mathbb{R}$ where $\cX$ and $\cY\subseteq \RR$ represent the sample space of $\{X_i\}$ and $\{Y_i\}$, respectively. 
\vspace{-0.25em}
\begin{definition} \label{def:monotone_vanilla}
A score function $V\colon \cX\times\cY\to \RR$ is monotone if $V(x,y)\leq V(x,y')$ whenever $y\leq y'$ for any $x\in \cX$ and $y,y'\in \cY$. 
\end{definition}
\vspace{-0.25em}
The score function $V$ can wrap around any prediction model (obtained in the training step), e.g., $V(x, y) = y - \hat\mu(x)$ where $\hat\mu(x)$ is a point prediction for $y$ given $x$. Other popular choices include those based on quantile regression~\citep{romano2019conformalized} and conditional cumulative distribution functions~\citep{chernozhukov2021distributional}.  
We then compute the calibration scores $V_i = V(X_{n_1+i}, Y_{n_1 + i})$, $i=1,\dots,n_2$, and test scores $\hat{V}_{n+j} = V(X_{n+j}, c_{n+j})$. These scores can be understood as ``test statistics'' for large outcomes, and we will use this term interchangeably. Conformal selection then builds a conformal $p$-value for each test point: 
\[
    p_j = \frac{\sum_{i=n_1+1}^n \ind \{V_i < \hat{V}_{n+j} \}
    + U_j \cdot (1 + \sum_{i=n_1+1}^n \ind \{V_i = \hat{V}_{n+j}\}) }{n_2+1}  ,
\]
where $U_j\iid \textrm{Unif}([0,1])$ is a random variable for tie-breaking, and $n_2=n-n_1$.   
Intuitively, a small p-value suggests that $\hat{V}_{n+j}$ is unusually large compared to $\{V_i\}$, indicating that the corresponding $Y_{n+j}$ is likely to exceed $c_{n+j}$, i.e.,  the unobserved label is ``of interest''. 
Formally, $\PP(p_j \leq t, Y_{n+j}\leq c_{n+j})\leq t$ for all $t\in[0,1]$. 

The selection set $\cS$ is then obtained by applying the Benjamini-Hochberg (BH) procedure~\citep{benjamini1995controlling} to $\{p_j\}_{j=1}^m$ at the nominal FDR level $q$. 
As long as each test point $(X_{n+j},Y_{n+j})$ is exchangeable with the calibration data $\{(X_i,Y_i)\}_{i=n_1+1}^n$, it is shown in \cite{jin2023selection} that $\cS$ obeys the FDR control~\eqref{eq:fdr} in finite samples.

From now on, we refer to this procedure as Split Conformal Selection (SCS) for expositional convenience. 
As suggested by \cite{jin2023selection}, we will exclusively focus on powerful ``clipped''-type scores below.

\begin{assumption}
    The thresholds $\{c_i\}_{i=1}^n$ are observed for the labeled data, such that $\{(X_i,c_i,Y_i)\}_{i=1}^{n+m}$ are exchangeable across $i\in[n+m]$. All the score functions considered are ``clipped''-type scores which obey $V(X_i,Y_i) = V(X_i,c_i)$ when $Y_i\leq c_i$. 
\end{assumption}

With observed $\{c_i\}_{i=1}^n$, any monotone score function can be translated to a clipped version with strictly higher power. Besides higher power, the property that $V(X_i,Y_i) = V(X_i,c_i)$ when $Y_i\leq c_i$ is crucial for flexible score optimization without observing the true test labels. Finally, the score function  implicitly depends on the thresholds, i.e., $V(x,y,c)$, yet we suppress the dependence on $c$ for notational simplicity.

\subsection{Challenges in optimizing conformity score function} 
\label{subsec:challenges}

We take a moment  to review the theoretical underpinning of \basename and discuss the challenges in preserving FDR control while optimizing the score function with substantial data reuse. Readers may skip this subsection without missing key concepts.

At a high level, the FDR control of \basename relies on two facts: 
\begin{enumerate}[label=(\roman*)]
    \item \emph{Valid $p$-values}: when $\hat{V}_{n+j}$ is replaced with the unobserved $V(X_{n+j},Y_{n+j})$, the conformal $p$-value follows a uniform distribution due to exchangeability. Thus, a small value of $p_j$ indicates evidence for $\hat{V}_{n+j} < V(X_{n+j},Y_{n+j})$ and hence $Y_{n+j}>c_{n+j}$ due to monotonicity of the function $V$. 
    \item \emph{Favorable $p$-value dependence}: while the conformal $p$-values are dependent since they use the same set of calibration data, a second crucial fact is that they are positively dependent~\citep{bates2023testing,jin2023selection}, which ensures FDR control with the BH procedure.
\end{enumerate} 
Ensuring the independence between the model choice $V$ and the data it applies to (i.e., $\cD_\calib\cup \cD_\test$) is essential for both facts above, which can be disrupted when $V$ or the ``test statistic'' is chosen with data.

We illustrate this point with a  naive attempt to optimize the score function. Suppose there are candidate score functions $\{V{(\cdot,\cdot; k)}\colon \cX\times\cY\to \RR\}_{k=1}^K$ obtained in the training fold. A simple idea is to run \basename with each score and choose the one that leads to the largest selection set. Namely, one selects 
\$
    \hat{k} = \argmax_{k\in[K]} |\cS{(k)}|,
\$
where $\cS{(k)}$ is the output of \basename with the $k$-th score $V{(\cdot \,, \cdot \,; k)}$. 
The final selection set is
$\cS = \cS{(\hat{k})}$, which is equivalent to applying the BH procedure to p-values (assuming no ties for simplicity)
\@\label{eq:greedy_pvals}
p_j = \frac{\sum_{i=n_1+1}^n \ind \{V(X_i,Y_i;\hat{k}) < V(X_{n+j},c_{n+j};\hat{k}) \}
    + 1}{n_2+1}.
\@
Note that $\hat{k}$ depends on $\cD_\calib$ and $(X_{n+j},c_{n+j})$ in a non-symmetric way. Thus, the scores $V(X_i,Y_i; \hat{k})$ and $V(X_{n+j},Y_{n+j}; \hat{k})$ are no longer exchangeable, violating the validity of p-values~\eqref{eq:greedy_pvals}. It is also difficult to justify their positive dependence for FDR control, since $\hat{k}$ is a complex function of the calibration and test data. Indeed, we will observe drastic violation of FDR by such ``arbitrary'' double-dipping in our experiments.

In general, optimizing the conformity score function inherently introduces substantial data reuse, such as by inspecting the selection performance or by avoiding sample splitting. Just as in this simple example, such data reuse often  invalidates individual p-values and complicates FDR control due to intricate dependency. 
To address these challenges, our approach is to (i) construct valid p-values after score optimization, and (ii) conduct multiple testing with FDR control under less stringent requirements on the p-value dependence.

\subsection{Related works}

This work builds upon the conformal selection framework~\citep{jin2023selection} for the model-free selective inference problem, with recent extensions to settings with covariate shift~\citep{jin2023model}, online selection~\citep{xu2024online}, and selection with objectives and and constraints~\citep{wu2023optimal,huoreal,wangconformalized}. The FDR control it offers ensures the reliability and efficiency of decision-making and scientific discovery processes that use machine learning prediction models for data screening, such as representative applications in compound screening for drug discovery~\citep{bai2024conformal,jin2023model2} and selecting trusted outputs from large language models~\citep{gui2024conformal}. These applications motivate the need to select a best-performing conformity score while maintaining rigorous FDR control. 

This work adds to the literature on model selection and optimization in conformal inference to enhance downstream task performance. This literature covers the tasks of (i) prediction set construction for a single test point and (ii) selection of multiple test points. In (i), while earlier works propose score functions based on theoretical justification~\citep{romano2019conformalized,romano2020classification}, while recent approaches perform data-driven score selection by further splitting the data and conduct model selection/optimization on a separate data fold to maintain validity~\citep{stutz2021learning,yang2024selection,huang2024uncertainty,xie2024boosted,liang2024conformal}. Besides, \cite{liang2024conformal} studies problem (i) without extra sample splitting by maintaining permutation invariance to the calibration and test point. While related, our techniques are quite different since we consider selecting multiple data points with FDR control, instead of prediction set for one test point. 
Works for task (ii) are limited to the outlier detection setting~\citep{liang2024integrative,marandon2024adaptive}, wherein the strategies for developing better scores are pre-determined instead of being data-driven.

To control the FDR with complicated dependence structure, our general approach is to calibrate individual selection criteria for p-values. This generalizes the strategy in \cite{jin2023model} used to address complicated dependence between weighted conformal p-values due to reweighting, which was extended by \cite{xu2024online} to online settings. However, our construction is more general and serves distinct purposes. It can also be viewed as an extension of conditional calibration~\citep{fithian2022conditional} yet without Monte-Carlo calibration of the thresholds due to the exchangeable structure in conformal inference.

\subsection{Notations}
Throughout the paper, we denote the full observations as  $Z_i = (X_i, Y_i)$ for $i = 1, 2, \dots, n+m$ and the partial observations as $\hat{Z}_{n+j} = (X_{n+j}, c_j)$ for $j = 1, 2, \dots, m$. For any set $A$, we write $[A]$ as its unordered version, which provides the same information as the order statistics of values in $A$. For any integer $m \in \mathbb{N}$, we denote $[m] := \{1, 2, \dots, m\}$. For a vector $\mathbf{z}\in \RR^d$, we denote $\bz_i$ as the $i$-th element. For a matrix $M\in \RR^{d_1\times d_2}$, we denote $M_{ij}$ as the element on the $i$-th row and $j$-th column, $M_{i,:}$ as the $i$-th row vector, and $M_{:,j}$ the $j$-th column vector. For values $a_1,\dots,a_n$, we sometimes write $a_{1:n}=(a_1,\dots,a_n)$.

\section{Optimized Conformal Selection} \label{sec:method}

We begin by introducing a general procedure that encompasses all  use cases of OptCS,  including model selection  and full-sample model training and/or selection.  
The entire workflow consists of three steps:
\vspace{0.5em}
\begin{itemize}
\item \emph{Step 1: P-value construction.} We introduce a novel approach to construct valid conformal p-values using data-driven score functions, allowing for flexible model selection and optimization.
\item  \emph{Step 2: Calibration of selection criteria.} As preparation for multiple testing, we calibrate individual selection criteria for the p-values using ``auxiliary selection sizes''. This step addresses the complex dependencies among the p-values obtained in Step 1 that may arise from data-driven score functions. 
\item \emph{Step 3: Multiple testing.} Finally, we develop a multiple testing procedure that leverages the p-values and auxiliary selection sizes to construct the final selection set, ensuring finite-sample FDR control.
\end{itemize}
\vspace{0.5em}

The high-level idea of OptCS is to ``individualize'' the $p$-values and selection criteria, constructed via slightly different model optimization processes for each test point. The key to validity is to ensure the optimization process for the $j$-th test sample is symmetric with respect to this sample and the calibration data, thereby separating the information used in the $p$-value and in the model optimization process for rigorous FDR control.
This allows considerable flexibility in adapting to the data at hand, both in constructing powerful test statistics (via the score-generating functional) and in calibrating the multiple testing procedure (via the auxiliary selection sizes).  
We detail these  steps in Section~\ref{subsec:general_method}, followed by the general theoretical guarantee in Section~\ref{subsec:general_theory}. 
Concrete use cases of OptCS will be introduced in Section~\ref{sec:procedures}. 

\subsection{General procedure}
\label{subsec:general_method}
 
\noindent\underline{\textbf{Step 1: P-values.}}
The first innovation of OptCS is a general construction of valid conformal p-values with data-driven score functions. By ``data-driven'', we mean the choice of scores may depend on the calibration and test data. 
To enable such flexibility while ensuring the validity of the resulting p-values, we introduce the notion of score-generating functional, which allows to use individual score functions for each test point.

\begin{definition}[Score-generating functional]
\label{def:score_gen} 
    A score-generating functional is a function $\cV: (\cX \times \cY)^{n+m} \rightarrow \mathbb{R}^{m \times (n_2+m)}$ with $n_1, n_2, n, m \in \NN$, and $n=n_1+n_2$. For notational simplicity, we denote $\cV^{(j)}: (\cX \times \cY)^{n+m} \rightarrow \mathbb{R}^{n_2+m}$ such that $\cV^{(j)}(\mathbf{z})=\cV(\mathbf{z})_{j,:}$ for any $\mathbf{z}\in (\cX\times \cY)^{n+m}$. 
\end{definition}

\vspace{-0.5em}
\begin{figure}[!ht]
\centering
\captionsetup{font=small}
\resizebox{0.6\textwidth}{!}{%
\begin{circuitikz} 
\node [font=\LARGE] at (4,12.5) {$\mathcal{V}\Big($};
\foreach \i in {0,1,2} {
    \fill[black!10] (4.75 + \i*1,13) rectangle (5.75 + \i*1,12);   
    \draw [ line width=1pt ] (4.75 + \i*1,13) rectangle (5.75 + \i*1,12);
}
\foreach \i in {3,4,5} {
    \fill[blue!10] (4.75 + \i*1,13) rectangle (5.75 + \i*1,12);   
    \draw [ line width=1pt ] (4.75 + \i*1,13) rectangle (5.75 + \i*1,12);
}
\foreach \i in {6,7,8} {
    \fill[red!10] (4.75 + \i*1,13) rectangle (5.75 + \i*1,12);  
    \draw [ line width=1pt ] (4.75 + \i*1,13) rectangle (5.75 + \i*1,12);
}
\draw [ line width=1pt ] (4.75,13) rectangle (5.75,12);
\node [font=\LARGE] at (14.65,12.5) {$\Big)=$};
\node [font=\LARGE] at (6.3,12.5) {$\cdots$};
\node [font=\LARGE] at (9.3,12.5) {$\cdots$};
\node [font=\LARGE] at (12.3,12.5) {$\cdots$}; 
\foreach \y in {0, 1, 2, 3, 4, 5} {
    \fill[white] (15.5 + \y*0.8, 13.6) rectangle (16.3 + \y*0.8, 12.8);
    \draw [line width=1pt] (15.5 + \y*0.8, 13.6) rectangle (16.3 + \y*0.8, 12.8);
}
\node [font=\large] at (17.9, 12.6) {$\vdots$};
\foreach \y in {0, 1, 2, 3, 4, 5} {
    \fill[white] (15.5 + \y*0.8, 12.2) rectangle (16.3 + \y*0.8, 11.4);
    \draw [line width=1pt] (15.5 + \y*0.8, 12.2) rectangle (16.3 + \y*0.8, 11.4);
}
\draw [line width=1pt,decorate,decoration={brace,amplitude=10pt,mirror}] (4.75,11.8) -- (7.75,11.8);
\node  at (6.25,11.2) {$n_1$ preparatory};
\draw [line width=1pt, decorate, decoration={brace,amplitude=10pt,mirror}] (7.75,11.8) -- (10.75,11.8);
\node  at (9.25,11.2) {$n_2$ calibration};
\draw [line width=1pt, decorate, decoration={brace, amplitude=8pt, mirror}] (15.5, 11.3) -- (17.9, 11.3);
\node [font=\small] at (16.7, 10.7) {$n_2$ calib.~scores};
\draw [line width=1pt, decorate, decoration={brace, amplitude=8pt, mirror}] (17.9, 11.3) -- (20.3, 11.3);
\node [font=\small] at (19.2, 10.7) {$m$ test scores};
\draw [line width=1pt, decorate, decoration={brace,amplitude=10pt,mirror}] (10.75,11.8) -- (13.75,11.8);
\node [font=\large] at (21,13.25) {$\mathcal{V}^{(1)}$};
\node [font=\large] at (21,12.6) {$\vdots$};
\node [font=\large] at (21,11.85) {$\mathcal{V}^{(m)}$};
\node  at (12.25,11.2) {$m$ test};
\node [font=\LARGE] at (4,12.5) {$\mathcal{V}\Big($};
\end{circuitikz}
}%
\caption{Visualization of a score-generating functional.}
\label{fig:score_functional_visual}
\end{figure}
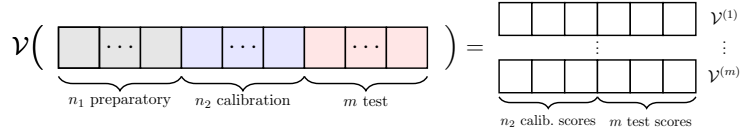

Figure~\ref{fig:score_functional_visual} illustrates the inputs and outputs of a score-generating functional. We refer to the first $n_1$ labeled samples as preparatory data and the subsequent $n_2$ labeled samples as calibration data. The calibration data will be assigned scores to compare with test scores for $p$-value construction. This separation simplifies descriptions of preparatory steps like model pre-training (see Section~\ref{subsec:modsel} for such a scenario). One may simply take $n_1=0$ to involve all the labeled data in both score generating and calibration. In the full-data procedures in Sections~\ref{subsec:loo} and~\ref{subsec:loo_sel}, the preparatory data will always be used in training, yet excluding them from calibration reduces the  number of required model fits. Finally, the last $m$ inputs represent $\{\hat{Z}_{n+j}\}_{j=1}^m$ with unknown responses.
The output of $\cV$ is an $m\times(n_2+m)$ matrix. Each row $\cV^{(j)}$ will be used to construct the $j$-th $p$-value. It can be viewed as a process that generates a score function $V^{(j)}\colon \cX\times \cY\to \RR$, which is then applied to the $(n_2+m)$ calibration and test inputs to obtain their numerical scores in the $j$-th row.

\vspace{0.5em}
\begin{remark}
    Definition~\ref{def:score_gen} strictly generalizes the concept of score functions in SCS. As introduced in Section~\ref{sec:prelim}, \basename computes the calibration scores $\{V_i\}_{i=n_1+1}^n$ and test scores $\{\hat{V}_{n+j}\}_{j=1}^m$ using a pre-determined score function $V$. This is a special case with $\cV^{(j)} (\bz)  \equiv (V_1, \dots, V_{n_2}, \hat{V}_{n+1}, \dots, \hat{V}_{n+m})
    $. 
\end{remark}
\vspace{0.5em}

Intuitively, $\cV$ encapsulates 
all the model selection/optimization operations to search for a powerful score (test statistic). For instance, it may include a process where \basename is applied to the calibration and test data with multiple candidate pre-determined score functions, and one of them is chosen based on the selection performance. It may also include a training process that involves all the labeled and unlabeled data.

Given data $Z_{1:n_1}$, $Z_{n_1+1:n}$, and $\hat{Z}_{n+1:n+m}$, we denote the generated scores as  
\@\label{eq:scores}
(  V_{n_1+1}^{(j)}, \cdots ,  V_{n}^{(j)}, \hat{V}_{n+1}^{(j)}, \dots, \hat{V}_{n+m}^{(j)} ) :=  \cV^{(j)}(Z_{1:n_1}, \, Z_{n_1+1:n}, \, \hat{Z}_{n+1:n+m}) .
\@
We then compute the conformal p-values in the same way as conformal selection:
\@
\label{form:conf_p}
p_j = \frac{1 + \sum_{i =n_1+1}^{n} \ind\{V_i^{(j)} \leq \hat{V}_{n+j}^{(j)}\}}{n_2 + 1}.
\@

The validity of each $p_j$ only requires each set of scores~\eqref{eq:scores} in $\cV^{(j)}$ to be monotone and permutation equivariant to the calibration data and the $j$-th test point, formalized in the two definitions below.\footnote{In the general case where $\cV$ is a randomized algorithm, we require Definition~\ref{def:monotone} to hold conditional on the randomness in the algorithm, while Definition~\ref{def:permu_equiv} to hold in a distributional sense, i.e., $=$ could be replaced by $\stackrel{d}{=}$.} By associating the construction of each $p$-value with an individual score-generating process, it is much easier to fulfill these requirements with flexibility by designing each individual process.  
 
\begin{definition}[Monotonicity for the null]
\label{def:monotone}
    A score-generating functional $\cV: (\cX \times \cY)^{n_1+n_2+m} \rightarrow \mathbb{R}^{m\times(n_2+m)}$ is monotone for the $j$-th null, if for any $z_{1:n_1}'\in (\cX\times\cY)^{n_1}$, $z_{1:n_2}\in (\cX\times\cY)^{n_2}$, $\hat{z}_{n+1:n+m}\in (\cX\times\cY)^{m}$, where $z_i=(x_i,y_i)$, $\hat{z}_{n+j} = (x_{n+j},c_{n+j})$, 
    and $z_{n+j}=(x_{n+j},y_{n+j})\in \cX\times\cY$, whenever $y_{n+j}\leq c_{n+j}$, it holds that 
        \$ 
        \cV^{(j)}(\textcolor{gray}{z_{1:n_1}',z_{1:n_2}, \hat{z}_{n+1:n+j-1},} \,\hat{z}_{n+j}, \textcolor{gray}{\hat{z}_{n+j+1:n+m}})_{n_2+j} \geq \cV^{(j)}(\textcolor{gray}{z_{1:n_1}',z_{1:n_2}, \hat{z}_{n+1:n+j-1},} \, {z}_{n+j},\textcolor{gray}{\hat{z}_{n+j+1:n+m}})_{n_2+j}, 
        \$  
        and for any $\ell\neq j$, $\ell \in [n_2+m]$,
        \$
        \hspace{-1em}&\cV^{(j)}(\textcolor{gray}{z_{1:n_1}',z_{1:n_2}, \hat{z}_{n+1:n+\ell-1},} \,\hat{z}_{n+\ell}, \textcolor{gray}{\hat{z}_{n+\ell+1:n+m}})_{ \ell} = \cV^{(j)}(\textcolor{gray}{z_{1:n_1}',z_{1:n_2}, \hat{z}_{n+1:n+\ell-1},} \, {z}_{n+\ell},\textcolor{gray}{\hat{z}_{n +\ell+1:n+m}})_{ \ell}.
        \$
\end{definition}

\begin{definition}[Permutation equivariance]
\label{def:permu_equiv}
    We say a score-generating functional $\cV: (\cX \times \cY)^{n_1+n_2+m} \rightarrow \mathbb{R}^{m\times(n_2+m)}$ is permutation equivariant if for all $j\in [m]$, and for any $z_{1:n_1}' \in (\cX\times\cY)^{n_1}$, $z_{1:(n_2+m)}\in (\cX\times\cY)^{n_2+m}$, and any permutation $\pi: \{1, \dots, n_2, n_2+j\} \rightarrow \{1, \dots, n_2, n_2+j\}$, it holds that 
        \$
        \hspace{-1em}\cV^{(j)} (z_{1:n_1}', \, z_{1:n_2}, \, z_{n_2+1}, \dots, z_{ n_2+j }, \dots, z_{n_2+m})_{\pi(i)} = \cV^{(j)} (z_{1:n_1}', \,  z_{\pi(1:n_2)}, \,z_{n_2+1}, \dots, z_{\pi(n_2+j)}, \dots, z_{n_2+m})_{i}.
    \$ 
\end{definition}

The above definitions  impose two intuitive requirements on the ($j$-th set of) generated scores which extend upon SCS. 
Definition~\ref{def:monotone} is a generalization of the simple idea that higher Y values should lead to higher scores. Technically, it ensures the ordering of scores informs the ordering of $Y_{n+j}$ and $c_{n+j}$, extending Definition~\ref{def:monotone_vanilla} to data-dependent scores. Definition~\ref{def:permu_equiv} posits that permuting the $(n_2+1)$ inputs moves the positions of their output score values accordingly without changing their values. It ensures that our p-values behave correctly under the null hypothesis despite the data-driven search for powerful test statistics. 

While we write the two conditions in a general form to cover all use cases, intuitively, our methods meet these conditions by setting $V_i^{(j)} = V(X_i,Y_i;\hat\mu)$, where $V$ is monotone in $y$ given any fitted model $\hat\mu$, and $\hat\mu$ is obtained from a process that is permutation invariant to $\{Z_i\}_{i=n_1+1}^n$ and the $j$-th test point.
Later, we will construct $\cV$ in concrete scenarios and show how these conditions are met in each of them.

Our p-values~\eqref{form:conf_p} are ``valid'' in the same sense as in \basename 
under the above two conditions. 
However, the dependence among these $p$-values can be intricate due to the flexible score-generating process; e.g., they might not be PRDS~\citep{benjamini2001control}. We thus need additional techniques for FDR control.

\vspace{0.5em}
\noindent\underline{\textbf{Step 2: Calibrate selection criteria.}} To handle dependency, our second innovation is to calibrate individual selection criteria for the $p$-values via ``auxiliary selection sizes''.  
These variables are conditionally independent of our $p$-values; as such, they guide multiple testing without interrupting the information contained in the $p$-values.
They are defined via a function that satisfies a permutation invariance condition.

\begin{definition}[Permutation invariance under the null]
\label{def:R_mon_invar}
    We say a function $\cR \colon (\cX\times \cY)^{m+n} \to (\RR^{+})^{m}$ is permutation invariant under the $j$-th null if for any $z_{1:n_1}'\in (\cX\times\cY)^{n_1}$, $z_{1:n_2}\in (\cX\times\cY)^{n_2}$, $\hat{z}_{n+1:n+m}\in (\cX\times\cY)^{m}$, and any $z_{n+j}=(x_{n+j},y_{n+j})$ obeying $y_{n+j}\leq c_{n+j}$, 
    it holds that 
        \@ \label{eq:permu_invar_first}
        \cR(\textcolor{gray}{z_{1:n_1}',z_{1:n_2}, \hat{z}_{n+1:n+j-1},} \,\hat{z}_{n+j}, \textcolor{gray}{\hat{z}_{n+j+1:n+m}})_j \geq \cR_0 (\textcolor{gray}{z_{1:n_1}',z_{1:n_2}, \hat{z}_{n+1:n+j-1},} \, {z}_{n+j},\textcolor{gray}{\hat{z}_{n+j+1:n+m}})_j,  
        \@ 
        for some function $\cR_0\colon (\cX\times\cY)^{n+m}\to (\RR^{+})^{m}$ obeying  
        \@ \label{eq:permu_invar_second}
        \cR_0 (z_{1:n_1}', \, z_{1:n_2}, \, z_{n+1}, \dots, z_{ n+j }, \dots, z_{n+m})_j  = \cR_0 (z_{1:n_1}', \,  z_{\pi(1:n_2)}, \,z_{n+1}, \dots, z_{\pi(n+j)}, \dots, z_{n_2+m})_j
        \@
     for any permutation $\pi: \{1, \dots, n_2, n+j\} \rightarrow \{1, \dots, n_2, n+j\}$. 
\end{definition}

The key intuition behind the permutation invariance condition is that $\cR$ only use auxiliary information in the unordered set of $\{Z_i\}_{i=n_1+1}^n$ and the $j$-th test point. This is similar to how we keep the search of scores (test statistics)  auxiliary to the ordering of data (ensured by permutation equivariance in Definition~\ref{def:permu_equiv}).

We define the auxiliary selection sizes as the output of such a functional $\cR$: 
\@\label{eq:aux_sel_construct}
(\hat{R}_1,\dots,\hat{R}_m) := \cR( Z_{1:n_1}, Z_{n_1+1:n}, \hat{Z}_{n+1: n+m}). 
\@

While the permutation invariance condition is the only requirement, we will see later in Remark~\ref{rm:Rj_choice} that, when considering subsequent multiple testing, an ideal choice of $\hat{R}_j$ is to mimic the selection set obtained by applying the BH procedure to p-values in~\eqref{form:conf_p}. 

\vspace{0.5em}
\noindent\underline{\textbf{Step 3: Multiple testing.}} Given the p-values~\eqref{form:conf_p}  and the auxiliary selection sizes~\eqref{eq:aux_sel_construct}, we compute preliminary selection thresholds $s_j := q \hat{R}_j /m$, and construct the final selection set following~\cite{jin2023model,fithian2022conditional}: 
\vspace{-0.5em}
\@\label{eq:final_R}
\cS = \big\{ j\colon  p_j \leq s_j, ~ \xi_j  \hat{R}_{j}   \leq  r^*  \big\}, \quad &\textrm{where} \quad r^* := \max\Big\{ r\colon  \sum_{j=1}^m \ind {\{  p_j\leq s_j, \, \xi_j  \hat{R}_j \leq r \}}   \geq r \Big\}, \vspace{-1em}
\@
with three options for $\{\xi_j\}$: (a) \emph{Heterogeneous pruning}: independently generate $\{\xi_j\} \iid \textrm{Unif}[0,1]$; (b) \emph{Homogeneous pruning}: set $\xi_j\equiv \xi$ for an independent $\xi\sim \textrm{Unif}[0,1]$; (c) \emph{Deterministic pruning}: set $\xi_j\equiv 1$.

Here, options (a,b) introduces external randomness that ``smoothes'' the pruning step. In our numerical experiments, the extra randomness in (a,b) improves both the power and  stability of the final selection set upon (c), consistent with the observations in~\cite{jin2023model}. 
The general procedure of OptCS is summarized in Algorithm~\ref{algo:optcs}. 
 
\begin{algorithm}
    \small
    \captionsetup{font=small}
    \caption{Optimized Conformal Selection (General procedure)}
    \label{algo:optcs}
\begin{algorithmic}[1]
    \REQUIRE{Labelled data $\{(X_i, Y_i)\}_{i=1}^n$, test data $\{X_{n+j}\}_{j=1}^m$, thresholds $\{c_{n+j}\}_{j=1}^m$, FDR target $q \in (0,1)$,  score-generating functional $\cV(\cdot)$, auxiliary selection function  $\cR(\cdot)$, pruning method $\texttt{prune}\in \{\texttt{homo}, \texttt{hete}, \texttt{dtm}\}$.} \\[0.5ex]
    \STATE (Optional) Split labeled data into preparatory data $\{(X_i,Y_i)\}_{i=1}^{n_1}$ and calibration data $\{(X_i,Y_i)\}_{i=n_1+1}^{n}$
    \FOR{$j=1, \dots, m$}  
        \STATE Compute conformity scores $\{V_i^{(j)}\}_{i=n_1+1}^n$ and $\{\hat{V}_{n+j}^{(j)}\}_{j=1}^m$ via~\eqref{eq:scores}. \hfill{\textcolor{gray}{\texttt{// Sub-routine 1}}} \\
        \STATE Compute p-value $p_j$ via~\eqref{form:conf_p}.\\
        \STATE Compute auxiliary selection size $\hat{R}_j$ via~\eqref{eq:aux_sel_construct}. \hfill{\textcolor{gray}{\texttt{// Sub-routine 2}}} \\
        \STATE Set $\xi_j \equiv \xi\sim \textrm{Unif}([0,1])$ if $\texttt{prune}=\texttt{homo}$, $\xi_j \iid \textrm{Unif}([0,1])$ if $\texttt{prune}=\texttt{hete}$, $\xi_j\equiv 1$ if $\texttt{prune}=\texttt{dtm}$.
    \ENDFOR 
    \STATE Compute selection set $\cS $ following~\eqref{eq:final_R}.
    \ENSURE{Selection set $\cS$.}
\end{algorithmic}
\end{algorithm}

\subsection{Theoretical guarantee}
\label{subsec:general_theory}

Our main theoretical result establishes finite-sample FDR control of OptCS, provided that the score-generating functional $\cV(\cdot)$ and the auxiliary selection function  $\cR(\cdot)$ obey the appropriate conditions. 
The proof of Theorem~\ref{thm:optcs_fdr_control} is in Appendix~\ref{proof:optcs_fdr_control}.

\begin{theorem}
\label{thm:optcs_fdr_control}
    Suppose for all $j\in [m]$, the following conditions hold: 
    \begin{enumerate}[label=(\arabic*)]
        \item $(Z_1, \dots, Z_n, Z_{n+j})$ is exchangeable given $\{\hat{Z}_{n+\ell}\}_{\ell \neq j}$;
        \item the score-generating functional $\cV$ is monotone and permutation equivariant (Definitions~\ref{def:monotone} and~\ref{def:permu_equiv});
        \item the auxiliary selection functional $\cR(\cdot)$ is permutation invariant under the $j$-th null (Definition~\ref{def:R_mon_invar}).
    \end{enumerate}
   Then, for any nominal FDR level $q \in (0,1)$ and any pruning method, the output $\cS$ of OptCS obeys FDR control~\eqref{eq:fdr} in finite sample, 
    where the expectation is taken over both the calibration and test data. 
\end{theorem}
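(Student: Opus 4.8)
The plan is to follow the conditional-calibration blueprint of \cite{jin2023model,fithian2022conditional}, decomposing the FDR into a sum over test indices and bounding the contribution of each null index $j$ separately. Write $\mathcal{H}_0 = \{j : Y_{n+j}\leq c_{n+j}\}$; then
\@
\fdr = \EE\Bigg[ \sum_{j\in\mathcal{H}_0} \frac{\ind\{j\in\cS\}}{1\vee|\cS|} \Bigg] = \sum_{j\in\mathcal{H}_0} \EE\Bigg[ \frac{\ind\{p_j\leq s_j,\, \xi_j\hat R_j\leq r^*\}}{1\vee|\cS|} \Bigg].
\@
The key structural fact I would establish is that on the event $\{j\in\cS\}$ one has $r^* = r^*_{(j)}$, where $r^*_{(j)}$ is the threshold computed with the $j$-th term deleted from the counting function, and moreover $|\cS| \geq r^*$ so that $\ind\{j\in\cS\}/(1\vee|\cS|)\leq \ind\{p_j\leq s_j,\,\xi_j\hat R_j\leq r^*_{(j)}\}/r^*_{(j)}$; this ``leave-one-out'' step is standard for BH-type arguments and makes the numerator depend on test point $j$ only through $(p_j,\xi_j,\hat R_j)$ while the denominator $r^*_{(j)}$ depends on the other coordinates only.

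Next I would condition on the right $\sigma$-algebra. Using condition (2), permutation equivariance of $\cV$ together with monotonicity for the $j$-th null implies that, after replacing the partial observation $\hat Z_{n+j}$ by the full $Z_{n+j}$ (legitimate since $j$ is null, so $\hat V_{n+j}^{(j)}\geq V(Z_{n+j})$-type score), the vector of scores $(V_{n_1+1}^{(j)},\dots,V_n^{(j)}, \tilde V_{n+j}^{(j)})$ is a symmetric function of the exchangeable tuple $(Z_{n_1+1},\dots,Z_n,Z_{n+j})$ given everything else. Hence, conditional on the unordered set $E_j := [\{Z_i\}_{i=n_1+1}^n \cup \{Z_{n+j}\}]$, the preparatory data, and $\{\hat Z_{n+\ell}\}_{\ell\neq j}$ (and the external randomness), the rank of $\tilde V_{n+j}^{(j)}$ among the $n_2+1$ scores is uniform on $\{1,\dots,n_2+1\}$, which gives the super-uniformity $\PP(p_j\leq t \mid E_j,\dots)\leq t$ with $t = s_j$ being measurable with respect to this conditioning $\sigma$-algebra. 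The crucial point is that condition (3), permutation invariance of $\cR$ under the $j$-th null, guarantees $\hat R_j$ (hence $s_j = q\hat R_j/m$) is $E_j$-measurable, as is $r^*_{(j)}$ (it depends only on the other $p_\ell,\hat R_\ell,\xi_\ell$, which do not change under permutations of $E_j$), so the indicator $\ind\{p_j\leq s_j, \xi_j\hat R_j\leq r^*_{(j)}\}$ can be split using super-uniformity of $p_j$ against the measurable threshold. For the pruning options (a,b,c) I would note that $\xi_j$ is independent external noise, so it only relaxes the second constraint and does not affect the conditional-uniformity argument.

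Carrying out the conditional bound: on the conditioning event, $\EE[\ind\{p_j\leq s_j,\, \xi_j\hat R_j\leq r^*_{(j)}\}/r^*_{(j)} \mid \cdot] \leq \ind\{\xi_j\hat R_j\leq r^*_{(j)},\, r^*_{(j)}\geq 1\}\cdot s_j/r^*_{(j)} \leq q\hat R_j/(m\, r^*_{(j)})$, and on $\{\xi_j\hat R_j\leq r^*_{(j)}\}$ with deterministic pruning $\hat R_j\leq r^*_{(j)}$ so this is at most $q/m$; with randomized pruning one takes expectation over $\xi_j$ first to get the same bound $q/m$ (this is exactly where options (a,b) are handled, via $\EE_{\xi_j}[\ind\{\xi_j\hat R_j\leq r\}]=\min(1,r/\hat R_j)$). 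Summing over $j\in\mathcal{H}_0$ and using $|\mathcal{H}_0|\leq m$ yields $\fdr\leq q$.

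The main obstacle I anticipate is the bookkeeping in the leave-one-out/pruning step — verifying precisely that $r^* = r^*_{(j)}$ on $\{j\in\cS\}$ and that all the relevant quantities ($\hat R_\ell$, $p_\ell$, $r^*_{(j)}$ for $\ell\neq j$) are genuinely invariant under permutations within $E_j$, which is where conditions (2) and (3) must be invoked in exactly the right combination (the ``$(n_2+j)$-th coordinate'' indexing in Definitions~\ref{def:permu_equiv} and~\ref{def:R_mon_invar} has to line up with the score vector layout). A secondary subtlety is handling ties and the randomized tie-breaking implicit in the $p$-value definition, together with the distributional ($\stackrel{d}{=}$) version of permutation equivariance when $\cV$ is itself randomized; I would dispatch this by conditioning additionally on the algorithm's internal randomness so that the deterministic version of the argument applies pathwise.
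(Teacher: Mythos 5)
Your first half—substituting the full observation $Z_{n+j}$ for $\hat{Z}_{n+j}$ on the null, and using permutation equivariance of $\cV$ plus exchangeability to get sub-uniformity of the ($j$-th, oracle) p-value conditional on the unordered set $[\cZ_j]$ and $\{\hat{Z}_{n+\ell}\}_{\ell\neq j}$—is essentially the paper's argument. The gap is in the multiple-testing half. Your plan hinges on treating the leave-one-out threshold $r^*_{(j)}$ as measurable with respect to that conditioning $\sigma$-algebra, on the grounds that the other $p_\ell,\hat{R}_\ell,\xi_\ell$ ``do not change under permutations of $E_j$.'' That is false in general: for $\ell\neq j$, the scores entering $p_\ell$ come from $\cV^{(\ell)}$ and the size $\hat{R}_\ell$ comes from $\cR$, and both take $\hat{Z}_{n+j}=(X_{n+j},c_{n+j})$ as an input. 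Definitions~\ref{def:permu_equiv} and~\ref{def:R_mon_invar} only constrain behavior under permutations involving the $\ell$-th test point; nothing makes $\cV^{(\ell)}$ or $\cR(\cdot)_\ell$ invariant to swapping the $j$-th test point with a calibration point. Under the conditional law given $[\cZ_j]$, the identity of the point occupying the $j$-th test slot—hence $X_{n+j}$, hence $p_\ell$ and $\hat{R}_\ell$ for $\ell\neq j$, hence $r^*_{(j)}$—is random and correlated with $p_j$. This intricate cross-p-value dependence is exactly what the auxiliary sizes $\hat{R}_j$ and the pruning step are introduced to circumvent, so the classical BH leave-one-out argument you invoke is not available. (Even formally, deleting the $j$-th term from the counting function only yields $r^*_{(j)}\geq r^*-1$ on $\{j\in\cS\}$, not $r^*=r^*_{(j)}$; the correct manipulation in the paper is to send $\xi_j$ to $0$, not to delete the term.)

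The paper's route is different at this point: it first proves Lemma~\ref{lem:fdr_decomp}, giving $\fdr\leq \sum_j \EE[\ind\{p_j\leq q\hat{R}_j/m,\,Y_{n+j}\leq c_{n+j}\}/\hat{R}_j]$ under all three pruning options, with the denominator $\hat{R}_j$ rather than any BH-type threshold. That lemma's proof handles the options separately: a step-up ``set $\xi_j=0$'' argument when the $\xi_j$ are independent, a PRDS-plus-summation-by-parts argument (Lemma C.2 of \cite{jin2023model}) when a single shared $\xi$ is used, and the direct bound $\hat{R}_j\leq|\cS|$ on $\{j\in\cS\}$ for deterministic pruning. Your one-line treatment of randomized pruning (``integrate out $\xi_j$ first'') cannot cover the homogeneous option, where the same $\xi$ enters every term and $r^*$ itself. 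After the decomposition, the paper bounds each summand by $q/m$ using the \emph{oracle} size $\bar{R}_j=\cR_0(\dots,Z_{n+j},\dots)_j$, which is what is actually measurable with respect to $[\cZ_j]\cup\{\hat{Z}_{n+\ell}\}_{\ell\neq j}$; your claim that $\hat{R}_j$ itself is $E_j$-measurable also over-reaches, since Definition~\ref{def:R_mon_invar} only gives $\hat{R}_j\geq\bar{R}_j$ on the null (with equality in the concrete instantiations). So while your conditioning step is sound, the FDR-accounting step needs to be replaced by the $\hat{R}_j$-based decomposition and the per-option pruning arguments.
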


We now provide some intuitions and intermediate technical ideas for Theorem~\ref{thm:optcs_fdr_control}. 
At a high level, all ``nuisance'' steps, including model optimization and auxiliary threshold calculation, are conducted in a way such that the conformal p-values remain valid (i.e.~uniformly distributed) conditional on them.  In particular, this is achieved by the permutation equivariance of conformity scores and the permutation invariance of $\hat{R}_j$. All these support the use of OptCS across a wide range of applications, as we will see in the next section.

To begin with, under any pruning option, we obtain the following decomposition of FDR. Lemma~\ref{lem:fdr_decomp} adapts Lemma C.1 of~\cite{jin2023selection}, and we include a proof in Appendix~\ref{app:proof_fdr_decomp} for completeness. 

\begin{lemma}[FDR decomposition]
\label{lem:fdr_decomp}
    Under any of the three pruning options, the output of Algorithm~\ref{algo:optcs} obeys  
    \@\label{eq:fdr_decomp}
\fdr \leq \sum_{j=1}^m \EE\bigg[  \frac{\ind\{p_j \leq q \hat{R}_j/m, ~ Y_{n+j}\leq c_{n+j}\}}{\hat{R}_j}   \bigg].
    \@
\end{lemma}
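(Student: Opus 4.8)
\textbf{Proof plan for Lemma~\ref{lem:fdr_decomp}.}

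The plan is to follow the standard conditional-calibration FDR decomposition argument, adapted to the three pruning options of OptCS. Write the FDR as
\$
\fdr = \EE\bigg[\sum_{j=1}^m \frac{\ind\{j\in\cS,\, Y_{n+j}\leq c_{n+j}\}}{1\vee|\cS|}\bigg] = \sum_{j=1}^m \EE\bigg[\frac{\ind\{j\in\cS,\, Y_{n+j}\leq c_{n+j}\}}{|\cS|}\bigg],
\$
where on the event $\{j\in\cS\}$ we have $|\cS|\geq 1$, so the $1\vee|\cS|$ can be replaced by $|\cS|$. The goal is to show that on $\{j\in\cS\}$, the ratio $1/|\cS|$ is bounded by $1/\hat R_j$ (after accounting for the pruning randomness), leaving only the event $\{p_j\leq q\hat R_j/m\}$ inside the indicator. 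First I would recall from~\eqref{eq:final_R} that $j\in\cS$ forces both $p_j\leq s_j = q\hat R_j/m$ and $\xi_j\hat R_j\leq r^*$, and that $r^* = \max\{r:\sum_{\ell}\ind\{p_\ell\leq s_\ell,\,\xi_\ell\hat R_\ell\leq r\}\geq r\}$. By the definition of $r^*$ as a maximizer, $|\cS| = \sum_\ell \ind\{p_\ell\leq s_\ell,\,\xi_\ell\hat R_\ell\leq r^*\}\geq r^*$.

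The core step is the case analysis over the pruning options. For deterministic pruning ($\xi_j\equiv 1$): on $\{j\in\cS\}$ we have $\hat R_j = \xi_j\hat R_j\leq r^*\leq |\cS|$, hence $1/|\cS|\leq 1/\hat R_j$, and the summand is bounded by $\EE[\ind\{p_j\leq q\hat R_j/m,\,Y_{n+j}\leq c_{n+j}\}/\hat R_j]$ as claimed. For homogeneous pruning ($\xi_j\equiv\xi$): condition on everything except $\xi$; the argument of~\cite{jin2023selection} (their Lemma C.1) shows that $\EE_\xi[\ind\{j\in\cS\}/|\cS|]\leq \EE_\xi[\ind\{j\in\cS\}/\hat R_j]$ — intuitively because the map $r\mapsto$ (number selected at level $r$) together with the uniform scaling makes $|\cS|\geq \xi\hat R_j \cdot(\text{something})$ and the $1/\hat R_j$ bound survives after integrating $\xi$; I would reproduce this computation, which amounts to showing $\EE_\xi[\ind\{\xi\hat R_j\leq r^*\}/|\cS|]\leq 1/\hat R_j$. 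For heterogeneous pruning ($\xi_j\iid\mathrm{Unif}[0,1]$): condition on all $\xi_\ell$ with $\ell\neq j$ and on the $p$-values and $\hat R$'s; the only remaining randomness is $\xi_j$, and one shows $\EE_{\xi_j}[\ind\{\xi_j\hat R_j\leq r^*(\xi_j)\}/|\cS(\xi_j)|]\leq 1/\hat R_j$ by a leave-one-out / monotonicity argument in $\xi_j$ (adding index $j$ can only increase $r^*$, and the threshold event $\{\xi_j\leq r^*/\hat R_j\}$ has the right probability). In all three cases, summing over $j$ and using $\ind\{j\in\cS\}\leq \ind\{p_j\leq q\hat R_j/m\}$ yields~\eqref{eq:fdr_decomp}.

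The main obstacle I anticipate is the homogeneous and heterogeneous pruning cases, where one must carefully handle the interaction between the shared/independent randomness $\xi_j$ and the data-dependent $r^*$: the quantity $r^*$ itself depends on $\{\xi_\ell\}$, so the conditioning must be set up so that $|\cS|$ and $\hat R_j$ can be compared without circularity. The clean way is to fix a realization of all randomness except the relevant $\xi$, define the "would-be" selection count as a function of the free variable, use the defining inequality $|\cS|\geq r^*$, and then invoke the elementary fact that for $U\sim\mathrm{Unif}[0,1]$ and any fixed $a>0$ and any (possibly $U$-dependent) integer-valued $N\geq 1$ with $N\geq aU$ on the relevant event, $\EE[\ind\{aU\leq N\}/N]\leq 1/a$ — this is exactly the lemma from~\cite{jin2023selection} and~\cite{fithian2022conditional} that I would cite or re-derive. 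Everything else is bookkeeping: the replacement of $1\vee|\cS|$ by $|\cS|$ on $\{j\in\cS\}$, the expansion of the FDR as a sum over $j$, and the final domination of the selection indicator by the $p$-value threshold indicator.
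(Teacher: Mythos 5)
Your overall skeleton mirrors the paper's proof: decompose the FDR as a sum over $j$, replace $1\vee|\cS|$ by $|\cS|$ on $\{j\in\cS\}$, use $r^*=|\cS|$, and split into the three pruning cases. Your deterministic case is identical to the paper's, and your heterogeneous case (condition on the data and on $\{\xi_\ell\}_{\ell\neq j}$, leaving only $\xi_j$ random) is essentially the paper's argument, which instead sends $\xi_j\to 0$ and uses the independence of $\xi_j$ from $(\cS_{\xi_j\to 0},p_j,\hat{R}_j)$; both work. The genuine gap is in the homogeneous case. The ``elementary fact'' you invoke --- that for $U\sim\textnormal{Unif}([0,1])$, fixed $a>0$, and \emph{any} (possibly $U$-dependent) integer-valued $N\geq 1$, one has $\EE[\ind\{aU\leq N\}/N]\leq 1/a$ --- is false as stated: take $a=e$ and $N(u)=\max\{1,\lceil eu\rceil\}$, so the indicator is identically $1$ and $\EE[1/N(U)]=1/e+1/(2e)+(1-2/e)/3\approx 0.64>1/e\approx 0.37$. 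The inequality requires additional structure, e.g.\ that $N(u)$ is nonincreasing in $u$, and verifying that structure is precisely the substantive step in the shared-$\xi$ case: one must argue that as $\xi$ grows every constraint $\xi\hat{R}_\ell\leq r$ tightens simultaneously, so $r^*(\xi)$ and hence $|\cS(\xi)|$ are nonincreasing in $\xi$, after which $\EE_\xi[\ind\{\xi\hat{R}_j\leq r^*(\xi)\}/|\cS(\xi)|]\leq 1/\hat{R}_j$ does follow. As written, your homogeneous step asserts the conclusion without this hypothesis, so it would not go through.

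For comparison, the paper does not use a uniform-variable lemma directly for the shared-$\xi$ case: conditioning on the data, it invokes the PRDS property of $\{\hat{R}_1\xi,\dots,\hat{R}_m\xi\}$ in $\hat{R}_j\xi$ (its Lemma~\ref{lemma:b1}, taken from Jin and Cand\`es) and runs a summation-by-parts/telescoping argument over $k=|\cS|$ to extract the $1/\hat{R}_j$ bound (note the paper's ``homogeneous''/``heterogeneous'' case labels appear to be swapped relative to their content, but the arguments themselves are clear). Either route --- monotonicity of $|\cS(\xi)|$ in $\xi$ plus a correctly stated uniform lemma, or PRDS plus telescoping --- closes the argument; the monotonicity/PRDS input is the one ingredient your write-up is missing, and once you add it your proof is complete.
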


It thus suffices to ensure that each expectation in the summation~\eqref{eq:fdr_decomp} is controlled. We achieve so by showing that the p-value is uniformly distributed conditional on $\hat{R}_j$ on the null event. More specifically, let $[\cZ_j] = [Z_{n_1+1},\dots,Z_n,Z_{n+j}]$ be the unordered set of the $(n_2+1)$ full observations, and define $\bar{R}_{j}$ as the $j$-th element of $\cR_0(Z_{1:n_1},Z_{n_1+1:n}, \hat{Z}_{n+1:n+j-1},Z_{n+j},\hat{Z}_{n+j+1:n+m})$.   Under the conditions of Theorem~\ref{thm:optcs_fdr_control}, this is a consequence of the following two facts: 
\vspace{0.5em}
\begin{itemize}
    \item  $\PP(p_j \leq t, Y_{n+j}\leq c_{n+j} \given [\cZ_j] ) \leq t$ for all $t\in [0,1]$ that is measurable with respect to $[\cZ_j]$. 
    \item $Y_{n+j}\leq c_{n+j}$ implies $\hat{R}_j = \bar{R}_j$, and $\bar{R}_j$ is measurable with respect to $[\cZ_j]$. 
\end{itemize}
\vspace{0.5em}

Finally, we remark that the design of $\hat{R}_j$ should aim to approximate the selection size of applying the BH procedure to our $p$-values~\eqref{form:conf_p} while perserving the permutation invariance property. 

\vspace{0.5em}
\begin{remark}[Design principle of $\hat{R}_j$]
\label{rm:Rj_choice}
  Let $\cS_{\bh}$ be the selection set of the BH procedure applied to $\{p_j\}$ in~\eqref{form:conf_p} at the same nominal level $q$. Then, we can show that the output of OptCS obeys $\cS\subseteq \cS_{\bh}$. In addition, if $R_j \equiv |\cS_{\bh}|$ for all $j\in [m]$, then $\cS=\cS_{\bh}$ under all three pruning options. See Appendix~\ref{app:subsec_discuss_R} for a formal result. Thus, an intuitive idea is that $R_j$ should  approximate $\cS_{\bh}$ in a permutation invariant  fashion. We will follow this principle in our concrete instantiations of OptCS in the next section. In some special cases, the output of BH itself obeys Definition~\ref{def:R_mon_invar} hence no pruning is needed.
\end{remark}

\section{Instantiations of OptCS}
\label{sec:procedures}

In this section, we specify the construction of $\cV(\cdot)$ and $\cR(\cdot)$ in Algorithm~\ref{algo:optcs}, which leads to several concrete instantiations of OptCS with various purposes. Each of them represents an approach to constructing or selecting powerful test statistics (scores) while maintaining the theoretical guarantees in Section~\ref{sec:method}. 

In a nutshell, we exploit two ways model optimization can improve power: (i) select the most powerful model among multiple choices, and  (ii) use all labeled data for model fitting by avoiding sample splitting. 
Section~\ref{subsec:modsel} proposes OptCS-MSel which allows (i) when multiple pre-trained models are available. 
Section~\ref{subsec:loo} proposes OptCS-Full which allows (ii) when a training model class is given. 
Finally, Section~\ref{subsec:loo_sel} proposes OptCS-Full-MSel which simultaneously achieves (i) and (ii) with multiple candidate model classes. 
 
\subsection{OptCS-MSel: Model selection over trained models}
\label{subsec:modsel}

The first scenario we consider is when multiple pre-trained models have been obtained in a separate training stage, and a scientist aims to use data at hand to select the best model among them and produce an FDR-controlling selection set. This is suitable, for instance, in many drug discovery tasks where pre-trained models for drug properties are available to use yet too costly to retrain.

Following the notations in Section~\ref{subsec:challenges}, there are $K$ candidate scores $\{V(\cdot,\cdot;k)\colon \cX\times\cY\to \RR\}_{k=1}^K$ independent of the calibration and test points. Without loss of generality, we assume they are obtained with the fold $\{(X_i,Y_i)\}_{i=1}^{n_1}$, independent of the calibration data $\{(X_i,Y_i)\}_{i=n_1+1}^{n}$. 
Also recall $V_i(k) = V(X_i,Y_i;k)$ and $\hat{V}_{n+j}(k) = V(X_{n+j},c_{n+j};k)$ are conformity scores using each model $k\in [K]$.  

As we discussed in Section~\ref{subsec:challenges}, greedily selecting the score function $k\in[K]$ that results in the largest selection set in \basename invalidates FDR control.  
In contrast, OptCS restores validity  with  a slightly modified procedure. 
The key idea is to select a model separately for each $j\in [m]$, each in a ``greedy'' yet permutation-invariant fashion. This selected model will be used to construct $p_j$ and its auxiliary selection size $\hat{R}_j$. In this way, we ensure that the model selection process does not bias the p-values, thus maintaining FDR control.

For each $j\in [m]$, we will construct  a permutation-invariant estimate of the \basename output using the $k$-th model, denoted as $\cS_j(k)$. Then, we greedily select  
$
    \hat{k}_j = \argmax_k \big|\cS_j(k)\big|,
$
and construct the functional $\cV$ via 
\@\label{eq:def_V_msel}
    \cV^{(j)}(Z_1, \dots, Z_n, \hat{Z}_{n+1}, \dots, \hat{Z}_{n+m}) := \Big(V_{n_1+1}(\hat{k}_j),  \dots, V_{n}(\hat{k}_j), \hat{V}_{n+1}(\hat{k}_j), \dots, \hat{V}_{n+m}(\hat{k}_j) \Big)
\@ 
in Line 3 of Algorithm~\ref{algo:optcs}, 
as well as the auxiliary selection sizes  
$\hat{R}_j = \big|  \cS_j(\hat{k}_j) \big|$ in Line 5 of Algorithm~\ref{algo:optcs}. We visualize the comparison between OptCS and the naive approach  in Figure~\ref{fig:idea_optcs_msel}. 

\begin{figure}[htbp]
    \centering
    \includegraphics[width=0.9\linewidth]{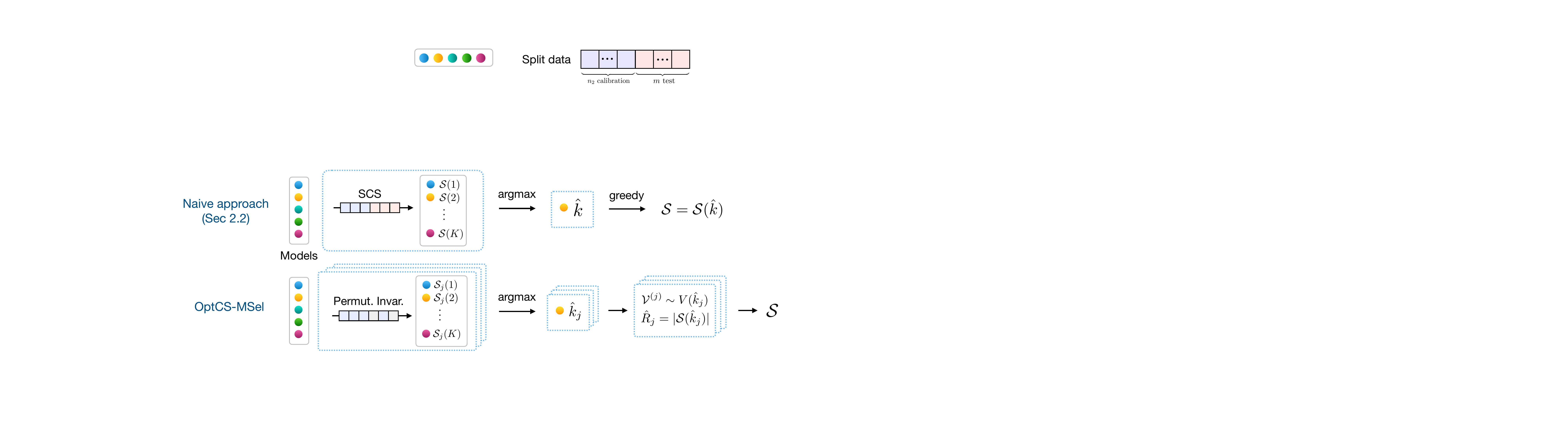}
    \caption{OptCS-MSel modifies the naive approach in Section~\ref{subsec:challenges} by individual model selection for each test point, replacing \basename in the evaluation step with a similar quantity $\mathcal{S}_j(\cdot)$ that is permutation invariant to the calibration data and the $j$-th test point. The selected models are used to calibrate the final selection set.}
    \label{fig:idea_optcs_msel}
\end{figure} 

We now specify  $\hat\cS_j(k)$  for each model index $k \in [K]$. 
It is the output of the BH procedure applied to a set of slightly modified p-values $\{\tilde{p}_\ell^{(j)}(k)\}_{\ell \neq j} \cup \{0\}$ at the nominal level $q$, where  
\@\label{eq:modified_pval_msel}
    \tilde{p}_\ell^{(j)} (k) = \frac{\sum_{i\in \cI_\calib} \ind\{V_i(k) \leq \hat{V}_{n+\ell}(k)\}+ \ind\{\hat{V}_{n+j}(k)\leq \hat{V}_{n+\ell}(k)\}}{n_2+1},\quad \ell\in[m],~\ell\neq j.
\@
Note that $\{\tilde{p}_\ell^{(j)} (k) \}_{\ell\neq j}$ are very close to the conformal p-values   in \basename using the $k$-th score function, except the introduction of $\ind\{\hat{V}_{n+j}(k)\leq \hat{V}_{n+\ell}(k)\}$ to preserve permutation invariance.

The following result, whose proof is in Appendix~\ref{proof:modsel_valid}, ensures that $\cV$ and $\hat{R}_j$ satisfy the conditions required by OptCS, and thus it ensures finite-sample FDR control by Theorem~\ref{thm:optcs_fdr_control}.  

\begin{prop}
\label{lemma:modsel_valid}
    The score-generating functional $\cV$ and auxiliary selection function $\cR$ defined in \textnormal{OptCS-MSel} obey Definitions~\ref{def:monotone},~\ref{def:permu_equiv} and Definition~\ref{def:R_mon_invar}. Therefore, the output of \textnormal{OptCS-MSel} obeys $\fdr\leq q$.
\end{prop}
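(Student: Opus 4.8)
The plan is to verify the three structural hypotheses of Theorem~\ref{thm:optcs_fdr_control} for the $\cV$ and $\cR$ built in OptCS-MSel, treating the $K$ pretrained score functions $\{V(\cdot,\cdot;k)\}_{k=1}^K$ as fixed (equivalently, conditioning on the preparatory fold $Z_{1:n_1}$, which enters every functional only through the grey-out fixed argument). Condition~(1) of the theorem is then immediate from the exchangeability assumption on the data. The entire argument rests on one observation: for each model index $k$ and each $\ell\neq j$, the modified p-value $\tilde p_\ell^{(j)}(k)$ in~\eqref{eq:modified_pval_msel} depends on the calibration points $z_{n_1+1},\dots,z_n$ and on the $j$-th test input \emph{only} through the multiset $M_k:=\{V_i(k)\}_{i\in\cI_\calib}\cup\{\hat V_{n+j}(k)\}$, since its numerator is exactly $\sum_{v\in M_k}\ind\{v\le \hat V_{n+\ell}(k)\}$ and $\hat V_{n+\ell}(k)$ is left untouched. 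Hence $\cS_j(k)$, being a deterministic BH run on $\{\tilde p_\ell^{(j)}(k)\}_{\ell\neq j}\cup\{0\}$, and therefore $\hat k_j=\argmax_k|\cS_j(k)|$ (with any fixed, index-based tie-break), are measurable functions of $(M_1,\dots,M_K)$ and the fixed other test inputs $\{\hat z_{n+\ell}\}_{\ell\neq j}$.

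Granting this, I would check \textbf{permutation equivariance} (Definition~\ref{def:permu_equiv}) as follows: any permutation $\pi$ of the $n_2$ calibration slots together with the $j$-th test slot leaves each $M_k$ invariant, hence leaves $\hat k_j$ invariant; and with $\hat k_j$ fixed, $\cV^{(j)}$ is simply the coordinatewise application of $V(\cdot,\cdot;\hat k_j)$ to the $n_2+m$ inputs (each test input $\hat z_{n+\ell}=(x_{n+\ell},c_{n+\ell})$ being just a point of $\cX\times\cY$), which is equivariant by construction. For \textbf{monotonicity for the $j$-th null} (Definition~\ref{def:monotone}), when $y_{n+j}\le c_{n+j}$ the clipped-score assumption gives $V(x_{n+j},y_{n+j};k)=V(x_{n+j},c_{n+j};k)=\hat V_{n+j}(k)$ for every $k$, so replacing $\hat z_{n+j}$ by $z_{n+j}$ leaves every $M_k$, hence $\hat k_j$, unchanged; it also leaves the calibration outputs and the outputs at test slots $\ell\neq j$ unchanged (none of them uses the $j$-th response), and it changes the $j$-th output from $V(x_{n+j},c_{n+j};\hat k_j)$ to $V(x_{n+j},y_{n+j};\hat k_j)\le V(x_{n+j},c_{n+j};\hat k_j)$ by monotonicity of $V(\cdot,\cdot;\hat k_j)$ in $y$. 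This is precisely what Definition~\ref{def:monotone} asks for.

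For \textbf{permutation invariance of $\cR$} (Definition~\ref{def:R_mon_invar}) I would take $\cR_0$ to be the same rule $\hat R_j=|\cS_j(\hat k_j)|$, but evaluated with the full point $z_{n+j}$ in the $j$-th test slot, i.e.\ with $\hat V_{n+j}(k)$ replaced throughout~\eqref{eq:modified_pval_msel} by $V(x_{n+j},y_{n+j};k)$. By the identical multiset argument (the relevant multiset is now $\{V_i(k)\}_{i\in\cI_\calib}\cup\{V(x_{n+j},y_{n+j};k)\}$), $\cR_0(\cdot)_j$ is invariant under permutations of the $n_2$ calibration slots and the $j$-th test slot, which is~\eqref{eq:permu_invar_second}; and on the null event $y_{n+j}\le c_{n+j}$ the clipped-score identity makes $\cR$ and $\cR_0$ coincide at the $j$-th coordinate, so~\eqref{eq:permu_invar_first} holds with equality. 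With Conditions~(1)--(3) of Theorem~\ref{thm:optcs_fdr_control} all verified, $\fdr\le q$ follows.

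The step I expect to require the most care is the ``multiset'' reduction for $\hat k_j$: one must confirm that none of the objects feeding the argmax — the modified p-values, their BH thresholding, and the tie-break — secretly depends on the \emph{order} of the $n_2+1$ pooled points, or on the $j$-th test response beyond its clipped value. Once that is pinned down, the monotonicity of the $j$-th coordinate, the coordinatewise equivariance, the choice of $\cR_0$, and the appeal to Theorem~\ref{thm:optcs_fdr_control} are all routine bookkeeping.
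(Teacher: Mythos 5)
Your proposal is correct and follows essentially the same route as the paper's proof: the ``multiset'' reduction is exactly the paper's observation that the (oracle) modified p-values can be written as a sum over the unordered set $\{Z_{n_1+1},\dots,Z_n,Z_{n+j}\}$, making $\hat{k}_j$ and $\hat{R}_j$ permutation invariant, while the clipped-score identity $V(x_{n+j},y_{n+j};k)=V(x_{n+j},c_{n+j};k)$ under the null makes $\hat{k}_j=\bar{k}_j$ and $\hat{R}_j=\bar{R}_j$, after which Definitions~\ref{def:monotone},~\ref{def:permu_equiv},~\ref{def:R_mon_invar} and Theorem~\ref{thm:optcs_fdr_control} are verified exactly as in the paper (the only cosmetic difference being your fixed index-based tie-break versus the paper's footnoted independent random tie-break, both of which are fine).
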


\subsection{OptCS-Full: Conformal selection without data splitting}
\label{subsec:loo}

In this part, we introduce OptCS-Full, a variant of OptCS that avoids sample splitting in conformal selection when model training with labeled data is needed. We will set aside the model selection issue and consider a fixed model class for the conformity score. That is, the prediction model will be trained through a given process, and the conformity score wraps around this model in a given fashion. 
The goal is to improve the scores/test statistics by training a more accurate model with more labeled data. A third method that simultaneously conducts model training and model selection will be studied in the next part. 

For clarity, we represent the training process via an algorithm $\cA$ that takes as input a set of labeled data and output a trained model, e.g.,
$
\cA\colon \cup_{N\geq 0}(\cX\times\cY)^N \to \{\textrm{measurable functions}~\hat\mu\colon \cX\to \cY\}.
$
Of course, the output can be more general than a mapping from $\cX$ to $\cY$. For instance, it might consist of a point prediction model and a variance estimator to be used in the score function, or a conditional c.d.f.~function. Here, we use $\hat\mu$ to refer to the trained output for simplicity. 
We require that $\cA$ treats the input data symmetrically:\footnote{If $\cA$ is a randomized algorithm, this only needs to hold in a distributional sense. Thus, any algorithm can be made symmetric by adding a random permutation of training data before feeding them into the algorithm.}
\@\label{eq:training_symmetry}
\cA\big( (x_1,y_1),\dots,(x_N,y_N) \big) = \cA\big( (x_{\pi(1)},y_{\pi(1)}),\dots,(x_{\pi(N)},y_{\pi(N)}) \big)
\@
for any permutation $\pi\colon [N]\to [N]$.
The score function wraps around a fitted model $\hat\mu \colon \cX\to \cY$ in a given way. To emphasize this point, we write the conformity score function as 
\@\label{eq:V_mu_hat}
V(\cdot,\cdot\given \hat\mu) \colon \cX\times\cY\to \RR,
\@
such that the role of $\hat\mu$ in the score function is fixed. For the ease of presentation, in this subsection, we restrict our attention to binary classification problems with $\cY=\{0,1\}$ and $c=0$. 
The original problem can always be reduced to this setting by a transformation $\tilde{Y}=\ind\{Y\leq c\}$.

Recall the preparatory data $\{Z_i\}_{i=1}^{n_1}$, the calibration data $\{Z_i\}_{i=n_1+1}^n$ and the test data $\{\hat{Z}_{n+j}\}_{j=1}^m$ where $\hat{Z}_{n+j}=(X_{n+j},0)$. Our goal is to involve all the labeled data $\{Z_i\}_{i=1}^n$ to train an accurate prediction model while still producing a FDR-controlling selection set $\cS\subseteq [m]$. 
Similar to Full Conformal Prediction~\citep[FCP]{vovk2005algorithmic}, the idea of OptCS here is to train the models in a way that is permutation invariant to the calibration data and the $j$-th test point. Compared with FCP, we only need to plug in the null value $Y_{n+j}=0$, instead of every hypothesized value $y\in \cY$, which makes the computation easier.

Concretely, for each $\ell\in\{n_1+1,\dots,m+n\}$, we train a model $\hat\mu_\ell$ using the following as training data: 
\@\label{eq:loo_train_data}
\mathbf{D}_{-\ell}  :\{Z_i\}_{i=1}^{n_1} \cup \{Z_i\}_{i=n_1+1}^n \cup \{\hat{Z}_{n+j}\}_{j=1}^m \setminus \tilde{Z}_{\ell},
\@
where we denote $\tilde{Z}_\ell=Z_\ell$ when $\ell \leq n$ and $\tilde{Z}_\ell =\hat{Z}_\ell$  otherwise.  
Then, we define the score-generating functional $\cV$ via $\cV^{(j)} \equiv  (V_{n_1+1},\dots,V_{n},\hat{V}_{n+1},\dots,\hat{V}_{n+m})$, where 
\@\label{eq:def_full_score}
V_i = V(Z_{i } \given \hat\mu_{i }) \text{ for } i \leq n, \quad \text{and} \quad
\hat{V}_{n+j} = V(\hat{Z}_{n+j} \given \hat\mu_{n+j}) \text{ for } j \in [m].
\@
In other words, for each calibration and test data, we train a prediction model on all data except that specific point, and use the resulting model to compute the conformity score.  
Note that OptCS-Full needs to train $(n_2+m)$ many models. While we set $n_1=0$ (no split at all) in our experiments, one may increase $n_1$ to reduce computation complexity at a price of coarsened p-values (the training process still uses all the $n$ labeled data, yet only $n_2$ samples are used in computing p-values). 

Finally, it turns out that applying the BH procedure to the p-values~\eqref{form:conf_p} with these scores suffices for FDR control without additional pruning.  
The following proposition formalizes this idea and guarantees the validity of OptCS-Full, whose proof is in Appendix~\ref{proof:loo_full_valid}.

\begin{prop}
\label{lemma:loo_full_valid}
    Suppose the training process of $\hat\mu_\ell$ obeys~\eqref{eq:training_symmetry}. Let $\cS$ be the output of Algorithm~\ref{algo:optcs}, and  $\cS_\bh$ be the output of the \textnormal{BH} procedure applied to the $p$-values~\eqref{form:conf_p} with scores~\eqref{eq:def_full_score}. Then the followings hold: 
    \begin{enumerate}[label=(\roman*)]
        \item The score-generating functional $\cV$ in \textnormal{OptCS-Full} specified by~\eqref{eq:def_full_score} obeys Definitions~\ref{def:monotone} and~\ref{def:permu_equiv}. 
        \item Setting $\hat{R}_j \equiv |\cS_{\bh}|$ in Algorithm~\ref{algo:optcs} yields  $\cS = \cS_{\bh}$, and  $\hat{R}_j$ obeys a relaxed version of Definition~\ref{def:R_mon_invar} that still leads to finite-sample FDR control. 
    \end{enumerate}   
\end{prop}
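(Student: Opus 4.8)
\textbf{Proof proposal for Proposition~\ref{lemma:loo_full_valid}.}

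\emph{Part (i): Monotonicity and permutation equivariance of $\cV$.} The key observation is that the leave-one-out training data $\mathbf{D}_{-\ell}$ in~\eqref{eq:loo_train_data} is an \emph{unordered} set that, for the calibration index $\ell$ and the $j$-th test index, depends only on the unordered collection $[\cZ_j] = [Z_{n_1+1},\dots,Z_n,Z_{n+j}]$ together with the fixed preparatory data and the other (unchanged) test points. Concretely, I would proceed as follows. For permutation equivariance (Definition~\ref{def:permu_equiv}), fix a permutation $\pi$ of $\{1,\dots,n_2,n_2+j\}$; I need to check that relabeling the corresponding data points permutes the output scores accordingly without changing their values. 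Since $\mathbf{D}_{-\ell}$ omits exactly the $\ell$-th point, under $\pi$ the model $\hat\mu_\ell$ becomes $\hat\mu_{\pi(\ell)}$ as a function (using the training symmetry~\eqref{eq:training_symmetry}, so the \emph{set} $\mathbf{D}_{-\pi(\ell)}$ gives the same fitted $\hat\mu$), and the score $V(Z_\ell\given \hat\mu_\ell)$ moves to position $\pi(\ell)$ with the same numerical value. The crucial point is that for $\ell\neq j$ the test point enters $\mathbf{D}_{-\ell}$ only through its partial observation $\hat{Z}_{n+j}=(X_{n+j},0)$, so swapping $Z_{n+j}$ for $\hat{Z}_{n+j}$ does not alter any calibration score. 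For monotonicity for the $j$-th null (Definition~\ref{def:monotone}): when $Y_{n+j}\le c_{n+j}=0$, i.e. $Y_{n+j}=0$ in the binary reduction, we have $Z_{n+j}=\hat{Z}_{n+j}$ exactly, so the two sides of both displayed equations/inequalities in Definition~\ref{def:monotone} coincide trivially (the monotonicity of $V(\cdot,\cdot\given\hat\mu)$ in $y$ gives the inequality, and equality for $\ell\neq j$ follows since $\mathbf{D}_{-\ell}$ sees only $\hat{Z}_{n+j}$). I would write this out carefully, invoking the ``clipped''-type score assumption so that $V(X_{n+j},Y_{n+j})=V(X_{n+j},c_{n+j})$ when $Y_{n+j}\le c_{n+j}$, which is what makes the test score $\hat{V}_{n+j}^{(j)}$ equal the (unobservable) oracle score under the null.

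\emph{Part (ii): $\hat{R}_j\equiv|\cS_\bh|$ gives $\cS=\cS_\bh$ and a relaxed invariance.} First I would establish $\cS=\cS_\bh$ under this choice of $\hat{R}_j$: this is exactly the claim in Remark~\ref{rm:Rj_choice} (the formal version in Appendix~\ref{app:subsec_discuss_R}), so I would simply cite it --- when $\hat{R}_j$ is the constant $|\cS_\bh|$ for all $j$, the thresholds $s_j=q|\cS_\bh|/m$ are the usual BH thresholds, the pruning step~\eqref{eq:final_R} has $r^*=|\cS_\bh|$, and all three pruning options return $\cS_\bh$. The substantive part is to show that $\hat{R}_j\equiv|\cS_\bh|$, while \emph{not} literally permutation invariant under the $j$-th null in the sense of Definition~\ref{def:R_mon_invar} (because $|\cS_\bh|$ depends on all $m$ test points jointly, not just on $[\cZ_j]$), nonetheless satisfies a relaxation sufficient for the FDR decomposition in Lemma~\ref{lem:fdr_decomp} to go through. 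Tracing the proof sketch given after Theorem~\ref{thm:optcs_fdr_control}, what is actually needed is that on the null event $\{Y_{n+j}\le c_{n+j}\}$ the quantity $\hat{R}_j$ equals some $\bar{R}_j$ that is measurable with respect to $[\cZ_j]$ and the other fixed test points, \emph{and} that $p_j$ is conditionally super-uniform given that same sigma-field. The relaxation I would formulate: $\hat{R}_j$ need only be invariant to permutations of $\{Z_{n_1+1},\dots,Z_n,Z_{n+j}\}$ that also appropriately act on the score-generating and auxiliary constructions --- and here $|\cS_\bh|$ \emph{is} invariant to permutations among the calibration data and the $j$-th test point, since $\cS_\bh$ is computed from the $p$-values~\eqref{form:conf_p}, each of which (by Part (i)) is a symmetric function of $[\cZ_j]$ on the null. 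So the plan is: (a) show $p_j$ is super-uniform given $\sigma([\cZ_j],\hat{Z}_{n+\ell}:\ell\ne j)$ using Part (i) and exchangeability (condition (1) of Theorem~\ref{thm:optcs_fdr_control}); (b) show $|\cS_\bh|$ is measurable with respect to this same sigma-field on the null event, because every other $p_\ell$ ($\ell\ne j$) depends on the $j$-th point only through $\hat{Z}_{n+j}$ (leave-one-out structure again) and $p_j$ itself is $[\cZ_j]$-measurable on the null; (c) conclude $\EE[\ind\{p_j\le q\hat{R}_j/m,\,Y_{n+j}\le c_{n+j}\}/\hat{R}_j]\le q/m$ by the standard conditional argument, and sum over $j$ via Lemma~\ref{lem:fdr_decomp}.

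\emph{Main obstacle.} The routine parts --- verifying Definitions~\ref{def:monotone} and~\ref{def:permu_equiv} for the leave-one-out scores --- are conceptually straightforward given the symmetry of $\cA$ and the clipped-score assumption; the delicate point is isolating \emph{exactly} which weakening of Definition~\ref{def:R_mon_invar} is both (I) satisfied by $\hat{R}_j\equiv|\cS_\bh|$ and (II) still enough to drive Lemma~\ref{lem:fdr_decomp}'s per-$j$ bound. The tension is that $|\cS_\bh|$ is a \emph{global} quantity depending on all test points, so it is not of the form $\cR(\cdot)_j$ required by Definition~\ref{def:R_mon_invar}; I need to argue that measurability with respect to $\sigma([\cZ_j],\hat{Z}_{n+\ell}:\ell\ne j)$ \emph{on the null event} is the real requirement, and that the leave-one-out construction guarantees this because removing $\tilde{Z}_\ell$ from the training set for every $\ell$ means no $p_\ell$ can ``see'' the ordering within $[\cZ_j]$ nor the label $Y_{n+j}$ (only $c_{n+j}$) --- hence $|\cS_\bh|$ is a function of the order statistics alone. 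I expect the bulk of the writing effort, and the place where the appendix proof must be most careful, to be this measurability-on-the-null argument and its interface with the conditional super-uniformity of $p_j$; everything downstream is the same conditional-expectation calculation already used for Theorem~\ref{thm:optcs_fdr_control}.
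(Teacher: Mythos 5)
Your Part (i) is essentially the paper's argument (symmetry of $\cA$ gives Definition~\ref{def:permu_equiv}; the binary reduction $Y_{n+j}\le c_{n+j}=0\Rightarrow Z_{n+j}=\hat Z_{n+j}$ plus monotonicity of the clipped score gives Definition~\ref{def:monotone}), and your identification of $\cS=\cS_\bh$ via Remark~\ref{rm:Rj_choice}/Appendix~\ref{app:subsec_discuss_R} is also what the paper does. The gap is in your step (b) of Part (ii). You claim that on the null each $p_\ell$ is a symmetric function of $[\cZ_j]$ and hence $|\cS_\bh|$ is measurable with respect to $\sigma([\cZ_j],\{\hat Z_{n+\ell}\}_{\ell\neq j})$. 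That is false: $p_j$ is (up to normalization) the rank of the test score among the scores of the points in $[\cZ_j]$, so it depends on which element of $[\cZ_j]$ carries the test label --- indeed the whole point of the conditional argument is that $p_j^*$ is \emph{uniform}, not constant, given $[\cZ_j]$. Likewise each $p_\ell$ with $\ell\neq j$ sums only over the $n_2$ calibration indices and excludes the test score $\hat V_{n+j}$, so it too changes under permutations of $\{Z_{n_1+1},\dots,Z_n,Z_{n+j}\}$. Consequently $|\cS_\bh|$ is not ``a function of the order statistics alone,'' and your plan (a)--(c) does not close.

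What the paper does instead, and what your proposal is missing, is a per-$j$ \emph{symmetrized} surrogate: define $\bar R_j$ as the size of the BH output applied to $\{\bar p_\ell^{(j)}\}_{\ell\neq j}\cup\{0\}$, where $\bar p_\ell^{(j)}$ adds the extra term $\ind\{V_{n+j}\le \hat V_{n+\ell}\}$ to the calibration sum so that it ranges symmetrically over all $n_2+1$ points of $[\cZ_j]$; this $\bar R_j$ \emph{does} satisfy the exact invariance~\eqref{eq:permu_invar_second} (using again the symmetry of $\cA$ for the leave-one-out fits). The relaxation is then of condition~\eqref{eq:permu_invar_first} only: one needs $\bar R_j$ to coincide with (or bound) $\hat R_j=|\cS_\bh|$ merely on the event $\{p_j\le q\hat R_j/m,\;Y_{n+j}\le c_{n+j}\}$, which is all that~\eqref{eq:permu_invar_relax} in the proof of Theorem~\ref{thm:optcs_fdr_control} uses. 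Establishing that coincidence is the real work: since $j\in\cS_\bh$ on that event, one compares BH run on $(p_1,\dots,p_m)$ with BH run on $(p_1^{(j)},\dots,0,\dots,p_m^{(j)})$ and argues case by case --- if $\hat V_{n+j}\le \hat V_{n+\ell}$ then $p_\ell^{(j)}=p_\ell\ge p_j$ is unchanged, and if $\hat V_{n+j}>\hat V_{n+\ell}$ then $p_\ell^{(j)}\le p_j$ stays below the rejected $p_j$ --- so the rejection set size is preserved and $\bar R_j=|\cS_\bh|=\hat R_j$ there. Your proposal contains neither the construction of these auxiliary p-values nor this BH comparison, and without them the measurability-on-the-null claim you rely on is simply not available.
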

 
A caveat of training with the augmented data~\eqref{eq:loo_train_data}  is that the signal for predicting $Y$ could be diluted by the ``null'' samples $\{\hat{Z}_{n+j}\}_{j=1}^m$, where the labels are imputed as zero.  A remedy we use in our experiments is oversampling to balance the classes in the training process. We discuss another approach in  Appendix~\ref{app:subsec_discuss_loo}, which excludes other test points in training the $k$-th model but increases computation complexity. 

We summarize OptCS-Full in Algorithm~\ref{algo:optcs-full} for clarity as it omits $\hat{R}_j$ and pruning in the general procedure.

\begin{algorithm}
    \small
    \captionsetup{font=small}
    \caption{OptCS-Full}
    \label{algo:optcs-full}
\begin{algorithmic}[1]
    \REQUIRE{Labelled data $\{(X_i, Y_i)\}_{i=1}^n$, test data $\{X_{n+j}\}_{j=1}^m$, thresholds $\{c_{n+j}\}_{j=1}^m$, FDR target $q \in (0,1)$, symmetric training algorithm $\cA$.} \\[0.5ex]
    \STATE (Optional) Split labeled data into preparatory data $\{(X_i,Y_i)\}_{i=1}^{n_1}$ and calibration data $\{(X_i,Y_i)\}_{i=n_1+1}^{n}$
    \FOR{$\ell=n_1=1,\dots,n+m$}  
        \STATE Train model $\hat\mu_\ell = \cA(\{Z_i\}_{i=1}^n \cup \{\hat{Z}_{n+j}\}_{j=1}^m \backslash \tilde{Z}_\ell)$. \hfill{\textcolor{gray}{\texttt{// leave-one-out training}}}
        \STATE Compute $V_i=V(X_i,Y_i\given \hat\mu_i)$ for $n_1<i\leq n$ and $\hat{V}_{n+j}=V(X_{n+j},c_{n+j}\given \hat\mu_{n+j})$ for $j\in[m]$. 
    \ENDFOR  
    \FOR{$j=1,\dots,m$}  
        \STATE Compute p-value $p_j$ via~\eqref{form:conf_p} with $V_i^{(j)}\equiv V_i$ for $i=n_1+1,\dots,n$, and $\hat{V}_{n+\ell}^{(j)}\equiv \hat{V}_{n+\ell}$ for $\ell\in [m]$.
    \ENDFOR   
    \STATE Compute selection set $\cS $ by applying the BH procedure to $\{p_j\}$ at level $q$. \hfill{\textcolor{gray}{\texttt{// no pruning}}}
    \ENSURE{Selection set $\cS$.}
\end{algorithmic}
\end{algorithm}

\subsection{OptCS-Full-MSel: Model training and selection without data splitting}
\label{subsec:loo_sel}

In this part, we combine the preceding two ideas to propose OptCS-Full-Msel, a variant of Algorithm~\ref{algo:optcs} that allows model training, selection, and FDR-controlling conformal selection with all labeled data. It aims to construct the most powerful test statistic by both leveraging all   labeled data and selecting the best model.

Since model training is involved, we follow Section~\ref{subsec:loo} and assume there are $K$ training processes $\{\cA_k\}$ obeying the symmetry condition~\eqref{eq:training_symmetry} which lead to $K$ candidate score functions given a set of training data. We encapsulate the functional dependence of each conformity score function on fitted models by writing 
\$
V(\cdot,\cdot\given \hat\mu_k,k) \colon \cX\times\cY\to \RR,
\$
where $k\in[K]$ is the model index, and  $\hat\mu_k$ is  any trained model the $k$-th conformity score function relies on. With slight abuse of notation, here $\hat\mu_k$ may also represent general models other than a point predictor (e.g., quantile/mean regression, conditional c.d.f.~models). Figure~\ref{fig:idea_optcs-full-msel} visualizes the pipeline of OptCS-Full-MSel.

\begin{figure} 
    \centering
    \includegraphics[width=\linewidth]{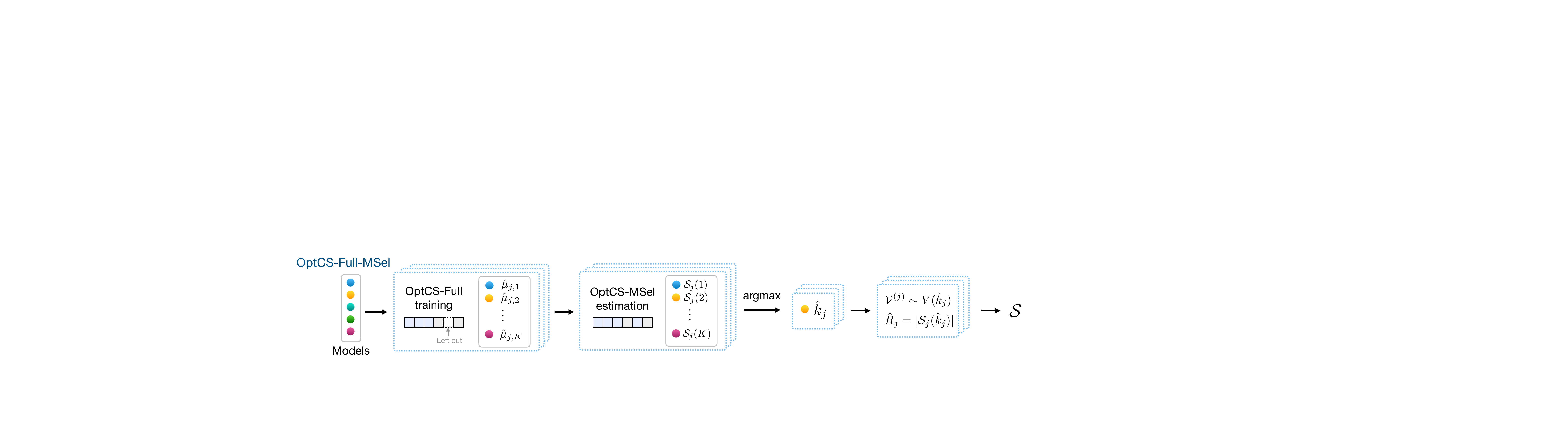}
    \caption{OptCS-Full-MSel uses all labeled data for model training (with ideas similar to OptCS-Full),  model selection (with ideas similar to OptCS-MSel), and final multiple testing with selected models.}
    \label{fig:idea_optcs-full-msel}
\end{figure}

The score-generating function $\cV$ in OptCS-Full-MSel is specified by 
\@\label{eq:def_V_full_msel}
\cV^{(j)} = \big( V_{n_1+1}(\hat{k}_j),\dots,V_{n}(\hat{k}_j), \hat{V}_{n+1}(\hat{k}_j),\dots,\hat{V}_{n+m}(\hat{k}_j)\big),
\@
where $\hat{k}_j \in [K]$ is a selected model constructed similar to OptCS-MSel, and $\{V_{i}(k)\}_{k=1}^K$ are the scores with the $k$-th model, obtained from a leave-one-out training process similar to OptCS-Full. 

Specifically, we recall the leave-one-out training set $\mathbf{D}_{-\ell}$ in~\eqref{eq:loo_train_data} for all $\ell=n_1+1,\dots,n+m$, and  define 
\@\label{eq:loo_msel_scores}
&V_i(k) = V(X_i,Y_i \given \hat\mu_{i,k}, k),\quad \hat{V}_{n+j}(k) = V(X_{n+j},0\given \hat\mu_{n+j,k},k), \quad \textrm{where} \quad \hat\mu_{\ell,k} = \cA_k\big( \mathbf{D}_{-\ell}  \big).
\@
We still distinguish the first $n_1$ preparatory data and the next $n_2$ calibration data: while all of them are used in training the model, only the calibration data are involved in the  leave-one-out set and the construction of p-values. Similar to OptCS-Full, this preserves the possibility of $n_1>0$ to reduce the computation efforts. 

Next, we use an idea similar to OptCS-MSel to estimate the selection performance of the $k$-th model and select the optimal indices $\{\hat{k}_j\}$.  
With scores computed as~\eqref{eq:loo_msel_scores}, 
for each $j\in [m]$, we set $\cS_j(k)$ as the output of the BH procedure applied to $\{\tilde{p}_\ell^{(j)}(k)\}_{\ell \neq j}\cup \{0\}$, where we define the auxiliary p-values 
\$
\tilde{p}_\ell^{(j)}(k) =  \frac{\ind\{\hat{V}_{n+j}(k)\leq \hat{V}_{n+\ell}(k)\} +\sum_{i=n_1+1}^n \ind\{V_i(k)\leq \hat{V}_{n+\ell}(k)\}}{n_2+1}, \quad \ell \neq j.
\$ 

Finally, with $\{\hat{k}_j\}$ and $\{\cS_j(k)\}$ in hand, we complete the definition of score generating function $\cV$ in~\eqref{eq:def_V_full_msel} (Line 3 of Algorithm~\ref{algo:optcs}) and auxiliary selection function $\cR$ (Line 5 of Algorithm~\ref{algo:optcs}) for OptCS-Full-MSel via 
\@\label{eq:def_R_full_msel}
 \hat{k}_j =  \argmax_{k\in[K]} \big| \cS_j(k) \big|,\quad \hat{R}_j = \big|\cS_j(\hat{k}_j)\big|. 
\@
The next proposition guarantees the validity of OptCS-Full-MSel, whose proof is in Appendix~\ref{proof:validity_loosel}.

\begin{prop}
\label{prop:validity_loosel}
Suppose $\cA_k$ obeys the symmetry condition~\eqref{eq:training_symmetry} for all $k\in[K]$. Then, the score-generating functional $\cV$ in \textnormal{OptCS-Full-MSel} specified by~\eqref{eq:def_V_full_msel} obeys Definitions~\ref{def:monotone} and~\ref{def:permu_equiv}, and the auxiliary selection function $\cR$ specified by~\eqref{eq:def_R_full_msel} obeys Definition~\ref{def:R_mon_invar}. Therefore, the output of \textnormal{OptCS-Full-MSel} obeys $\fdr\leq q$. 
\end{prop}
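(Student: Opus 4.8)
The plan is to verify the three structural conditions---Definitions~\ref{def:monotone},~\ref{def:permu_equiv}, and~\ref{def:R_mon_invar}---for the functionals $\cV$ and $\cR$ defined by~\eqref{eq:def_V_full_msel},~\eqref{eq:loo_msel_scores}, and~\eqref{eq:def_R_full_msel}, and then invoke Theorem~\ref{thm:optcs_fdr_control} directly. The strategy is to combine the arguments already used in Propositions~\ref{lemma:modsel_valid} and~\ref{lemma:loo_full_valid}: the leave-one-out training machinery handles permutation equivariance of the raw scores $\{V_i(k), \hat V_{n+j}(k)\}$ exactly as in OptCS-Full, while the per-test-point model selection $\hat k_j$ and the auxiliary size $\hat R_j = |\cS_j(\hat k_j)|$ are handled by the OptCS-MSel argument, now applied on top of the leave-one-out scores rather than the split scores.

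First I would establish permutation equivariance (Definition~\ref{def:permu_equiv}) of $\cV$. Fix $j$ and a permutation $\pi$ of $\{1,\dots,n_2,n_2+j\}$ acting on the calibration inputs $Z_{n_1+1:n}$ and the $j$-th test input $\hat Z_{n+j}$. The key observation is that for each $\ell$ in this index set, the leave-one-out training set $\mathbf{D}_{-\ell}$ in~\eqref{eq:loo_train_data} is an \emph{unordered} collection, so permuting the inputs only relabels which element is the held-out one; since each $\cA_k$ is symmetric~\eqref{eq:training_symmetry}, the fitted models $\hat\mu_{\ell,k}$ (and hence the raw scores $V_i(k)$, $\hat V_{n+j}(k)$) are carried along with the permutation without changing values. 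For indices $\ell\notin\{1,\dots,n_2,n_2+j\}$, i.e.\ the preparatory data and the other test points, $\mathbf{D}_{-\ell}$ is unchanged as a set, so those scores are literally unchanged. Next, the auxiliary p-values $\tilde p_\ell^{(j)}(k)$ depend only on $\hat V_{n+j}(k)$, $\hat V_{n+\ell}(k)$, and the multiset $\{V_i(k)\}_{i=n_1+1}^n$ (a multiset, hence permutation-insensitive), so $\cS_j(k)$, $\hat k_j$, and $\hat R_j$ depend on the permuted data only through that same information and are therefore permutation-invariant; this simultaneously delivers the invariance half of Definition~\ref{def:R_mon_invar} (equation~\eqref{eq:permu_invar_second} with $\cR_0$ defined by feeding the true $Z_{n+j}$ through the same formula). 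Finally, since $\hat k_j$ is itself invariant, the output row $\cV^{(j)}$ in~\eqref{eq:def_V_full_msel} is a permutation of the (equivariantly permuted) scores with the selected index $\hat k_j$, which is exactly the equivariance statement.

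Then I would verify monotonicity for the null (Definition~\ref{def:monotone}). In the binary reduction ($\cY=\{0,1\}$, $c=0$), the null event $Y_{n+j}\le c_{n+j}$ means $Y_{n+j}=0$, so the full observation $Z_{n+j}=(X_{n+j},0)$ coincides with the partial observation $\hat Z_{n+j}=(X_{n+j},0)$. Hence replacing $\hat Z_{n+j}$ by $Z_{n+j}$ in $\cV^{(j)}$ changes nothing at all---every $\mathbf{D}_{-\ell}$, every fitted model, every raw score, and $\hat k_j$ are identical---so the required inequality at coordinate $n_2+j$ holds with equality, and the equality at coordinates $\ell\ne j$ is immediate. (The same remark shows $\hat R_j=\bar R_j$ on the null, which is the other half of Definition~\ref{def:R_mon_invar}, equation~\eqref{eq:permu_invar_first}, holding with equality.) Note this is where the ``clipped-score'' / binary-reduction setup does the real work, mirroring the corresponding step in Proposition~\ref{lemma:loo_full_valid}.

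With all three definitions checked and condition (1) of Theorem~\ref{thm:optcs_fdr_control} being the standing exchangeability assumption, the conclusion $\fdr\le q$ follows from Theorem~\ref{thm:optcs_fdr_control}. I expect the main obstacle to be bookkeeping rather than conceptual: carefully tracking that the leave-one-out construction interacts correctly with the index set $\{1,\dots,n_2,n_2+j\}$---in particular that held-out test points other than $j$ are never permuted and that the preparatory block is inert---and confirming that the auxiliary p-value formula for $\tilde p_\ell^{(j)}(k)$ genuinely depends on the calibration scores only as a multiset so that nothing about the model-selection step breaks equivariance. Once that is pinned down, the proof is essentially the concatenation of the two preceding propositions' arguments.
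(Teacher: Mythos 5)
Your proposal is correct and follows essentially the same route as the paper's proof: symmetry of the training algorithms makes the leave-one-out fitted models (hence the raw scores, auxiliary p-values, $\hat{k}_j$, and $\hat{R}_j$) invariant to permutations of the calibration data and the $j$-th test point, the binary reduction $Y_{n+j}=c_{n+j}=0$ on the null makes $\hat{Z}_{n+j}=Z_{n+j}$ so that both the monotonicity condition and $\hat{R}_j=\bar{R}_j$ hold with equality, and FDR control then follows from Theorem~\ref{thm:optcs_fdr_control}. The only cosmetic difference is that the paper spells out the oracle functional $\cR_0$ via explicitly defined oracle scores $V_i^*(k),\hat{V}_{n+\ell}^*(k)$ before checking the two conditions of Definition~\ref{def:R_mon_invar}, whereas you fold this into the same observation; the substance is identical.
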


\section{Simulation studies} \label{sec:simu}

\subsection{Conformity score selection with pre-trained models}
\label{subsec:simu_score_sel}
 
We first evaluate the performance of OptCS-MSel in the task of selecting conformity scores while producing FDR-controlled selection sets. 
We treat the set of candidate scores as given before ``seeing'' the calibration and test data, i.e., the latter two are not involved in the training process.  
All of the competing methods are:
\begin{enumerate}[label=(\roman*).]
    \item \texttt{Greedy}: 
    Select the candidate conformity score function which leads to the largest selection set in SCS. 
    \item \texttt{OptCS-MSel\_homo} and \texttt{OptCS-MSel\_hete}: Our method with homogeneous and heterogeneous pruning.
    \item \texttt{Base\_random}: Randomly pick a conformity score and use it in \basename.
    \item \texttt{Base\_cal\_split}: Randomly split the calibration set into three folds: $\cD_{\text{calib\_sel}}$ (25\%), $\cD_{\text{test\_sel}}$ (25\%), and $\cD_{\text{calib}}'$ (50\%). We select the score which leads to the largest selection set in \basename with $\cD_{\text{calib\_sel}}$ as the calibration set and $\cD_{\text{test\_sel}}$ as the test set. We then run \basename with calibration set $\cD_{\text{calib}}'$. 
    \item \texttt{Base\_tr\_split}: Similar to (iv),  but we split $\cD_\train$ into $\cD_{\text{calib\_sel}}$ (25\%), $\cD_{\text{test\_sel}}$ (25\%), and $\cD_{\text{train}}'$ (50\%). After training models on  $\cD_{\text{train}}'$, we use $\cD_{\text{calib\_sel}}$ (25\%), $\cD_{\text{test\_sel}}$ (25\%) to select a score. 
\end{enumerate}

We exclude OptCS with deterministic pruning in result reporting since its power is often lower than the other two pruning options, consistent with~\cite{jin2023model}.  

\vspace{-0.5em}
\paragraph{Simulation settings.} We construct 8 data generating processes, four adapted from \cite{liang2024conformal}, referred to as Liang's settings, and four adapted from \cite{jin2023selection}, referred to as Jin's settings. In all settings, the goal is to identify individuals whose unobserved responses exceed  $c_{n+j}\equiv 0$. 
We design 11 model choices for Liang's settings and 24 model choices for Jin's settings. 
In Liang's settings, the model quality mostly depends on whether it includes certain features in regression modeling, hence the quality gap between candidate models is large. 
In Jin's settings, the model qualities are closer to each other. 
The detailed data generating processes and model choices are summarized in Appendix~\ref{subsec:simulation_setup}. 
In the experiments, we fix the sample sizes at $n_1=n_2 = m = 100$ and vary the nominal FDR level $q \in \{0.2,0.25,\dots,0.45, 0.5\}$.

\begin{figure}[htbp]
    \centering
    \captionsetup{font=small}
    \includegraphics[width=0.85\linewidth]{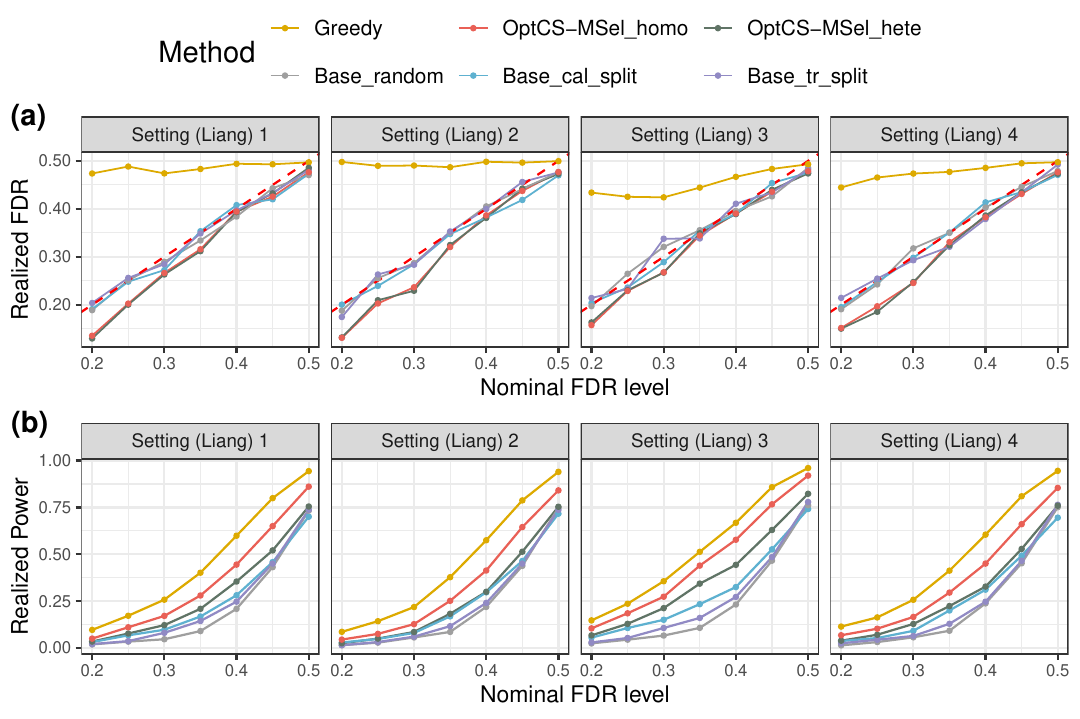}
    \caption{Realized FDR (a) and power (b) for varying nominal FDR levels under Liang's settings for conformity score selection. Each row corresponds to one data generating process. In these settings, there are significant differences between the selection power with individual conformity scores.}
    \label{fig:simu_score_sel_Liang}
\end{figure}

\begin{figure}[htbp]
    \centering
    \captionsetup{font=small}
    \includegraphics[width=0.85\linewidth]{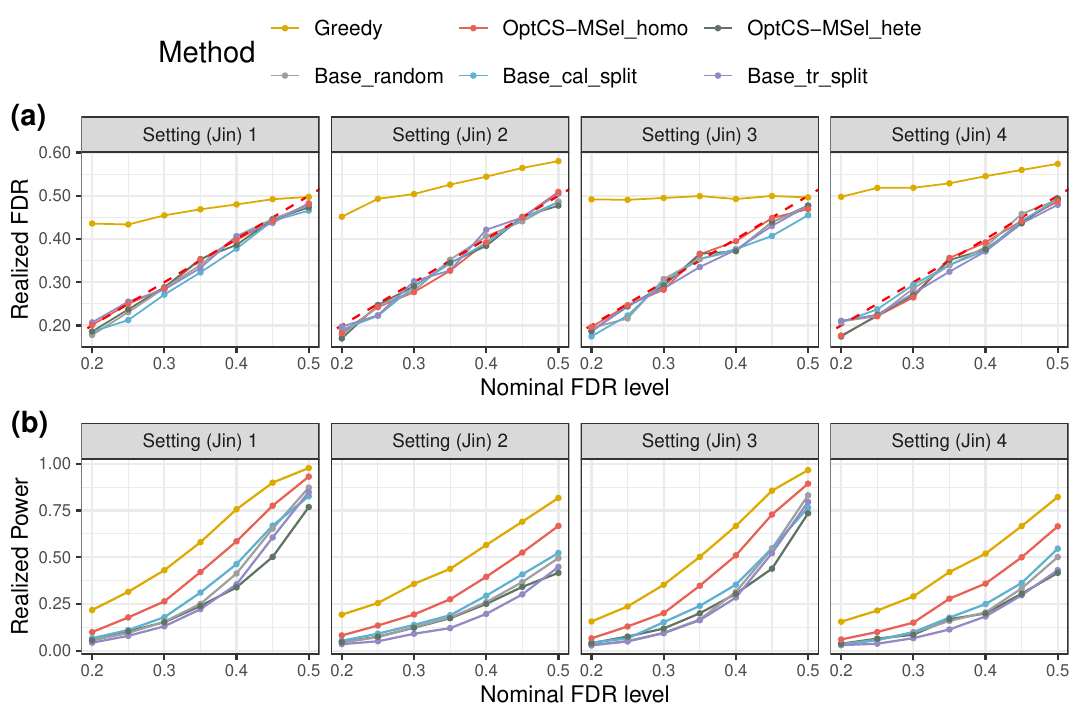}
    \caption{Realized FDR (a) and power (b) for varying nominal FDR levels under Jin's settings. Each subplot corresponds to one data generating process. In these setting, the selection power with each individual conformity score is similar to each other.}
    \label{fig:simu_score_sel_Jin}
\end{figure}

\vspace{-0.5em}
\paragraph{Simulation results.} 
We report the empirical FDR and power over $N=500$ independent runs for Liang's settings in Figure~\ref{fig:simu_score_sel_Liang} and for Jin's settings in Figure~\ref{fig:simu_score_sel_Jin}, with $|\cD_\train|=100$, $|\cD_\calib|=100$, and $|\cD_\test|=100$. 
The empirical FDR is the mean of $|\cS\cap \cH_0|/(1 \vee |\cS|)$, and the empirical power is the mean of $|\cS\cap \cH_1|/|\cH_1|$ over all replica, where we define $\cH_0 = \{j\in [m]\colon Y_{n+j}\leq c\}$ and $\cH_1 = \{j\in [m]\colon Y_{n+j}>c\}$.  

We observe drastic violation of FDR with the \texttt{Greedy} method due to the double-dipping bias. In contrast, all other methods control the FDR below the nominal levels.  
For all 8 settings and across all nominal FDR levels, \texttt{OptCS-MSel\_homo} consistently outperforms all competing methods that maintain valid FDR control. \texttt{OptCS-MSel\_hete} outperforms baselines in Liang's settings. 
On the other hand, in Jin's settings with larger nominal FDR levels, model qualities are similar since the random baseline outperforms other baseline methods; in such cases, \texttt{OptCS-MSel\_homo} maintains strong performance with higher power than \texttt{OptCS-MSel\_hete}; we conjecture that the smoothing effect of homogeneous pruning is particularly useful.

Finally, we include additional results in Appendix~\ref{subsec:more_simu} for the performance of these methods as $(n_1,n_2)$ varies, demonstrating consistent superior performance of OptCS-MSel over baselines. 

\subsection{Conformal selection without data splitting}
\label{subsec:simu_loo} 

Next, we evaluate OptCS-Full described in Section~\ref{subsec:loo} and compare it against baseline methods that involves data splits in settings with a fixed model class. Below, we outline all of the competing methods:

\vspace{0.5em}
\begin{enumerate}[label=(\roman*).]
    \item \texttt{OptCS-Full\_os}: Our procedure in Algorithm~\ref{algo:optcs-full} with over-sampling in training.
    \item \texttt{OptCS-Full\_sep}: The second variant we introduce in Appendix~\ref{app:subsec_discuss_loo} that avoids involving all null test samples in training but with more times of model training and additional pruning.  
    \item \texttt{Base\_split\_0.75}: Randomly split the labeled data into  $\cD_{\train}$ (75\%) and $\cD_\calib$ (25\%). We train the model on $\cD_\train$, and apply \basename using $\cD_\calib$ as calibration data.
    \item \texttt{Base\_split\_0.50}: Similar to (iii), but the data split is   $\cD_{\train}$ (50\%) and $\cD_\calib$ (50\%).
    \item \texttt{Base\_split\_0.25}: Similar to (iii), but the data split is   $\cD_{\train}$ (25\%) and $\cD_\calib$ (75\%).
\end{enumerate}
\vspace{0.5em}

\vspace{-0.5em}
\paragraph{Simulation settings.} Since Liang's settings are tailored for the model selection setting (i.e., the performance depends more heavily on regression modeling than training sample size), we exclude them from this experiment. We adapt the four Jin's settings to classification problems with $Y\in \{0,1\}$ detailed in Appendix~\ref{subsec:loo_setup}. We fix the sizes of labeled data and test data at $|\cD_\labelled| = 500, |\cD_\test| = 100$. The fixed model choice is $V(x,y) = My - \hat\mu(x)$, where $\hat\mu\colon \cX\to [0,1]$ is fitted by support vector regression (SVR), XGBoost, or Random Forest using the Python library \texttt{scikit-learn}, and $M=100$.

\vspace{-0.5em}
\paragraph{Simulation results.} We present the results with the most powerful model, SVR, in Figure~\ref{fig:simu_LOO_onemodel}. Additional results with other two models are in Appendix~\ref{app:subsec_simu_loo}. 
In Figure~\ref{fig:simu_LOO_onemodel}, panel (a) shows the tight FDR control of all the methods. 
Panel (b) shows the superior power of both \texttt{OptCS-Full\_os} and \texttt{OptCS-Full\_sep} over \basename baselines with the same amount of labeled data. The most powerful sample splitting scheme in \basename varies with settings, hence in practice it is hard to determine the split ratio when running SCS. In contrast, by utilizing all labeled data in model training, \texttt{OptCS-Full} consistently outperforms all of them. 
Finally, panel (c) shows the power of all competing methods when the total number of labeled data varies. Again, both variants of OptCS-Full outperform the baselines across all sample sizes, achieving significant improvement for the most powerful model class.

\begin{figure}[htbp]
    \centering
    \includegraphics[width=0.85\linewidth]{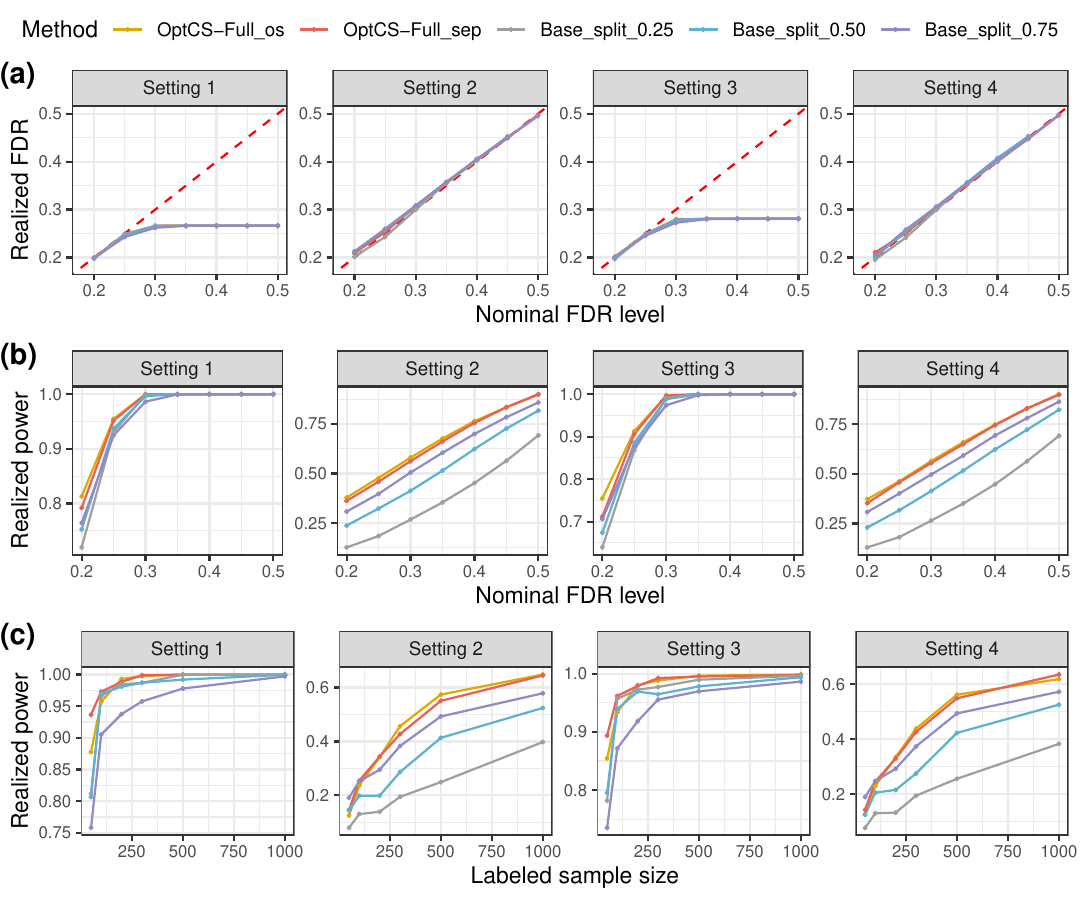}
    \caption{Experiment results for OptCS-Full and split conformal selection using SVR as the model class, averaged over $N=500$ independent runs. \textbf{Panel (a)}: Empirical FDR at various nominal FDR levels with $n=500$, $m=100$. \textbf{Panel (b)}: Empirical power at various nominal FDR levels with $n=500$, $m=100$. \textbf{Panel (c)}: Empirical power with various sizes of labeled samples at fixed nominal FDR level $q=0.3$.}
    \label{fig:simu_LOO_onemodel}
\end{figure}

\subsection{Model training and selection with full data}
\label{subsec:simu_full_msel}

Finally, we conduct simulation studies to demonstrate the performance of OptCS-Full-MSel, which uses all labeled data in the entire process, including model training, model selection, and final multiple testing. 

We compare OptCS-Full-MSel with a suite of sample splitting baselines with no techniques from OptCS. These competing methods include:
\vspace{0.5em}
\begin{enumerate}[label=(\roman*).]
    \item \texttt{OptCS-Full-MSel}: Our OptCS-Full-MSel procedure where the training process uses over-sampling.
    \item \texttt{Base\_random\_0.25}: Randomly split the labeled data into  $\cD_{\train}$ (25\%) and $\cD_\calib$ (75\%). We randomly select a model, train the model on $\cD_\train$, and apply \basename using $\cD_\calib$ as calibration data.
    \item \texttt{Base\_random\_0.75}: Similar to (ii), but with $\cD_{\train}$ (75\%) and $\cD_\calib$ (25\%).
    \item \texttt{Base\_split\_112}: Randomly split the labeled data into $\cD_\train$, $\cD_\sel$, and $\cD_\calib$ with ratio 1:1:2. We use $\cD_\train$ to train all the models, use $\cD_\sel$  to run \basename (after additional splitting) and select the model with largest selection set, then use $\cD_\calib$ to run SCS with the test data and the selected model. 
    \item \texttt{Base\_split\_121}: Similar to (iii), with  data split ratio 1:2:1 for $\cD_\train$, $\cD_\sel$, and $\cD_\calib$.
    \item \texttt{Base\_split\_211}: Similar to (iii), with  data split ratio 2:1:1 for $\cD_\train$, $\cD_\sel$, and $\cD_\calib$.
    \item \texttt{Base\_split\_111}: Similar to (iii), with  data split ratio 1:1:1 for $\cD_\train$, $\cD_\sel$, and $\cD_\calib$.
\end{enumerate}
\vspace{0.5em}

To investigate the relative contributions of the  full-data training module and the model selection module to the power, we additionally include the following two ``partial'' baselines in our evaluation: 
\vspace{0.5em}
\begin{enumerate}[label=(\roman*).]
    \item \texttt{OptCS-Full\_random}: Our OptCS-Full procedure where the training process uses over-sampling, with a randomly selected model class (i.e., full-data training but no model selection). 
    \item \texttt{OptCS-Full\_split}: We first split the data into $\cD_\sel$ ($50\%$) and $\cD_\calib$ ($50\%$), then use $\cD_\sel$ to select a model class, and run our OptCS-Full procedure using $\cD_\calib$ as the ``full data'' with the selected model class. Inside $\cD_\sel$, we split the data into a ``calibration'' fold ($50\%$) and a ``test'' fold ($50\%$), run OptCS-Full with each model class, and select the one with the largest selection set.  
\end{enumerate}

\paragraph{Simulation settings.} We adopt the four Jin's settings adapted from~\citep{jin2023selection} and the four Liang's settings adapted from~\citep{liang2024conformal}, with 9 model classes for Jin's settings and 7 model classes for Liang's settings. We fix the total size of labeled data at $n=500$ and test data at $m=100$, and run the experiments for $N=500$ independent replica. 
More details on the model setups are in Appendix~\ref{app:subsec_simu_full_msel}.

\vspace{-0.5em}
\paragraph{Simulation results.} The empirical FDR and power in Liang's settings are reported in Figure~\ref{fig:simu_FM_liang}. 
The performance of OptCS-Full with homogeneous and heterogeneous pruning are in red and greed solid lines; 
the two variants both control the FDR. While they yield similar power in Liang's settings, homogeneous pruning is moderately more powerful in Jin's settings. 
Both variants of OptCS-Full consistently outperform all  non-OptCS baselines (dashed lines). However, we note that these baselines do not exhaust the FDR budget; this may be due to the fact that the resolution of p-values (determined by calibration data size) is reduced due to sample splitting. 

Compared with the two partial baselines (\texttt{OptCS-Full\_split} which does not use full sample for model selection and \texttt{OptCS-Full\_random} which misses the model selection component), OptCS-Full-MSel demonstrates the benefits of combining the model selection and full-sample training modules. 
In particular, while the contribution of full-sample training and calibration is significant (comparing \texttt{OptCS-Full-MSel} and \texttt{OptCS-Full\_split}), the gap between \texttt{OptCS-Full-MSel} and \texttt{OptCS-Full\_random} shows the contribution of model selection seems  more significant in  Liang's settings where the quality of models significantly differ. 

\begin{figure}[htbp]
    \centering
    \includegraphics[width=\linewidth]{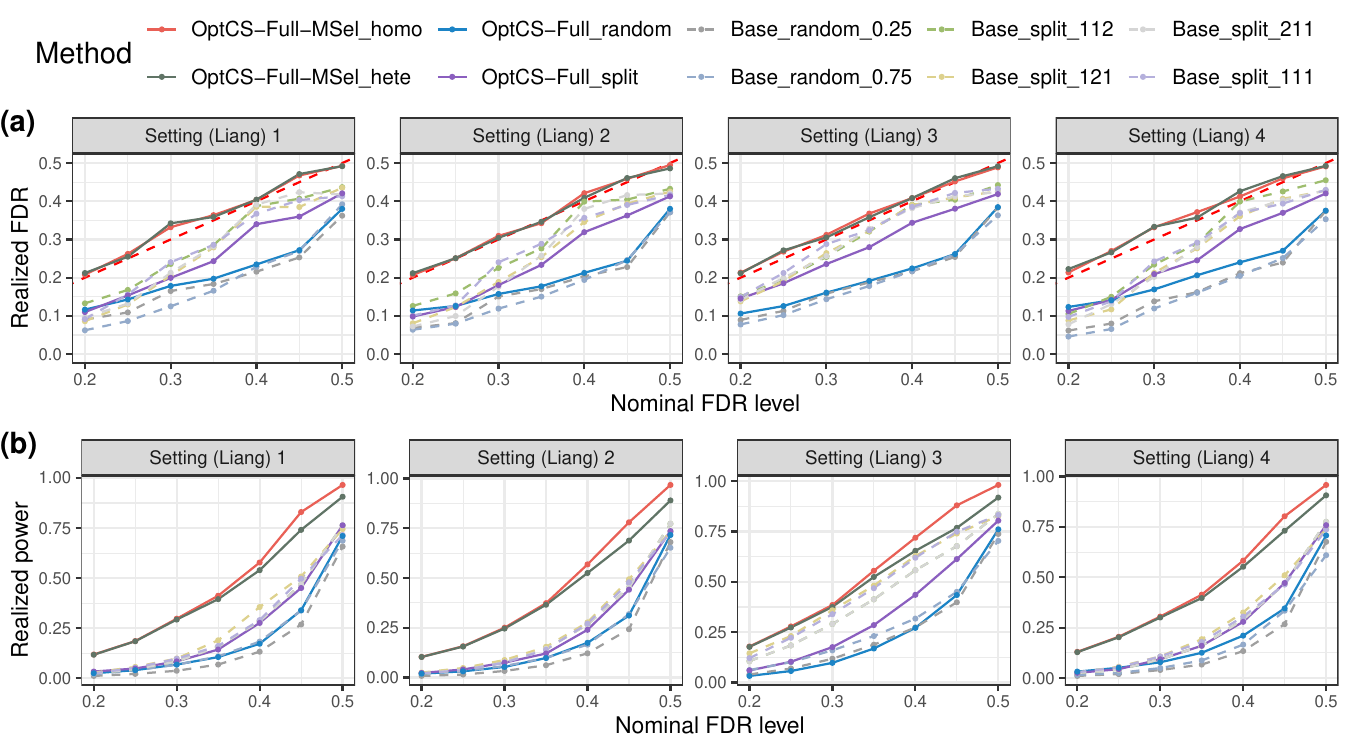}
    \caption{Experiment results for OptCS-Full-MSel and baseline methods in Liang's settings with $n=500$ and $m=100$. \textbf{Panel (a)}: Empirical FDR averaged over $N=500$ independent runs. \textbf{Panel (b)}: Empirical power averaged over $N=500$ independent runs.}
    \label{fig:simu_FM_liang}
\end{figure}

\begin{figure}[htbp]
    \centering
    \includegraphics[width=\linewidth]{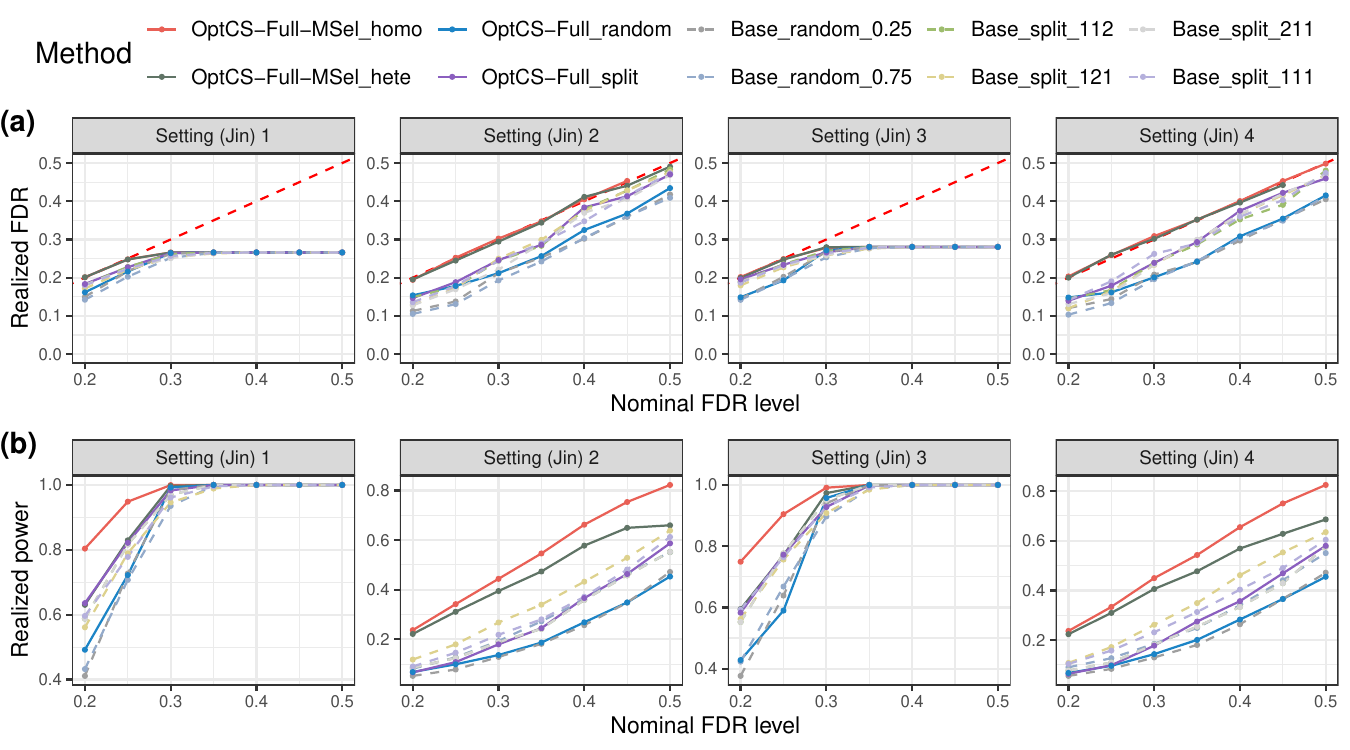}
    \caption{Experiment results for OptCS-Full-MSel and baseline methods in Jin's settings with $n=500$ and $m=100$. Details are otherwise the same as Figure~\ref{fig:simu_FM_liang}.}
    \label{fig:simu_FM_jin}
\end{figure}

\section{Real data applications} \label{sec:real}

We apply OptCS to two representative applications of Conformal Selection, namely, drug discovery (Section~\ref{subsec:app_drug}) and abstention of large language models in radiology report generation (Section~\ref{subsec:app_LLM}).  

\subsection{Drug discovery with model selection}
\label{subsec:app_drug}

We first apply OptCS to drug discovery tasks for selecting drug candidates with favorable biological properties. In this task, controlling the FDR of the selection set ensures that subsequent investigations, such as wet-lab validation of their properties, are resource-efficient. 

Concretely, in this problem, each sample is a drug candidate (such as a small molecule, an antibiotic, or a protein), whose physical and chemical features are encoded in $X\in \cX$, and we are interested in certain biological property  $Y\in \cY\subseteq \RR$. While the ground-truth knowledge of $Y$ typically needs to be evaluated via processes such as high-throughput-screening (HTS) or wetlab experiments~\citep{htsarticle,macarron2011impact}, machine learning models are increasingly used to predict the properties to identify potentially viable candidates before such costly evaluations~\citep{huang2007drug}. 
Given test drugs $\{X_{n+j}\}_{j=1}^m$, we aim to select promising ones with $Y_{n+j}>c$ for some pre-determined threshold $c>0$ while controlling the false discovery rate.

For drug property prediction, state-of-the-art models provide a wide variety of pretrained molecule embeddings~\citep{xiong2019pushing,dgllife,Landrum2016RDKit2016_09_4}. These embeddings can be leveraged to train a predictor (e.g., a shallow neural network) in local datasets for a specific downstream task. 
With the many choices, it is desired to build the selection set with the best model. 
Among the three procedures, OptCS-MSel is the most suitable for this setting where  the models are typically costly to train. 

\begin{figure}[htbp]
    \centering
    \captionsetup{font=small}
    \includegraphics[width=0.8\linewidth]{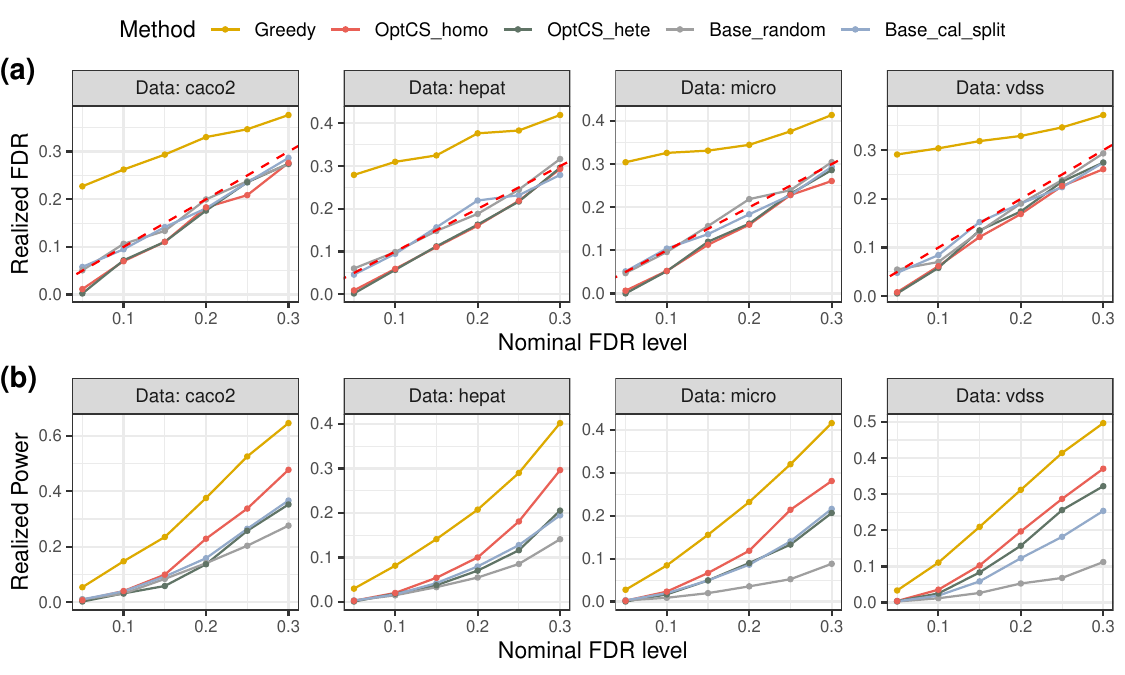}
    \caption{Realized FDR (a) and power (b) for four  drug discovery tasks at various nominal FDR levels.}
    \label{fig:real_drug_class}
\end{figure}

We apply OptCS-MSel to four drug property prediction tasks with data from Therapeutic Data Commons~\citep{huang2021therapeutics}. Each dataset is randomly split into training ($\cD_\train$, $60\%$), calibration ($\cD_\calib$, $20\%$) and test ($\cD_\test$, $20\%$) folds. For each dataset, we train 8 prediction models using $\cD_\train$ from the DeepPurpose library~ \citep{huang2020deeppurpose}, each with one of the following state-of-the-art molecule embeddings: \texttt{DGL\_AttentiveFP}, \texttt{Morgan}, \texttt{CNN}, \texttt{rdkit\_2d\_normalized}, \texttt{DGL\_GCN}, \texttt{DGL\_NeuralFP}, \texttt{DGL\_GIN\_AttrMasking} and \texttt{DGL\_GIN\_ContextPred}. Each prediction model $\hat\mu$ is then paired with the standard ``clipped" score $M\ind\{Y > c\} - \hat\mu(x)$ for $M=1000$, resulting in eight distinct choices. We also fitted nine $\texttt{CNN}$ quantile regression models $\hat{q}_{\alpha}$ with $\alpha \in \{0.1,0.2,\dots,0.9\}$, paired with the clipped quantile score $M\ind\{Y > c\} - \hat{q}_{\alpha}$, $M=1000$. Here, the selection threshold $c$ is dataset-dependent: we take it as the $0.7$-th quantile of all the labeles in the data, such that approximately 30\% of the molecules are desirable, i.e. having activity level $Y$ exceeding $c$.

Figure~\ref{fig:real_drug_class} presents the performance of five methods in Section~\ref{subsec:simu_score_sel} (except \texttt{Base\_tr\_split} which is infeasible using entirely pre-trained models) over $N=500$ independent runs. We observe similar patterns as in the simulations. While \texttt{Greedy} drastically violates the FDR, our methods consistently outperform competing FDR-controlling methods. This shows the ability of OptCS to effectively select powerful pre-trained models to improve the number of discovered promising drug candidates while rigorously controlling the FDR.

\subsection{Boosting LLM Alignment}
\label{subsec:app_LLM}

Finally, we study the application of OptCS to aligning large language models for radiology report generation. Following~\cite{gui2024conformal}, we use conformal selection to identify radiology images for which the LLM-generated reports meet certain alignment criterion. In this context, FDR control implies that on average, a prescribed fraction (such as $90\%$) of the selected images indeed have high-quality machine-generated reports, and these selected reports can be deployed in medical decision making with sufficiently low error rate.  

Concretely, in this problem, each sample is a ``prompt'' $X\in \cX$, e.g., a radiology image. A vision-to-language model $f\in \cX\to \cL$ generates a report summarizing the findings from the image, where $\cL$ is the space of reports. To address potential factual errors and other biases in $f(X)$, Conformal Alignment~\citep{gui2024conformal} defines the alignment status via an indicator $Y = \cA(f(X),E)\in \{0,1\}$, where $E\in \cE$ is gold-standard reference information (e.g.~a report written by human experts), and $\cA\colon \cL\times\cE\to \{0,1\}$ is a user-specified alignment function. With labeled data $\{(X_i,E_i)\}_{i=1}^n$, Conformal Alignment follows \basename to split the data into two folds. With one fold, it trains a predictor $g\colon \cX\to [0,1]$ for the alignment status $Y_i=\cA(f(X_i),E_i)$ based on features of the prompt and generated outputs. The second fold is used as calibration data in \basename to select new images $\{X_{n+j}\}_{j=1}^m$ with ``aligned'' reports ($Y_{n+j}>0$ should a reference report be acquired).

The power of the procedure depends on the choice of the predictor $g$ and the conformity score, all with many options in practice. Given the scarcity of high-quality reference data, effective use of limited labeled data for model optimization is crucial. In this part, we focus on full-sample training variants which are particularly suitable for the lightweight training processes often used for $g$ (such as random forests). 

We use a subset of the MIMIC-CXR dataset~\citep{johnson2019mimic}, and the vision-to-language model is the same as that fine-tuned in~\cite{gui2024conformal}; see Appendix~\ref{app:subsec_data_LLM} for details on the datasets and language model. Following~\cite{gui2024conformal}, we compute a variety of features based on existing methods for (unsupervised) quantification of uncertainty in LLM outputs; details on the features are in Appendix~\ref{app:subsec_detail_LLM}. 

\subsubsection{Full-sample training with OptCS-Full}

We first apply OptCS-Full to utilize all labeled data in training the alignment predictor given a model class. 
Given the features, we train three classifiers (logistic regression, random forests, and XGBoost), wrapped by the same conformity score $V(x,y) = My - g(x)$ for $M=1000$. 
We fix $n_\text{labeled}=500$ and $m=100$, where $100$ labeled data are used to tune the hyper-parameters in the uncertainty measures, and the remaining are used for selection. 
We compare OptCS-Full (with over-sampling) with \basename (the method in~\cite{gui2024conformal}) with the recommended split into $|\cD_\train| = 250$ and $|\cD_\calib| = 150$, each over $N=500$ runs.

\begin{figure}[htbp]
    \centering
    \includegraphics[width=\linewidth]{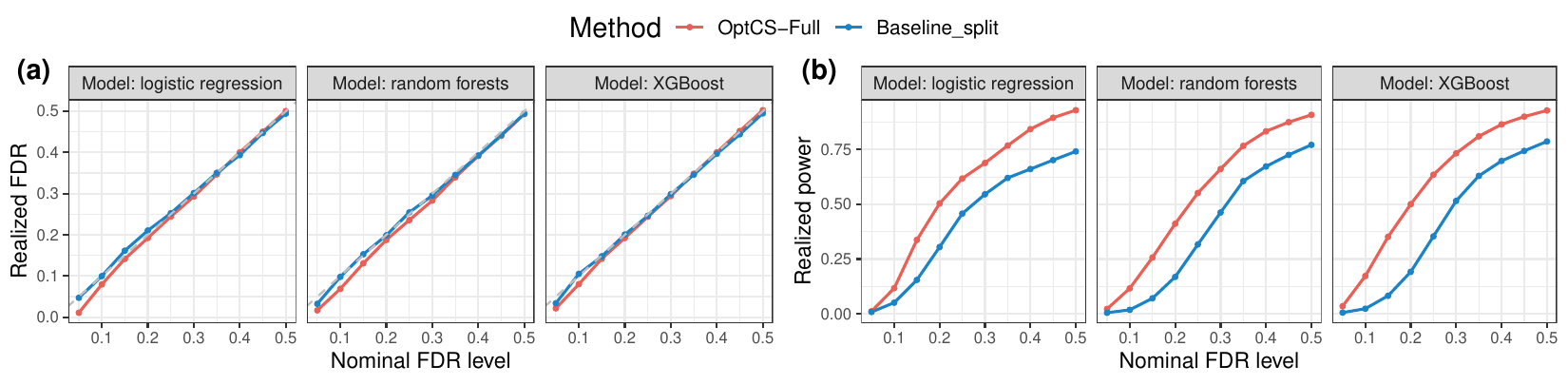}
    \caption{Empirical FDR (a) and power (b) of OptCS-Full and the split baseline for aligning large language models under various nominal FDR levels; results are averaged over $N=500$ independent runs. }
    \label{fig:real_llm_full}
\end{figure}

The empirical FDR and power of the two methods with three model classes under various nominal FDR levels are summarized in Figure~\ref{fig:real_llm_full}. Both methods control the FDR. For each model class, the benefit of full-sample training is significant, suggesting the utility of OptCS in settings with limited labeled data.

\subsubsection{Combining model selection and full-sample training  with OptCS-Full-MSel} 

We then apply OptCS-Full-MSel to maximize power improvement when given many modeling choices of both the alignment predictor and the conformity score.

After computing the features, 
we design two setups for the experiments with different model choices. In the first setup, we use all of the features to build mean regression and quantile regression models  coupled with various conformity scores, leading to 12 model choices in total. 
In the second setup, we vary which features are included in a classifier $g$ trained with the three model classes in \cite{gui2024conformal}, leading to 15 model choices in total. 
Details for the features, prediction models, and conformity scores are in Appendix~\ref{app:subsec_detail_LLM}. 
We fix the sample sizes as $n_{\text{labeled}} = 500$ and $m=100$, 
using the same data split setup as the preceding part.   
We compare OptCS-Full-MSel with sample splitting baselines with model selection capabilities similar to those in Section~\ref{subsec:simu_full_msel}. The experiments are repeated over $N=500$ independent runs. 

The empirical FDR and power under the two setups are summarized in Figure~\ref{fig:real_llm_full_msel}. 
Due to double-dipping bias, \texttt{Greedy} drastically violates FDR in both setups. For FDR-controlling methods, OptCS-Full outperform the sample splitting baselines, suggesting the power gain from both full-sample training and model selection. 

The two setups differ in the gaps between individual model qualities. In the first setup, individual model power differs more significantly, as evident from the advantages of model selection methods over \texttt{Base\_random}; in this case, the benefit of model selection is significant. On the other hand, the models qualities are similar in the second setup (while some models only include a subset of features, each subset contains at least a highly informative feature), as evident from \texttt{Base\_random}. In this case, the heterogeneous pruning strategy is less powerful, yet the homogeneous pruning remains powerful. 

Finally, compared with OptCS-Full in Figure~\ref{fig:real_llm_full}, the power gain in OptCS-Full-MSel is reduced because the model classes there are already near-optimal ones, thereby eliminating the difficulty of model selection. Nevertheless, OptCS-Full-MSel still outperforms the split baseline there, indicating its ability to identify the most powerful models as well as power boost from full-sample training. 

\begin{figure}[htbp]
    \centering
    \includegraphics[width=\linewidth]{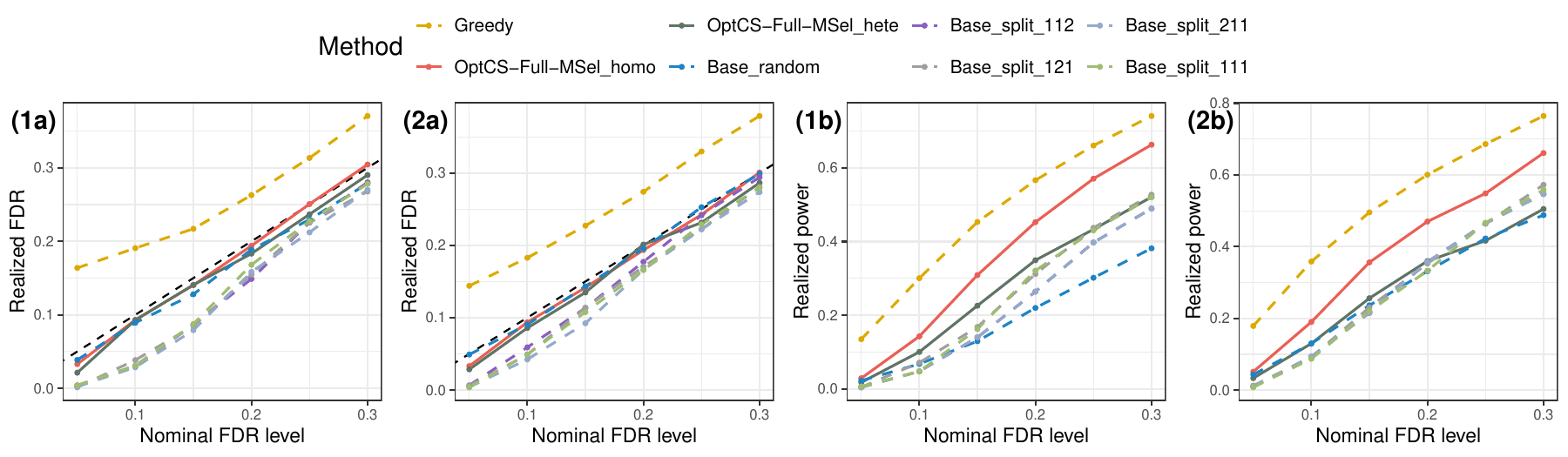}
    \caption{The empirical FDR (Panels (1a) for setup 1 and (2a) for setup 2) and power (Panels (1b) for setup 1 and (2b) for setup 2) of various methods for LLM alignment under  various nominal FDR levels.}
    \label{fig:real_llm_full_msel}
\end{figure}


\section{Discussion} \label{sec:discuss}

In this paper, we propose OptCS, a general framework that allows flexible model optimization  in conformal selection. 
We impose general and mild permutation-based conditions for OptCS to maintain finite-sample, distribution-free FDR control. 
Under the hood, we propose three concrete procedures, each with distinct model optimization strategies that suit different data and model scenarios. We also use these methods to address the necessity of power improvement via model selection and/or full-sample training in drug discovery and large language model alignment tasks. We conclude the paper with a discussion on potential extensions.

Model optimization strategies may go well beyond selecting from a finite number of candidates and/or full-sample training. For instance, one may aim to optimize an ensemble of several models parameterized via a continuous space. How to develop effective strategies for optimizing over a continuous parameter space with complex objectives (e.g., selection size) remains an important problem, for which ideas from conformal prediction might be useful~\citep{xie2024boosted,stutz2021learning,huang2024uncertainty}. 

While we focus on the exchangeable setting here, the needs for model optimization are also present in extensions of conformal selection in more general settings such as covariate shift~\citep{jin2023model} and online selection~\citep{xu2024online,huoreal}. These settings introduce distinct probabilistic structures in data, and hence novel techniques are needed to address data-adaptive model optimization there. 

The general technical idea of individualizing the construction of $p$-values and selection thresholds may extend beyond the model-free selective inference problem to other selective inference problems in conformal inference such as two-sample testing~\citep{hu2024two} and outlier detection~\citep{bates2023testing}.  However, we anticipate that new strategies are needed since the problem structures also differ from here.
\bibliographystyle{plainnat}
\bibliography{reference}

\newpage 
\appendix 

\section{Deferred discussion}

\subsection{Design of auxiliary selection function}
\label{app:subsec_discuss_R}

In this part, we formalize the discussion in Remark~\ref{rm:Rj_choice}. For any p-values $\{p_j\}$ and auxiliary selection sizes $\{\hat{R}_j\}$, define the selection set with heterogeneous, homogeneous, and deterministic pruning as  
    \$
    \cS_{\hete} = \big\{ j\colon  p_j \leq s_j, ~ \xi_j  \hat{R}_{j}   \leq  r_{\hete}^*  \big\}, \quad &  r_{\hete}^* := \max\Big\{ r\colon  \sum_{j=1}^m \ind {\{  p_j\leq s_j, \, \xi_j  \hat{R}_j \leq r \}}   \geq r  \Big\}, \quad \xi_j \iid \textnormal{Unif}([0,1]),\\
    \cS_{\homo} = \big\{ j\colon  p_j \leq s_j, ~ \xi \hat{R}_{j}   \leq  r_{\homo}^*  \big\}, \quad &  r_{\homo}^* := \max\Big\{ r\colon  \sum_{j=1}^m \ind {\{  p_j\leq s_j, \, \xi  \hat{R}_j \leq r \}}   \geq r  \Big\}, \quad \xi\sim \textnormal{Unif}([0,1]),\\
    \cS_{\dtm} = \big\{ j\colon  p_j \leq s_j, ~   \hat{R}_{j}   \leq  r_{\dtm}^*  \big\}, \quad &  r_{\dtm}^* := \max\Big\{ r\colon  \sum_{j=1}^m \ind {\{  p_j\leq s_j, \,   \hat{R}_j \leq r \}}   \geq r  \Big\},\quad s_j  =q\hat{R}_j/m.
    \$
    Also, define the output of the BH procedure applied to $\{p_j\}$ at level $q\in (0,1)$:
    \$
\cS_{\bh} = \big\{ j\colon  p_j \leq  q r_{\bh}^*/m   \big\}, \quad &  r_{\bh}^* := \max\Big\{ r\colon  \sum_{j=1}^m \ind {\{  p_j\leq qr/m  \}}   \geq r  \Big\}.
    \$

The following proposition formalizes Remark~\ref{rm:Rj_choice}. 

\begin{prop}
\label{prop:Rj}
    For any p-values $\{p_j\}$ and any values $\{\hat{R}_j\}$, the following statements hold:
    \begin{enumerate}[label=(\roman*)]
        \item $\cS_{\hete}\subseteq \cS_\bh$, $\cS_{\homo}\subseteq \cS_\bh$, and $\cS_{\dtm}\subseteq \cS_\bh$.
        \item If $\hat{R}_j \equiv |\cS_{\bh}|$, then $\cS_\hete = \cS_\homo = \cS_\dtm = \cS_\bh$. 
        \item $\cS_{\dtm}\subseteq  \cS_\hete$ and $\cS_\dtm \subseteq \cS_\homo$. 
    \end{enumerate}
\end{prop}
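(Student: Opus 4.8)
\textbf{Proof proposal for Proposition~\ref{prop:Rj}.}

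The plan is to prove the three claims in order, since part (ii) will lean on part (i) and part (iii) is a standalone monotonicity argument. For part (i), I would first observe that all three selection sets $\cS_\hete, \cS_\homo, \cS_\dtm$ are of the form $\{j\colon p_j\leq s_j,\ \xi_j\hat R_j\leq r^*\}$ for the appropriate $\xi_j$ (with $\xi_j\equiv\xi$ in the homogeneous case and $\xi_j\equiv 1$ in the deterministic case), where $r^*$ is the largest $r$ with $\sum_j\ind\{p_j\leq s_j,\ \xi_j\hat R_j\leq r\}\geq r$. Writing $R^*$ for the common value $|\cS|$ of the relevant set, the defining inequality at $r=R^*$ gives $|\{j\colon p_j\leq s_j,\ \xi_j\hat R_j\leq R^*\}|\geq R^*$; but this cardinality is exactly $|\cS|$, so the inequality is tight. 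Now I want to show every $j\in\cS$ satisfies $p_j\leq qR^*/m$, which would place $j$ in $\cS_\bh$ by the step-up characterization of BH (since $R^*$ witnesses $\sum_j\ind\{p_j\leq qr/m\}\geq r$ at $r=R^*$, so $r_\bh^*\geq R^*$, and then $p_j\leq qR^*/m\leq qr_\bh^*/m$). For $j\in\cS$ we have $\xi_j\hat R_j\leq R^*$, hence $\hat R_j\leq R^*/\xi_j$ — but this is the wrong direction. The correct route instead: for $j\in\cS$, $p_j\leq s_j=q\hat R_j/m$ and $\xi_j\hat R_j\leq R^*$ with $\xi_j\leq 1$ only helps if $\hat R_j\leq R^*$, which need not hold. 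So I would argue differently: mimic the proof that conditional-calibration-type procedures are dominated by BH, namely show directly that $\{j\colon p_j\leq s_j,\ \xi_j\hat R_j\leq R^*\}\subseteq\{j\colon p_j\leq qR^*/m\}$ is \emph{false} in general, and instead use that $|\cS|=R^*$ together with $p_j\leq q\hat R_j/m$ to bound. Here is the clean argument: among indices with $\xi_j\hat R_j\leq R^*$, the ones also satisfying $p_j\leq s_j$ number exactly $R^*$; order these $R^*$ indices and note the classical BH fact that if $R^*$ p-values each satisfy $p_j\leq q\hat R_j/m$ and the count $|\{j: p_j\leq q\hat R_j/m, \hat R_j\le \text{(threshold)}\}|$... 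This is getting delicate; the honest approach is the one in \cite{jin2023model,fithian2022conditional}: show $j\in\cS\Rightarrow p_j\leq qR^*/m$ by using that $\hat R_j\leq m$ is false too — rather, I would invoke that on $\cS$, since $\xi_j\in(0,1]$ and $\xi_j\hat R_j\leq R^*$, we get nothing, so the genuine mechanism must be: redefine and show $R^* \le |\cS_\bh|$ and $\cS\subseteq\cS_\bh$ \emph{simultaneously} by a fixed-point/induction argument on $r$.

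Given the subtlety above, the cleanest plan for part (i) is: for any fixed realization of $\{\xi_j\}$, let $r^*$ be as defined; I claim $\sum_{j=1}^m\ind\{p_j\leq qr^*/m\}\geq r^*$. Indeed, each $j$ counted in the definition of $r^*$ (i.e.\ $p_j\leq s_j=q\hat R_j/m$ and $\xi_j\hat R_j\leq r^*$) satisfies $\hat R_j\leq r^*/\xi_j$; when $\xi_j\equiv 1$ this gives $\hat R_j\leq r^*$ hence $p_j\leq qr^*/m$, so for deterministic pruning we immediately get $\sum_j\ind\{p_j\leq qr^*/m\}\geq r^*$, whence $r_\bh^*\geq r^*$ and $\cS_\dtm\subseteq\{j: p_j\leq qr^*/m\}\subseteq\{j: p_j\leq qr_\bh^*/m\}=\cS_\bh$. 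For heterogeneous/homogeneous pruning I would instead argue that $\cS_\hete\subseteq\cS_\dtm$ is \emph{false}, so the right comparison is: note $j\in\cS_\hete$ requires $p_j\leq q\hat R_j/m$; the set $T:=\{j: p_j\leq q\hat R_j/m\}$ is common to all three and to a ``BH on $\hat R_j$-thresholds'' procedure. Actually the genuinely correct and short argument, which I will use, is the one in Lemma~C.1-style reasoning: $\cS_\hete\subseteq\cS_\bh$ because for any $j\in\cS_\hete$, $p_j\leq s_j \le q|\{k: p_k\le s_k, \xi_k\hat R_k\le r^*\}|/m \cdot (\hat R_j/|\cdot|)$... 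I will defer to the structure: \emph{the main obstacle is pinning down exactly why $p_j\le qr^*_\bh/m$ for $j$ in each pruned set}, and the resolution is that $r^*$ (for any pruning) satisfies $\sum_j\ind\{p_j\le qr^*/m\}\ge r^*$ because the $r^*$ indices defining it all have $p_j\le q\hat R_j/m$ and—by a rescaling of the pruning threshold—can be shown to have $\hat R_j\le r^*$ on the selected set when we use the tightest $r^*$; I would verify this rescaling carefully for homogeneous ($\xi_j\equiv\xi$: then $\xi\hat R_j\le r^*$ and on $\cS_\homo$ with $|\cS_\homo|=r^*$ one shows $\hat R_j\le r^*$ must hold for all selected $j$ by the maximality of $r^*$) and heterogeneous cases.

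For part (ii), assuming $\hat R_j\equiv |\cS_\bh|=:R_\bh$ for all $j$, I would substitute into each definition. Then $s_j=qR_\bh/m$ for all $j$, and $\{j: p_j\le s_j\}=\{j: p_j\le qR_\bh/m\}=\cS_\bh$ (the last equality by the standard fact that BH selects exactly $\{j: p_j\le qr_\bh^*/m\}$ and $r_\bh^*=R_\bh$). For deterministic pruning, $\hat R_j=R_\bh\le r$ holds iff $r\ge R_\bh$, so $\sum_j\ind\{p_j\le s_j, \hat R_j\le r\}=R_\bh\cdot\ind\{r\ge R_\bh\}$, whose largest $r$ making this $\ge r$ is $r_\dtm^*=R_\bh$; hence $\cS_\dtm=\{j: p_j\le s_j, R_\bh\le R_\bh\}=\cS_\bh$. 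For homogeneous pruning, $\xi\hat R_j=\xi R_\bh\le r$ iff $r\ge \xi R_\bh$, and the count is $R_\bh\cdot\ind\{r\ge \xi R_\bh\}$; since $\xi\le 1$, taking $r=R_\bh\ge \xi R_\bh$ gives count $R_\bh\ge R_\bh$, so $r_\homo^*\ge R_\bh$, and combined with part (i) ($\cS_\homo\subseteq\cS_\bh$, $|\cS_\bh|=R_\bh$) we get $|\cS_\homo|=R_\bh$ and $\cS_\homo=\cS_\bh$. The heterogeneous case is identical with $\xi$ replaced by $\max_j\xi_j\le 1$: at $r=R_\bh$ every $j$ has $\xi_j R_\bh\le R_\bh\le r$, so the count is $|\{j: p_j\le s_j\}|=R_\bh\ge r$, giving $r_\hete^*\ge R_\bh$ and hence equality via part (i).

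For part (iii), I would show $\cS_\dtm\subseteq\cS_\hete$ and $\cS_\dtm\subseteq\cS_\homo$ by a monotonicity-in-the-pruning-threshold argument. The key observation is that since $\xi_j\in(0,1]$, we have $\xi_j\hat R_j\le\hat R_j$ pointwise, so for any fixed $r$, $\{j: p_j\le s_j,\ \xi_j\hat R_j\le r\}\supseteq\{j: p_j\le s_j,\ \hat R_j\le r\}$, hence the count in the $\xi$-pruned definition dominates that in the deterministic one for every $r$; therefore $r_\hete^*\ge r_\dtm^*$ (and $r_\homo^*\ge r_\dtm^*$), the max being over a superset of qualifying $r$'s. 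Then for $j\in\cS_\dtm$ we have $p_j\le s_j$ and $\hat R_j\le r_\dtm^*\le r_\hete^*$, and since $\xi_j\hat R_j\le\hat R_j\le r_\hete^*$, we conclude $j\in\cS_\hete$; the homogeneous case is verbatim. I expect part (iii) to be routine, part (ii) to be a short direct computation, and \textbf{the main obstacle to be part (i)} — specifically, establishing that the self-referential threshold $r^*$ of each pruned procedure forces $p_j\le qr_\bh^*/m$ on the selected set, which requires carefully exploiting the maximality of $r^*$ to argue that every selected $j$ actually has $\hat R_j\le r^*$ (not merely $\xi_j\hat R_j\le r^*$).
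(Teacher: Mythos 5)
Your parts (ii) and (iii) are correct and essentially follow the paper's own route: part (iii) is exactly the paper's monotonicity argument in the pruning variables (smaller $\xi$'s enlarge the count at every $r$, hence enlarge $r^*$, hence enlarge the final set), and part (ii) is the same direct computation once one notes $r^*_{\bh}=|\cS_{\bh}|$; your appeal to part (i) inside part (ii) is unnecessary (and slightly circular as written), since when $\hat{R}_j\equiv|\cS_{\bh}|$ the first-stage set $\{j\colon p_j\leq s_j\}$ already equals $\cS_{\bh}$, which gives the inclusion $\cS_{\homo},\cS_{\hete},\cS_{\dtm}\subseteq\cS_{\bh}$ for free in that special case.

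The genuine gap is part (i) for heterogeneous and homogeneous pruning, and you say so yourself: after correctly settling the deterministic case (where $\xi_j\equiv 1$ gives $\hat{R}_j\leq r^*$ and hence $p_j\leq q\hat{R}_j/m\leq qr^*/m$, so $r^*_{\bh}\geq r^*_{\dtm}$ and $\cS_{\dtm}\subseteq\cS_{\bh}$), your discussion of the $\xi_j<1$ cases never produces an argument; it ends with a promise to ``verify this rescaling carefully,'' which is not a proof. For comparison, the paper disposes of this step in one line by asserting that $\{j\colon p_j\leq q\hat{R}_j/m,\ \lambda_j\hat{R}_j\leq r\}\subseteq\{j\colon p_j\leq qr/m\}$ for every $\vec{\lambda}\leq\mathbf{1}_m$, from which $r^*\leq r^*_{\bh}$ and the inclusion follow. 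But that set inclusion requires $\hat{R}_j\leq r$, which is precisely what you observed cannot be deduced from $\xi_j\hat{R}_j\leq r$ when $\xi_j<1$; your suspicion is well founded. Indeed, for arbitrary $\{\hat{R}_j\}$ the claimed inclusion can fail: take $m=2$, $q=0.2$, $p_1=0.18$, $p_2=0.9$, $\hat{R}_1=\hat{R}_2=2$, and $\xi_1=0.4$. Then $s_1=0.2\geq p_1$ and $\xi_1\hat{R}_1=0.8\leq 1$, so $r^*_{\hete}=1$ and $\cS_{\hete}=\{1\}$ (similarly $\cS_{\homo}=\{1\}$ with $\xi=0.4$), while the BH procedure at level $0.2$ rejects nothing, so $\cS_{\bh}=\emptyset$. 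So the obstacle you ran into is not a defect of your technique but a real issue with the general statement and with the paper's ``straightforward'' inclusion, which is clean only for deterministic pruning or under extra structure on the $\hat{R}_j$'s (e.g.\ $\hat{R}_j\equiv|\cS_{\bh}|$ as in part (ii), or $\hat{R}_j\leq r^*$ on the selected set). Nonetheless, judged as a proof of the proposition as stated, your part (i) for the heterogeneous and homogeneous cases is incomplete.
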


\begin{proof}[Proof of Proposition~\ref{prop:Rj}] 
    Fix any values $\{\lambda_j\}\leq 1$. Given any $\{p_j\}$, $\{\hat{R}_j\}$, and $s_j = q\hat{R}_j/m$,  we denote 
    \$
    \cS_1(r;\vec{\lambda} ) = \{j \in[m] \colon p_j\leq s_j, \lambda_j \hat{R}_j \leq r\} = \Big\{ j \in[m]\colon p_j \leq q \hat{R}_j/m~\text{and}  ~ \lambda_j \cdot q \hat{R}_j/m \leq qr/m\Big\},
    \$
    and 
    \$
    \cS_2(r) = \{ j \in[m]\colon p_j \leq qr/m\}.
    \$
    Then by definition, it is straightforward that $\cS_1(r;\vec{\lambda} )\subseteq \cS_2(r )$ for any $r\in \NN$ and any $\vec{\lambda}\leq \mathbf{1}_m$. Also, for any $r\leq r'$, it holds that $\cS_1(r;\vec{\lambda} )\subseteq \cS_1(r';\vec{\lambda} )$ and $\cS_2(r)\subseteq \cS_2(r')$. Define 
    \$
    r_1^*(\vec{\lambda} ) = \max\Big\{r \in \NN \colon |\cS_1(r;\vec{\lambda} )|\geq r\Big\}, \quad 
     r_2^* = \max\Big\{r \in \NN \colon |\cS_2(r)|\geq r\Big\}.
    \$
    Since $|\cS_1(r;\vec{\lambda} )|\leq |\cS_2(r)|$ for any $r\in \NN$ and any $\vec{\lambda}\leq \mathbf{1}_m$, we know $r_1^*(\vec{\lambda})\leq r_2^*$, and thus  
    \$
    \cS_1(r_1^*(\vec{\lambda});\vec{\lambda})\subseteq \cS_1(r_2^*) \subseteq \cS_2(r_2^*).
    \$
    
    We first use these facts to prove (i). 
    Note that $\cS_\bh = \cS_2(r_2^*)$, and $\cS_\hete = \cS_1(r_1^*(\vec{\xi});\vec{\xi})$, where $\vec{\xi}=(\xi_1,\dots,\xi_m)$, which leads to $\cS_\hete\subseteq \cS_\bh$. 
    Similarly, the fact that $\cS_\homo = \cS_1(r_1^*(\xi\mathbf{1}_m);\xi\mathbf{1}_m)$  leads to $\cS_\homo \subseteq \cS_\bh$, and the fact that that $\cS_\dtm = \cS_1(r_1^*( \mathbf{1}_m);\mathbf{1}_m)$ leads to $\cS_\dtm \subseteq \cS_\bh$. 
    
    We then prove (iii) using similar ideas. Consider any $\vec{\lambda}  \leq \vec{\eta} $ meaning that $\lambda_j\leq \eta_j$ for all $j\in[m]$. By the definition of $\cS_1(r;\vec{\lambda})$, we know $\cS_1(r;\vec{\eta}) \subseteq \cS_1(r;\vec{\lambda})$ for any $r\in \NN$. Further, by the definition of $r_1^*(\vec{\lambda})$, we have $r_1^*(\vec{\eta})\leq r_1^*(\vec{\lambda})$, and therefore
    \$
    \cS_1(r_1^*(\vec{\eta});\vec{\eta})\subseteq \cS_1(r_1^*(\vec{\eta});\vec{\lambda}) \subseteq \cS_1(r_1^*(\vec{\lambda});\vec{\lambda}).
    \$
    Thus, since $\vec{\xi} \leq \mathbf{1}_m$, we know $\cS_\hete = \cS_1(r_1^*(\vec{\xi});\vec{\xi}) \subseteq  \cS_1(r_1^*( \mathbf{1}_m);\mathbf{1}_m) = \cS_\dtm$.  
    Since $\xi \mathbf{1}_m \leq \mathbf{1}_m$, we know $\cS_\homo = \cS_1(r_1^*(\xi \mathbf{1}_m);\xi \mathbf{1}_m) \subseteq  \cS_1(r_1^*( \mathbf{1}_m);\mathbf{1}_m) = \cS_\dtm$.  

    Finally, we prove (ii). Note that $r_2^* \leq |\cS_2(r_2^*)|$ by the definition of $r_2^*$. On the other hand, it must hold that $r_2^* \geq |\cS_2(r_2^*)|$, as otherwise   $|\cS_2(r_2^*+1)| \geq |\cS_2(r_2^*)| \geq r_2^*+1$, contradicting the definition of $r_2^*$. These two facts imply that $r_2^* = |\cS_2(r_2^*)| = |\cS_\bh|$. For any $\{\lambda_j\}\leq \mathbf{1}_m$, setting $\hat{R}_j \equiv r_2^*$ implies that for any $r\geq r_2^*$, 
    \@\label{eq:S12_equiv}
    \cS_1(r;\vec{\lambda} ) = \Big\{ j \in[m]\colon p_j \leq q r_2^*/m~\text{and}  ~ \lambda_j \cdot q r_2^*/m \leq qr/m\Big\} = \Big\{ j \in[m]\colon p_j \leq q r_2^*/m\Big\} = \cS_2(r_2^*),
    \@
    while $\cS_1(r;\vec{\lambda})\subseteq \cS_2(r_2^*)$ for any $r< r_2^*$. This implies $r_1^*(\vec{\lambda}) = r_2^*$ for any $\vec{\lambda}\leq \mathbf{1}_m$. 
    Therefore, using~\eqref{eq:S12_equiv} and the definitions of the three selection sets, we have 
    $
    \cS_\homo = \cS_1(r_2^*, \xi\mathbf{1}_m) = \cS_2(r_2^*) = \cS_\bh
    $, $ 
    \cS_\hete = \cS_1(r_2^*, \vec{\xi}) = \cS_2(r_2^*)= \cS_\bh$, and $\cS_\dtm = \cS_1(r_2^*,\mathbf{1}_m ) = \cS_2(r_2^*)= \cS_\bh$. This completes the proof of (ii). 
\end{proof}

\subsection{A second variant of OptCS-Full}
\label{app:subsec_discuss_loo}

In this part, we present another variant of OptCS that avoids using too many imputed null samples in the leave-one-out training idea of OptCS-Full. We call this variant OptCS-Full\_sep.

Following Section~\ref{subsec:loo}, we assume a training process $\cA$, and write the conformity score function via 
\$
V(x,y \given \hat\mu)\colon \cX\times\cY\to \RR,
\$
where the functional dependence on a trained model is encapsulated in $\hat\mu$ representing a general model.
We introduce the procedure of OptCS-Full-One by specifying the two subroutines in Lines 3 and 5 in Algorithm~\ref{algo:optcs}. 

First, for each $j\in [m]$, we define the  leave-one-out training set (again without splitting the labeled data)
\$
\mathbf{D}_{-\ell}^{(j)} = \{Z_i\}_{i=1}^n \cup\{Z_{n+j}\} \backslash \{\tilde{Z}_\ell\},
\$
where $\tilde{Z}_\ell = Z_\ell$ if $\ell \leq n$ and $\tilde{Z}_{\ell} = \hat{Z}_{n+j}$ if $\ell = n+j$. 
Then, for each $\ell\in [n]\cup\{n+j\}$, we train a model 
\$
\hat\mu_{\ell}^{(j)} = \cA( \mathbf{D}_{-\ell}^{(j)})
\$
using the leave-one-out data. Note that each $\mathbf{D}_{-\ell}^{(j)}$ contains at most one imputed null data point. 
We then specify the score generating functional in Line 3 of Algorithm~\ref{algo:optcs} via 
\$
&\cV^{(j)} = (V_{n_1+1}^{(j)},\dots,V_n^{(j)}, \hat{V}_{n+1}^{(j)},\dots, \hat{V}_{n+m}^{(j)}), \\ 
&\textrm{where}\quad V_{i}^{(j)} = V(X_i,Y_i\given \hat\mu_{i}^{(j)}),\quad \hat{V}_{n+\ell}^{(j)} = V(X_{n+\ell},Y_{n+\ell}\given \hat\mu_{n+j}^{(j)}).
\$
In our experiments, we find that running the BH procedure using p-values~\eqref{form:conf_p} based on the above scores always controls the empirical FDR. Therefore, we recommend using such a relaxed procedure for use. 

If one aims for rigorous theoretical control of FDR in finite samples, we specify the auxiliary selection function $\cR(\cdot)$ in Line 5 of Algorithm~\ref{algo:optcs} as follows. We define the auxiliary p-values 
\$
\tilde{p}_{\ell}^{(j)} = \frac{\sum_{i=n_1+1}^n \ind\{ V_{i}^{(j)}\leq V(X_{n+\ell},Y_{n+\ell}\given \hat\mu_{i}^{(j)}) \} + \ind\{ \hat{V}_{n+j}^{(j)}\leq V(X_{n+\ell},Y_{n+\ell}\given \hat\mu_{n+j}^{(j)}) \}}{n_2+1},
\$
and set $\hat{R}_j$ as the size of  the BH procedure output applied to $\{\tilde{p}_{\ell}^{(j)}\}_{\ell \neq j}\cup \{0\}$. Similar to the proof ideas of Proposition~\ref{lemma:loo_full_valid}, we can show that these definitions obey the properties of $\cV(\cdot)$ and $\cR(\cdot)$ required for finite-sample FDR control in Theorem~\ref{thm:optcs_fdr_control}.

\section{Technical proofs}
 
\subsection{Proof of Theorem~\ref{thm:optcs_fdr_control}}

\begin{proof}[Proof of Theorem~\ref{thm:optcs_fdr_control}]
\label{proof:optcs_fdr_control} 
    Fix a $j \in [m]$ that corresponds to a null hypothesis, i.e. $Y_{n+j} \leq c_{n+j}$. When computing the scores as the output of $\cV$, consider substituting the $j$-th test sample $\hat{Z}_{n+j}$ with $Z_{n+j}$ which use the true response $Y_{n+j}$. We denote 
    \$
        \big( \bar{V}^{(j)}_{n_1:n}, \bar{V}_{n+1:n+j-1}^{(j)}, \bar{V}_{n+j}^{(j)}, \bar{V}_{n+j+1:n+m}^{(j)} \big) := \cV^{(j)}(Z_{1:n_1}, Z_{n_1+1:n}, \hat{Z}_{n+1:n+j-1}, Z_{n+j}, \hat{Z}_{n+j+1:n+m}).
    \$ 
    We will compare this new set of scores with~\eqref{eq:scores}. Note that these scores are unobservable, and are used only for the purpose of analysis.
    Since $\cV$ is monotone for the null (Definition~\ref{def:monotone}), on the null event that $Y_{n+j}\leq c_{n+j}$, we have
    \$
        \bar{V}^{(j)}_{i} = {V}^{(j)}_{i} \text{ for } i \leq n, \quad \bar{V}^{(j)}_{n+i} = \hat{V}^{(j)}_{n+i} \text{ for } i \leq m, i \neq j, \quad \text{and} \quad \bar{V}^{(j)}_{n+j} \leq \hat{V}^{(j)}_{n+j}.
    \$
    We now define the oracle $p$-value as
    \$
    p_j^* := \frac{1 + \sum_{i=n_1+1}^n \ind\{\bar{V}_i^{(j)} \leq \bar{V}^{(j)}_{n+j}\}}{n+1}.
    \$
    By definition, $p_j^* \leq p_j$ on the null event that $Y_{n+j}\leq c_{n+j}$. In the next, we seek to prove the subuniformity of $p_j^*$ by an exchangeability argument, and as a result, $p_j$ will be conservative on the event $\{Y_{n+j}\leq c_{n+j}\}$.

    Denote $\cZ_j =(Z_{n_1+1}, \dots, Z_n, Z_{n+j})$, and also $[\cZ_j] = [Z_{n_1+1}, \dots, Z_n, Z_{n+j}]$ as the unordered set of the calibration data and the $j$-th test point (with true label). 
    We will show that the set of scores $\{\bar{V}_i^{(j)}\}_{i=n_1+1}^n \cup \{\bar{V}^{(j)}_{n+j}\}$ are exchangeable conditional on $[\cZ_j]\cup \{\hat{Z}_{n+\ell}\}_{\ell\neq j}$. 
    Technically, we are to show that  for any values $(v_{n_1+1},\dots,v_n, v_{n+j})$, and any permutation $\pi$ of $\{n_1+1,\dots,n,n+j\}$, 
    \@\label{eq:exch_score}
& \PP\Big(  \bar{V}_{n_1+1}^{(j)} = v_{n_1+1}, \dots,  \bar{V}_{n}^{(j)} = v_n, \bar{V}_{n+j}^{(j)} = v_{n+j} \Biggiven [\cZ_j]\cup \{\hat{Z}_{n+\ell}\}_{\ell\neq j} \Big)   \notag   \\ 
&=\PP\Big(  \bar{V}_{n_1+1}^{(j)} = v_{\pi(n_1+1)}, \dots,  \bar{V}_{n}^{(j)} = v_{\pi(n)}, \bar{V}_{n+j}^{(j)} = v_{\pi(n+j)} \Biggiven [\cZ_j]\cup \{\hat{Z}_{n+\ell}\}_{\ell\neq j}\Big) ,
    \@
    even though they are produced by a highly data-dependent process. For clarity, we write any realized value of the unordered set $[\cZ_j]$ as $[z]=[z_{n_1+1},\dots,z_n,z_{n+j}]$, any realized value of $\{\hat{Z}_{n+\ell}\}_{\ell\neq j}$ as  $\hat{\bz}_{-j}$, and condition on the event  $\{ [\cZ_j] = [z],~\{\hat{Z}_{n+\ell}\}_{\ell\neq j} = \hat\bz_{-j}\}$ in what follows. 

    The permutation equivariance of $\cV$ (Definition~\ref{def:permu_equiv}) ensures the collection of output scores are not affected by the ordering of data in $\cZ_j$, and as a result, the unordered set of the scores $[\{\bar{V}_i^{(j)}\}_{i=n_1+1}^n \cup \{\bar{V}^{(j)}_{n+j}\}]$ is fully determined by $[\cZ_j]$ and $\{\hat{Z}_{n+\ell}\}_{\ell \neq j}$. Thus, the probabilities on both sides of~\eqref{eq:exch_score} are nonzero if and only if $[v_{n_1+1},\dots,v_n,v_{n+j}]$ equals the unordered set of $\cV^{(j)}(Z_{n_1+1},\dots,Z_n,\hat{Z}_{n+1:n+j-1},Z_{n+j},\hat{Z}_{n+j+1:n+m})$. We thus restrict our attention to this case, and without loss of generality, let 
    \$
    (v_{n_1+1},\dots,v_n, v_{n+j}) = \cV^{(j)} (z_{1},\dots,z_{n} , \hat{z}_{n+1:n+j-1}, z_{n+j}, \hat{z}_{n+j+1:n+m}).
    \$
    We also assume $(v_{n_1+1},\dots,v_n, v_{n+j})$ are distinct; otherwise we conceptually add an ordering of them to distinguish two elements of the same value. 
    
    Given such an event,  the only randomness is only in which value in $[z_1,\dots,z_n,z_{n+j}]$ corresponds to the data points $Z_1,\dots,Z_n,Z_{n+j}$. 
    Since $(Z_{n_1+1},\dots,Z_n,Z_{n+j})$ is exchangeable given $\{\hat{Z}_{n+\ell}\}_{\ell \neq j}$, we know that for any permutation $\pi'$ of $\{n_1+1,\dots,n,n+j\}$, 
    \$
    \PP\Big( Z_{n_1+1}=z_{\pi'(n_1+1)},\dots, Z_n =z_{\pi'(n)}, Z_{n+j} = z_{\pi'(n+j)} \Biggiven [\cZ_j] = [z],~\{\hat{Z}_{n+\ell}\}_{\ell\neq j} = \hat\bz_{-j} \Big) = \frac{1}{(m+n_2)!}.
    \$
    For any permutation $\pi$ of $\{n_1+1,\dots,n,n+j\}$, due to permutation equivariance in Definition~\ref{def:permu_equiv}, 
    \$
    &\PP\Big(  \bar{V}_{n_1+1}^{(j)} = v_{\pi(n_1+1)}, \dots,  \bar{V}_{n}^{(j)} = v_{\pi(n)}, \bar{V}_{n+j}^{(j)} = v_{\pi(n+j)} \Biggiven [\cZ_j]\cup \{\hat{Z}_{n+\ell}\}_{\ell\neq j}\Big) \\ 
    &= \PP\Big(  Z_{n_1+1}  = z_{\pi(n_1+1)}, \dots,  Z_{n}  = z_{\pi(n)}, Z_{n+j}  = z_{\pi(n+j)} \Biggiven [\cZ_j]\cup \{\hat{Z}_{n+\ell}\}_{\ell\neq j}\Big) = \frac{1}{(m+n_2)!}.
    \$
    This proves~\eqref{eq:exch_score}. 
    Similar to the standard result in conformal inference,~\eqref{eq:exch_score} implies 
    \$
        p_j^* \biggiven [\cZ_j] \cup \{\hat{Z}_{n+\ell}\}_{\ell \neq j} \quad \sim \quad 
        \textrm{Unif}\big( \{ 1/(n+1),\dots,n/(n+1),1\} \big).
    \$
    as $p_j^*$ depends on $[\cZ_j] \cup \{\hat{Z}_{n+\ell}\}_{\ell \neq j}$ only through the scores.
    In other words, for any $t\in [0,1]$ that is measurable with respect to $[\cZ_j]\cup\{\hat{Z}_{n+\ell}\}_{\ell \neq j}$, we have 
    \$
        \PP\big( p_j^* \leq t \biggiven [\cZ_j]\cup\{\hat{Z}_{n+\ell}\}_{\ell \neq j} \big)  \leq t. \tag{$*$}
    \$
    This is the conditional subuniformity  of the ``oracle'' conformal p-value $p_j^*$.

    The proof will now rely on the Lemma~\ref{eq:fdr_decomp}, which is an application of Lemma C.1 from~\cite{jin2023model}, to bound the FDR for all three pruning methods. We will then show FDR control through a series of inequalities. From the lemma, we have
    \$
        \fdr \leq \sum_{j=1}^m \EE\bigg[  \frac{\ind\{p_j \leq q \hat{R}_j/m, ~ Y_{n+j}\leq c_{n+j}\}}{1\vee \hat{R}_j}   \bigg],
    \$
    and we bound each summand individually below $q/m$. 
    For any $j \in [m]$, by the permutation invariance of $\cR$ under the $j$-th null (Definition~\ref{def:R_mon_invar}), we have
    \$
        \cR_0( Z_{1:n_1}, Z_{n_1+1:n}, \hat{Z}_{n+1:n+j-1}, Z_{n+j}, \hat{Z}_{n+j+1:n+m})_j =: \bar{R}_j \leq \hat{R}_j,
    \$
    and that the auxiliary selection size $\bar{R}_j$ is measurable given $[\cZ_j]\cup\{\hat{Z}_{n+\ell}\}_{\ell \neq j}$ since it is permutation invariant.
    Combined with $p_j^* \leq p_j$, this implies
    \@ \label{eq:permu_invar_relax}
        \EE\Bigg[  \frac{\ind\{p_j \leq q\hat{R}_j/m, Y_{n+j}\leq c_{n+j}\}}{1\vee \hat{R}_j}  \Bigg] &\leq  \EE\Bigg[  \frac{\ind\{p_j^* \leq q\hat{R}_j/m, Y_{n+j}\leq c_{n+j}\}}{1\vee \bar{R}_j}  \Bigg]  \\ \nonumber
         &\leq  \EE\Bigg[  \frac{\ind\{p_j^* \leq q\bar{R}_j/m\}}{1\vee \bar{R}_j}  \Bigg].
    \@
    Since $q\bar{R}_j/m$ is measurable given $[\cZ_j]\cup\{\hat{Z}_{n+\ell}\}_{\ell \neq j}$,
    \$
        \EE\Bigg[  \frac{\ind\{p_j^* \leq q\bar{R}_j/m\}}{1\vee \bar{R}_j} \Bigggiven [\cZ_j]\cup\{\hat{Z}_{n+\ell}\}_{\ell \neq j} \Bigg]
        &\leq \frac{ \PP( p_j^* \leq q\bar{R}_j/m \given[\cZ_j]\cup\{\hat{Z}_{n+\ell}\}_{\ell \neq j}  ) }{1\vee \bar{R}_j} \\
        &\leq q/m.
    \$
    Finally, taking the expectation over $[\cZ_j]\cup\{\hat{Z}_{n+\ell}\}_{\ell \neq j}$ completes the proof of Theorem~\ref{thm:optcs_fdr_control}.
\end{proof}

\subsection{Proof of Proposition~\ref{lemma:modsel_valid}}
\label{proof:modsel_valid}

\begin{proof}[Proof of Proposition~\ref{lemma:modsel_valid}]
We begin by proving that the proposed $\hat{R}_j$ satisfies  Definition~\ref{def:R_mon_invar}.
Fix any $j \in [m]$, and we specify the $j$-th output of ``oracle'' functional $\cR_0$ here as that of $\cR$ depicted in Section~\ref{subsec:modsel} with $Z_{n+j}$ used in place of $\hat{Z}_{n+j}$. 
Put another way, the $j$-th element of $\cR_0$, which we denote as $\bar{R}_j$, is defined by the size of the output of the BH procedure applied to $\{\bar{p}_\ell^{(j)} (k)\}$, the ``oracle'' counterparts of the modified p-values:
\$
\bar{p}_\ell^{(j)} (k) = \frac{\sum_{i\in \cI_\calib} \ind\{V_i(k) \leq \hat{V}_{n+\ell}(k)\}+ \ind\{{V}_{n+j}(k)\leq \hat{V}_{n+\ell}(k)\}}{n_2+1},\quad \ell\in[m],~\ell\neq j.
\$
Recall that we use the same definitions of $V_i(k)$ and $\hat{V}_{n+\ell}(k)$. 
In the above expression, $V_{n+j}(k)$ is used in place of $\hat{V}_{n+j}(k)$ compared with the modified p-values in~\eqref{eq:modified_pval_msel}.
We first show that the ``oracle'' modified p-values $\{\bar{p}_\ell^{(j)}(k)\}_{\ell \neq j}$ are invariant to any permutation on the calibration and test samples $\cZ_j := \{Z_{n_1+1}, \dots, Z_n, Z_{n+j}\}$, for any $k \in [K]$ and $\ell \neq j$. By construction,
\$
    \bar{p}_\ell^{(j)} (k) &= \frac{\sum_{i\in \cI_\calib} \ind\{V_i(k) \leq \hat{V}_{n+\ell}(k)\}+ \ind\{ {V}_{n+j}(k)\leq \hat{V}_{n+\ell}(k)\}}{n_2+1} \\
    &= \frac{\sum_{i \in \cI_\calib \cup \{n+j\}} \ind\{V_i(k) \leq \hat{V}_{n+\ell}(k)\}}{n_2+1} 
    = \frac{\sum_{z \in \cZ_j} \ind\{V(z; k) \leq V(X_{n+\ell}, c_{n+\ell}; k)\}}{n_2+1}
\$
is invariant of any permutation of $\cZ_j$. Denote $\bar\cS_j(k)$ as the output of BH with the set of oracle modified p-values. 
Let $\bar{k}_j$ be the oracle version of $\hat{k}_j$, i..e, $\bar{k}_j = \argmax_{k\in[K]}|\bar\cS_j(k)|$. 
Since $\bar\cS_j(k)$ is a deterministic function of the modified p-values, it is also permutation invariant to $\cZ_j$ for any $k$. Hence, $\bar{k}_j$ is also permutation invariant to $\cZ_j$.\footnote{A random tie-breaking may be involved here. However, as noted earlier, such randomness will not compromise the validity of our approach given that the random variables $\{U_j\}$ are independent of the data.}
This means the oracle auxiliary selection size $\bar{R}_j = |\bar\cS_j(\bar{k}_j)|$ will also be permutation invariant, establishing the second condition~\eqref{eq:permu_invar_second} in Definition~\ref{def:R_mon_invar}. We then proceed to prove the condition~\eqref{eq:permu_invar_first} that $\cR$ and $\cR_0$ coincide in the $j$-th element on the null event. Since we use clipped scores, it holds that $V_{n+j}(k) = \hat{V}_{n+j}(k)$ on the event $\{y_{n+j} \leq c_{n+j}\}$ for all $k\in[K]$.
This further implies $\bar{p}_\ell^{(j)} (k) = \tilde{p}_\ell^{(j)} (k)$ for all $\ell \neq j$ and all $k\in [K]$, and therefore  $|\bar\cS_j(k)| = |\cS_j(k)|$ for all $k\in[K]$. As a consequence, for any $y_{n+j}\leq c_{n+j}$, we have $\hat{k}_j = \bar{k}_j$ and $|\bar\cS_j(\bar{k}_j)|=|\cS_j(\hat{k}_j)|$, which implies that $\hat{R}_j$ satisfies the first condition~\eqref{eq:permu_invar_first} in Definition~\ref{def:R_mon_invar}.

We now show the permutation equivariance property (Definition~\ref{def:permu_equiv}) of the score-generating functional $\cV$. 
We will achieve this by showing the permutation invariance of $\hat{k}_j$  together with the definition of $\cV$. 
For notational clarity, here we consider any hypothesized values $z_{1:n_1}'$ for $Z_{1:n_1}$, $z_{1:n_2}$ for $Z_{(n_1+1):n}$, and ${z}_{(n_2+1):(n_2+m)}$ for $\hat{Z}_{(n+1):(n+m)}$. Following our definition of OptCS-MSel, 
for these hypothesized input values, we can write 
\$
   & \cV^{(j)}(z_{1:n_1}',z_{ 1 },\dots, z_{ n_2 }, {z}_{(n_2+1):(n_2+j-1)},  {z}_{ n_2+j }, z_{(n_2+j+1):(n_2+m)}) \\ 
   & := \Big( V(x_1,y_1;\hat{k}_j),  \dots, V(x_{n_2},y_{n_2};\hat{k}_j), V(x_{n_2+1},y_{n_2+1};\hat{k}_j), \dots, {V}(x_{n+m},y_{n+m};\hat{k}_j) \Big),
\$ 
where $\hat{k}_j = \hat{k}_j(z_{1:n_1}',z_{1:n_2}, {z}_{(n_2+1):(n_2+m)})$ is the selected model index for each $j\in [m]$. 

Using the same arguments as the preceding part,  we can show that the modified p-values $\{\tilde{p}_{\ell}^{(j)}\}_{\ell\neq j}$, and thus the selected model index $\hat{k}_j$, are invariant to any permutation of the calibration data and the $j$-th test point. 
That is, for any permutation $\pi\colon \{n_1+1,\dots,n,n+j\}\to \{n_1+1,\dots,n,n+j\}$, it holds that  
\$
&\hat{k}_j\big(z_{1:n_1}',z_{\pi(1)},\dots, z_{\pi(n_2)}, {z}_{(n_2+1):(n_2+j-1)},  {z}_{\pi(n_2+j)}, z_{(n_2+j+1):(n_2+m)}\big) \\ 
&= \hat{k}_j\big(z_{1:n_1}', z_{ 1 },\dots, z_{ n_2},  {z}_{(n_2+1):(n_2+j-1)},  {z}_{n_2+j}, z_{(n_2+j+1):(n_2+m)}\big). 
\$
This implies 
\$
   & \cV^{(j)}\big(z_{1:n_1}',z_{\pi(1)},\dots, z_{\pi(n_2)}, {z}_{(n_2+1):(n_2+j-1)},  {z}_{\pi(n_2+j)}, z_{(n_2+j+1):(n_2+m)}\big)_i \\
   & = V\Big( x_{\pi(i)}, y_{\pi(i)}; \hat{k}_j (z_{1:n_1}',z_{\pi(1)},\dots, z_{\pi(n_2)}, {z}_{(n_2+1):(n_2+j-1)},  {z}_{\pi(n_2+j)}, z_{(n_2+j+1):(n_2+m)})\Big) \\ 
   &= V\Big( x_{\pi(i)}, y_{\pi(i)}; \hat{k}_j (z_{1:n_1}',z_{1},\dots, z_{ n_2 }, {z}_{(n_2+1):(n_2+j-1)},  {z}_{ n_2+j }, z_{(n_2+j+1):(n_2+m)})\Big).
\$
Also, again by the definition of $\cV^{(j)}$, we have 
\$
  & \cV^{(j)}\big(z_{1:n_1}',z_{ 1 },\dots, z_{ n_2 }, {z}_{(n_2+1):(n_2+j-1)},  {z}_{ n_2+j }, z_{(n_2+j+1):(n_2+m)}\big)_{\pi(i)} \\
     &= V\Big( x_{\pi(i)}, y_{\pi(i)}; \hat{k}_j (z_{1:n_1}',z_{1},\dots, z_{ n_2 }, {z}_{(n_2+1):(n_2+j-1)},  {z}_{ n_2+j }, z_{(n_2+j+1):(n_2+m)})\Big).
\$
Comparing the two equations above proves the permutation equivariance property (Definition~\ref{def:permu_equiv}).

Finally, we prove the monotonicity for the null (Definition~\ref{def:monotone}).  
For $y_{n+j}\leq c_{n+j}$, with slight abuse of notations we let $\hat{k}_j$ be the selected model index using $c_{n+j}$ and $\bar{k}_j$ be the selected model index using $y_{n+j}$, while keeping other parts of the inputs intact. 
By the definition of $\cV^{(j)}$, we have 
\$
    \cV^{(j)}(\textcolor{gray}{z_{1:n_1}',z_{1:n_2}, \hat{z}_{n+1:n+j-1},} \,\hat{z}_{n+j}, \textcolor{gray}{\hat{z}_{n+j+1:n+m}})_{n_2+j} = V(x_{n+j}, c_{n+j};\hat{k}_j)
\$
and 
\$
\cV^{(j)}(\textcolor{gray}{z_{1:n_1}',z_{1:n_2}, \hat{z}_{n+1:n+j-1},} \, {z}_{n+j},\textcolor{gray}{\hat{z}_{n+j+1:n+m}})_{n_2+j} = V(x_{n+j}, y_{n+j};\bar{k}_j).
\$
Since we use the clipped score, using similar arguments as before, we have $\hat{k}_j=\bar{k}_j$ when $y_{n+j}\leq c_{n+j}$. Then since each individual score is monotone, we have $V(x_{n+j}, c_{n+j};\hat{k}_j) \geq V(x_{n+j}, y_{n+j};\hat{k}_j) = V(x_{n+j}, y_{n+j};\bar{k}_j)$, that is, 
\$
    \cV^{(j)}(\textcolor{gray}{z_{1:n_1}',z_{1:n_2}, \hat{z}_{n+1:n+j-1},} \,\hat{z}_{n+j}, \textcolor{gray}{\hat{z}_{n+j+1:n+m}})_{n_2+j}  \geq \cV^{(j)}(\textcolor{gray}{z_{1:n_1}',z_{1:n_2}, \hat{z}_{n+1:n+j-1},} \, {z}_{n+j},\textcolor{gray}{\hat{z}_{n+j+1:n+m}})_{n_2+j} 
\$
as desired. In addition, for other entries, since $\hat{k}_j= \bar{k}_j$,  we have 
\$
\cV^{(j)}(\textcolor{gray}{z_{1:n_1}',z_{1:n_2}, \hat{z}_{n+1:n+\ell-1},} \,\hat{z}_{n+\ell}, \textcolor{gray}{\hat{z}_{n+\ell+1:n+m}})_{ \ell} = \cV^{(j)}(\textcolor{gray}{z_{1:n_1}',z_{1:n_2}, \hat{z}_{n+1:n+\ell-1},} \, {z}_{n+\ell},\textcolor{gray}{\hat{z}_{n +\ell+1:n+m}})_{ \ell} 
\$ 
for any $\ell \neq j$.
Therefore, we conclude the proof of Proposition~\ref{lemma:modsel_valid}.
\end{proof}

\subsection{Proof of Proposition~\ref{lemma:loo_full_valid}}
\label{proof:loo_full_valid}

\begin{proof}[Proof of Proposition~\ref{lemma:loo_full_valid}]
    Fix $j \in [m]$. 
    
    We first show the permutation equivariance of $\cV$ (Definition~\ref{def:permu_equiv}) given any input values. 
    For notational clarity, we consider hypothesizes inputs $z_{1:n_1}'$ for $Z_{1:n_1}$, $z_{1:n_2}$ for $Z_{(n_1+1):n}$, and  $z_{(n_2+1):(n_2+m)}$ for $\hat{Z}_{(n+1):(n+m)}$. 
    By construction of the procedure, for any permutation $\pi$ of $\{1,\cdots,n_2,n_2+j\}$, 
    \$
        &\cV^{(j)}(z_{1:n_1}', z_{\pi(1:n_2)},  {z}_{n+1}, \dots,  {z}_{\pi(n+j)}, \dots,  {z}_{n+m})_{i} = V\Big( x_{\pi(i)}, y_{\pi(i)} \given \hat\mu_{\pi(i)}\Big), \quad \text{and}  \\
        &\cV^{(j)}(z_{1:n_1}', z_{1:n_2}, {z}_{n+1}, \dots,  {z}_{ n+j}, \dots,  {z}_{n+m})_{\pi(i)}  = V \Big(x_{\pi(i)}, y_{\pi(i)} \given \hat\mu'_{\pi(i)} \Big) .
    \$ 
    Here, $\hat\mu_{\pi(i)}$ is trained through the algorithm $\cA$ on the dataset
    \$
    (z_{1:n_1}', z_{\pi(1:n_2)},  {z}_{n+1}, \dots,  {z}_{\pi(n+j)}, \dots,  {z}_{n+m}) \setminus z_{\pi(i)},
    \$
    while $\hat\mu_{\pi(i)}'$ is trained through $\cA$ with the same collection of data but in a different order, i.e., 
    \$
    (z_{1:n_1}', z_{1:n_2}, {z}_{n+1}, \dots,  {z}_{ n+j}, \dots,  {z}_{n+m}) \setminus z_{\pi(i)} .
    \$
    The symmetry~\eqref{eq:training_symmetry} of the training algorithm $\cA$ guarantees $\hat\mu_{\pi(i)} = \hat\mu_{\pi(i)}'$, which shows the permutation equivariance of $\cV$ (Definition~\ref{def:permu_equiv}) given any input values.
    
    The first condition of monotonicity (Definition~\ref{def:monotone}) directly follows from the monotonicity of candidate scores and the fact that $\hat\mu_i$ is not trained on $Z_i$ for any $i$. 
    We now formally prove the second condition in Definition~\ref{def:monotone}. To this end, 
    we consider hypothesizes inputs $z_{1:n_1}'$ for $Z_{1:n_1}$, $z_{1:n_2}$ for $Z_{(n_1+1):n}$,   $\hat{z}_{(n_2+1):(n_2+m)}$ for $\hat{Z}_{(n+1):(n+m)}$ and $z_{n+j}$ for $Z_{n+j}$. 
    Overriding the notations above, we let $\hat\mu_{i}$ be the $i$-th leave-one-out trained model  with inputs $(z_{1:n_1}', z_{1:n_2}, \hat{z}_{n+1}, \dots,  {z}_{ n+j}, \dots,  {z}_{n+m})$ except the $i$-th data, and $\hat\mu_i^*$ be the $i$-th leave-one-out trained model with inputs $(z_{1:n_1}', z_{1:n_2}, {z}_{n+1}, \dots,  {z}_{ n+j}, \dots,  {z}_{n+m})$ except the $i$-th data. 
    Note that in the classification setting with $y_{i}\in \{0,1\}$ and $c_{i}\equiv 0$,   when $y_{n+j} \leq c_{n+j}$, it simply holds that $y_{n+j}=c_{n+j}=0$, i.e., $\hat{z}_{n+j}=z_{n+j}$. As such, 
    we have $\hat\mu_{\ell} = \hat\mu_{\ell}^*$ for all $\ell \neq j$, which proves the second condition in Definition~\ref{def:monotone}.  
    Putting the above two facts together, we prove the statement (i).   
    
    The first fact in (ii) is an implication of Proposition~\ref{prop:Rj}. Now, we relax Definition~\ref{def:R_mon_invar} by only requiring the first condition~\eqref{eq:permu_invar_first} to hold when $p_j \leq q\hat{R}_j/m$ and $y_{n+j} \leq c_{n+j}$. Such relaxation will not compromise the validity of Theorem~\ref{thm:optcs_fdr_control}, as~\eqref{eq:permu_invar_relax} in the proof of Theorem~\ref{thm:optcs_fdr_control} only requires $\bar{R}_j \leq \hat{R}_j$ when $p_j \leq q\hat{R}_j / m$.     
    
    In the following, we specify  $\cR_0$  by defining the $j$-th output of $\cR_0$, which we denote as $\bar{R}_j$. 
    We define the set of oracle  modified p-values $\{\bar{p}_\ell^{(j)}\}_{\ell \neq j}$ similar to the ideas in the proof of Proposition~\ref{lemma:modsel_valid}; that is, 
    \$
        \bar{p}_\ell^{(j)}= \frac{\sum_{i\in \cI_\calib} \ind\{V_i^* \leq \hat{V}_{n+\ell}^*\}+ \ind\{ {V}_{n+j}^*\leq \hat{V}_{n+\ell}^*\}}{n_2+1},\quad \ell\in[m],~\ell\neq j.
    \$
    Here we define the oracle leave-one-out scores  
    \$
    V_i^*  = V(X_i,Y_i \given \hat\mu_{i }^* ),\quad 
    \hat{V}_{n+\ell}^* = V(X_{n+\ell}, 0\given \hat\mu_{n+\ell }^* ),
    \$
    and $\hat\mu_{i }^*$ is obtained by applying $\cA $ to the data 
    \$
\mathbf{D}_{-\ell}^*  := (Z_1,\dots,Z_n, \hat{Z}_{n+1},\dots,\hat{Z}_{n+j-1}, Z_{n+j}, \hat{Z}_{n+j+1},\dots, \hat{Z}_{n+m} ) \setminus \tilde{Z}_{\ell}^*,
    \$
    with $\tilde{Z}_\ell^*=Z_\ell$ if $\ell\in \{n_2+1,\dots,n,n+j\}$ and $\hat{Z}_\ell$ otherwise. 
    That is, these are the counterparts of $\{V_i\}$ and $\{\hat{V}_{n+j}\}$ scores in OptCS-Full when $\hat{Z}_{n+j}$ is replaced by $Z_{n+j}$. 
    We then set $\bar{R}_j$ as the selection size of BH applied to $\{\bar{p}_\ell^{(j)}\}_{\ell \neq j} \cup \{0\}$ at nominal level $q$. Note that we view the quantities defined above (and those in  OptCS-Full) implicitly as functions of hypothesized inputs 
    \$
    &\hat\cZ_j := (Z_{1:n}, \hat{Z}_{(n+1):(n+m)})  = \hat{\mathbf{z}}_j:= (z_{1:n_1}', z_{1:n_2}, \hat{z}_{(n+1):(n+m)}), \\
    &\cZ_j:= (Z_{1:n}, \hat{Z}_{(n+1):(n+j-1)}, Z_{n+j}, \hat{Z}_{(n+j+1):(n+m)}) =\mathbf{z}_j: =(z_{1:n_1}', z_{1:n_2}, \hat{z}_{(n+1):(n+j-1)}, z_{n+j}, \hat{z}_{(n+j+1):(n+m)}).
    \$

    We now show such $\cR_0$ obeys  the relaxed version of Definition~\ref{def:R_mon_invar} we propose. 
    Specifically, we will show that (a) $\bar{R}_j$ is permutation invariant $\mathbf{z}_j$, and (b) $\bar{R}_j = |\cS_\bh|$ when $p_j \leq s_j = q\hat{R}_j /m$ and $y_{n+j}\leq c_{n+j}$.  
    
    To show (a), we note each $\bar{p}_\ell^{(j)}$ is invariant to permutations of $\mathbf{z}_j$. Formally, for any permutation $\pi$, we denote the corresponding $\bar{p}_\ell^{(j)}$ with inputs $\mathbf{z}_j$ permuted by $\pi$ as $\bar{p}_\ell^{(j)}(\pi)$. Then by definition, 
    \$
    \bar{p}_\ell^{(j)}(\pi) = \frac{\sum_{i\in \cI_\calib \cup\{n+j\}} \ind\{V_{\pi(i)}^* \leq \hat{V}_{n+\ell}^*\} }{n_2+1} = \bar{p}_\ell^{(j)},
    \$
    where we use the fact that after permutation, the $i$-th leave-one-out trained model $\hat\mu_i^*$ remains intact due to the symmetry of $\cA$. 
    Since each $\bar{p}_\ell^{(j)}$ is invariant to permutations of $\{Z_{n_1+1}, \dots, Z_n, Z_{n+j}\}$, so is the output of BH applied to them, and so is $\bar{R}_j$. This proves (a). 
    
    We now show the relaxed condition (b). Since in the binary setting with $Y_{n+j}\in \{0,1\}$ and $c_{n+j}\equiv 0$, we know that on the null event $\{Y_{n+j}\leq c_{n+j}\}$, it holds that $Y_{n+j}=c_{n+j}=0$ and hence 
    \$
   \mathbf{D}_{-\ell}^* = \mathbf{D}_{-\ell}, \quad \hat\mu_{i }^* = \hat\mu_{i }, \quad 
   V_i^* = V_i , \quad \hat{V}_{n+\ell}^*  = \hat{V}_{n+\ell} ,\quad \text{and}\quad  V_{n+j} = \hat{V}_{n+j}.
    \$ 
    That is, $Y_{n+j}\leq c_{n+j}$ implies 
    \$
\bar{p}_\ell^{(j)}= \frac{\sum_{i\in \cI_\calib} \ind\{V_i  \leq \hat{V}_{n+\ell} \}+ \ind\{ \hat{V}_{n+j} \leq \hat{V}_{n+\ell} \}}{n_2+1} =: p_\ell^{(j)}.
    \$
    That is, $\bar{R}_j$ is the output of BH procedure (at nominal level $q$) applied to $\{p_\ell^{(j)}\}$ on the null event $\{y_{n+j}\leq c_{n+j}\}$, and $p_\ell^{(j)}$ are in terms of observed inputs. 
    Now assuming $y_{n+j} \leq c_{n+j}$ and $p_j \leq s_j := q\hat{R}_j / m$, we are to show the relaxed conditions with $p_\ell^{(j)}$ in the place of $\bar{p}_\ell^{(j)}$. We consider the following two cases:
    \begin{enumerate}[label=(\roman*).]
        \item If $\hat{V}_{n+j} \leq \hat{V}_{n+l}$, then $V_{n+j} \leq \hat{V}_{n+l}$ by the monotonicity of $V$. In this case we have $\bar{p}_l^{(j)} = p_l \geq p_j$.
        \item If $\hat{V}_{n+j} > \hat{V}_{n+l}$, then $p_l \leq p_j$; it also holds that
        \$
            p_l^{(j)} \leq \frac{\sum_{i\in \cI_\calib} \ind\{V_i \leq \hat{V}_{n+\ell}\}+1}{n_2+1} \leq \frac{\sum_{i\in \cI_\calib} \ind\{V_i \leq \hat{V}_{n+j}\}+1}{n_2+1} = p_j.
        \$
    \end{enumerate}
    That $p_j \leq s_j = q\hat{R}_j / m = q|\cS_{\textnormal{BH}}|/m$ indicates $j \in \cS_{\textnormal{BH}}$, i.e. the $j$-th test sample is (falsely) rejected. In this case, if we change the set of input p-values to the BH procedure from
    \$
        p_1, \dots, p_{j-1}, p_j, p_{j+1}, \dots, p_m \qquad \text{to} \qquad
        p_1^{(j)}, \dots, p_{j-1}^{(j)}, 0, p_{j+1}^{(j)}, \dots, p_m^{(j)},
    \$
    all the p-values larger than $p_j$ will be kept unchanged by (i), and no new rejections can be made as $p_j$ is rejected.
    In addition, all the p-values $p_l \leq p_j$ will remain no greater than $p_j$ after the replacement by (ii), so the p-value ranked at $j$ must not increase, leading to a rejection set of size at least $|\cS_{\textnormal{BH}}| = \hat{R}_j$. The two facts imply that $\bar{R}_j = \hat{R}_j$ when $y_{n+j} \leq c_{n+j}$ and $p_j \leq s_j$, demonstrating the first condition and hence the relaxed permutation invariance of $\cR_0$. 
    We thus conclude the proof of Proposition~\ref{lemma:loo_full_valid}. 
\end{proof}

\subsection{Proof of Proposition~\ref{prop:validity_loosel}}

\begin{proof}[Proof of Proposition~\ref{prop:validity_loosel}]
\label{proof:validity_loosel} 
    Fix any $j\in [m]$. 

    We first show the permutation equivariance of the score functional (Definition~\ref{def:permu_equiv}). Namely, we consider $\cV$ with any hypothesized inputs $\mathbf{z}=(z_{1:n_1}', z_{1:n_2+m})$ and any permutation $\pi$ of $\cI_j:=\{1,\dots,n_2, n_2+j\}$.  
    As such, all quantities defined in OptCS-Full-MSel can be implicitly viewed as functions of $\mathbf{z}$. 
    The permuted input is denoted as $\pi(\mathbf{z}):= (z_{1:n_1} ', z_{\pi(n_1+1:n)},  {z}_{n+1}, \dots,  {z}_{\pi(n+j)}, \dots,  {z}_{n+m})$. 
    
    Similar to the proof of Proposition~\ref{lemma:loo_full_valid}, since each $\cA_k$ is symmetric to the inputs, each trained model $\hat\mu_{i,k}$ is invariant to permutations of $(z_1,\dots,z_{n_2}, z_{n_2+j})$. Denote $\hat\mu_{i,k}^\pi $ as the model trained via $\cA_k$ by leaving out the $i$-th element with the permuted inputs $\pi(\mathbf{z})$. Due to the symmetry of $\cA_k$, we have 
    \$
    \hat\mu_{i,k}^\pi = \hat\mu_{i,k}.
    \$
    Denote the modified p-values $p_\ell^{(j)}(k)$ obtained after the permutation $\pi$ as $p_\ell^{(j)}(k,\pi)$.  
    This implies 
    \$
     {p}_\ell^{(j)}(k,\pi) &= \frac{\sum_{i \in \cI_j} \ind\{V( x_{\pi(i)}, y_{\pi(i)} ;  \hat\mu_{\pi(i),k}^\pi,k) \leq V(x_{n+\ell},y_{n+\ell}, \hat\mu_{\ell,k}^\pi,k) \}}{n_2+1} \\ & = \frac{\sum_{i \in \cI_j} \ind\{V( x_{i}, y_{\pi(i)} ;  \hat\mu_{i,k} ,k) \leq V(x_{n+\ell},y_{n+\ell}, \hat\mu_{\ell,k} ,k) \}}{n_2+1} =  \bar{p}_\ell^{(j)} (k).
    \$ 
    (Note that for the $\ell$-th test point, $y_{n+\ell}$ is a general hypothesized input which shall be understood as a hypothesized value for the observed threshold $c_{n+\ell}$). 

    That is, all the modified p-values, hence $\hat{k}_j$, as a function of the input values, are permutation invariant with respect to $(z_1,\dots,z_{n_2},z_{n_2+j})$.  
    Therefore, for any permutation $\pi$ of $\cI_j$, we have 
    \$
        &\cV^{(j)}(z_{1:n_1}', z_{1:n_2+m})_{\pi(i)} = {V} (\tilde{z}_{\pi(i)} \given \hat\mu_{\pi(i), \hat{k}_j}, \hat{k}_j), \quad \text{and} \\
    &\cV^{(j)}(z_{1:n_1}', z_{\pi(n_1+1:n)},  {z}_{n+1}, \dots,  {z}_{\pi(n+j)}, \dots,  {z}_{n+m})_i = {V} (\tilde{z}_{\pi(i)} \given \hat\mu_{\pi(i), \hat{k}_j}, \hat{k}_j),
    \$
    where we denote $\tilde{z}_i = (x_i,y_i)$ if $i\leq n_2$ and $\tilde{z}_i = (x_i,c_i)$ otherwise. This proves Definition~\ref{def:permu_equiv}. 

    We then show the monotonicity property in Definition~\ref{def:monotone}. 
    Overriding the notations a bit, we consider any hypothesized inputs $\hat{\mathbf{z}}:=(z_{1:n_1}', z_{1:n_2}, \hat{z}_{n+1:n+m})$  and $\hat{\mathbf{z}}:=(z_{1:n_1}', \hat{z}_{1:n+j-1}, z_{n+j}, \hat{z}_{(n+j+1):(n+m)})$, with $\hat{z}_{n+j}=(x_{n+j},c_{n+j})$, $z_{n+j}=(x_{n+j},c_{n+j})$, and $y_{n+j}\leq c_{n+j}$. 
    In the classification setting with $y_{n+j}\in \{0,1\}$ and $c_{n+j}=0$, we know $y_{n+j}\leq c_{n+j}$ implies $y_{n+j}=c_{n+j}$ hence $\hat{\mathbf{z}} = \mathbf{z}$. As such, the $j$-th set of score outputs $\cV^{(j)}$ remain the same, which proves Definition~\ref{def:monotone}. 
    
    We now show that $\cR$ satisfies Definition~\ref{def:R_mon_invar}. Fix any $j \in [m]$.  
    To compute $\cR_0$, we will apply the BH procedure at nominal level $q$ to the set of oracle slightly modified p-values:
    \$
    \bar{p}_\ell^{(j)} (k) = \frac{\sum_{i=n_2+1}^n \ind\{V_i^*(k) \leq \hat{V}_{n+\ell}^*(k)\}+ \ind\{{V}^*_{n+j}(k)\leq \hat{V}_{n+\ell}^*(k)\}}{n_2+1},\quad \ell\in[m],~\ell\neq j,
    \$
    where $V_i^*(k)$ and $\hat{V}_{n+\ell}^*(k)$ are the leave-one-out trained scores
    \$
    V_i^*(k) = V(X_i,Y_i \given \hat\mu_{i,k}^*,k),\quad 
    \hat{V}_{n+\ell}^*(k) = V(X_{n+\ell}, 0\given \hat\mu_{n+\ell,k}^*, k ),
    \$
    and $\hat\mu_{i,k}^*$ is obtained by applying $\cA_k$ to the data 
    \$
\mathbf{D}_{-\ell}^*  := (Z_1,\dots,Z_n, \hat{Z}_{n+1},\dots,\hat{Z}_{n+j-1}, Z_{n+j}, \hat{Z}_{n+j+1},\dots, \hat{Z}_{n+m} ) \setminus \tilde{Z}_{\ell}^*,
    \$
    with $\tilde{Z}_\ell^*=Z_\ell$ if $\ell\in \{n_2+1,\dots,n,n+j\}$ and $\hat{Z}_\ell$ otherwise. 
    Similar to the proof of Proposition~\ref{lemma:loo_full_valid}, these are oracle scores obtained in the same way as those in OptCS-Full except that $\hat{Z}_{n+j}$ is replaced by $Z_{n+j}$.
   We denote $\bar\cS_j(k)$ as the output of BH applied to p-values $\{\bar{p}_\ell^{(j)}(k)\}_{\ell \neq j}\cup\{0\}$ at the same nominal level $q$, and define the oracle selected model index $\bar{k}_j:=\argmax_{k\in[K]}|\bar{\cS}_j(k)|$. 
   
    For conceptual clarity, while we use the current notations, all these quantities are implicitly functions of (hypothesizes) inputs 
       \$
    &\hat\cZ_j := (Z_{1:n}, \hat{Z}_{(n+1):(n+m)})  = \hat{\mathbf{z}}_j:= (z_{1:n_1}', z_{1:n_2}, \hat{z}_{(n+1):(n+m)}), \\
    &\cZ_j:= (Z_{1:n}, \hat{Z}_{(n+1):(n+j-1)}, Z_{n+j}, \hat{Z}_{(n+j+1):(n+m)}) =\mathbf{z}_j: =(z_{1:n_1}', z_{1:n_2}, \hat{z}_{(n+1):(n+j-1)}, z_{n+j}, \hat{z}_{(n+j+1):(n+m)}).
    \$

   We first show the permutation invariance of such $\cR_0$.
   This relies on the fact that $\{\bar{p}_\ell^{(j)}(k)\}_{\ell \neq j}$ are invariant to any permutation on $\cZ_j = \{Z_{n_1+1}, \dots, Z_n, Z_{n+j}\}$ for any $k \in [K]$ and $\ell \neq j$. Denoting $\cI_\calib = \{n_2+1,\dots,n\}$, we have 
    \$
            \bar{p}_\ell^{(j)} (k) &= \frac{\sum_{i\in \cI_\calib} \ind\{V_i^*(k) \leq \hat{V}_{n+\ell}(k)\}+ \ind\{ {V}_{n+j}^*(k)\leq \hat{V}_{n+\ell}(k)\}}{n_2+1} \\
    &= \frac{\sum_{i \in \cI_\calib \cup \{n+j\}} \ind\{V_i(k) \leq \hat{V}^*_{n+\ell}(k)\}}{n_2+1}.  
    \$
    By the symmetry of $\cA_k$, we know that $\hat{V}_{n+\ell}^*(k)$ is invariant to any permutation of $\{Z_{n_1+1},\dots,Z_n,Z_{n+j}\}$. 
    On the other hand, for each $i\in \cI_\calib\cup\{n+j\}$, the trained model by leaving this data point out remains the same under permutations of $\{Z_{n_1+1},\dots,Z_n,Z_{n+j}\}$. Denoting the corresponding $\bar{p}_\ell^{(j)}(k)$ obtained after permutation $\pi$ as $\bar{p}_\ell^{(j)}(k,\pi)$, we then have 
    \$
    \bar{p}_\ell^{(j)}(k,\pi) = \frac{\sum_{i \in \cI_\calib \cup \{n+j\}} \ind\{V_{\pi(i)}(k) \leq \hat{V}^*_{n+\ell}(k)\}}{n_2+1} =  \bar{p}_\ell^{(j)} (k).
    \$ 
    Since BH is a deterministic function of the input p-values, we know that $\bar\cS_j(k)$ is permutation invariant with respect to $\{Z_{n_1+1},\dots,Z_n,Z_{n+j}\}$ for any $k \in [K]$, and consequently, $\bar{k}_j$ is also permutation invariant up to a random, independent tie-breaking. This establishes  condition~\eqref{eq:permu_invar_second} in Definition~\ref{def:R_mon_invar} as $\bar{R}_j = |\bar{\cS}_j(\bar{k}_j)|$ by construction.

    We now show the first condition in Definition~\ref{def:R_mon_invar}. 
    Since in the binary setting with $Y_{n+j}\in \{0,1\}$ and $c_{n+j}\equiv 0$, we know that on the null event $\{Y_{n+j}\leq c_{n+j}\}$, it holds that $Y_{n+j}=c_{n+j}=0$ and hence 
    \$
   \mathbf{D}_{-\ell}^* = \mathbf{D}_{-\ell}, \quad \hat\mu_{i,k}^* = \hat\mu_{i,k}, \quad 
   V_i^*(k) = V_i(k), \quad \hat{V}_{n+\ell}^*(k) = \hat{V}_{n+\ell}(k).
    \$
    This implies $\hat{R}_j = \bar{R}_j$ on the null event, as $\hat{R}_j$ and $\bar{R}_j$ are obtained by the same procedure to these two sets of scores. We thus conclude that $\cR$ obeys Definition~\ref{def:R_mon_invar}.   
\end{proof}

\subsection{Proof of Lemma~\ref{lem:fdr_decomp}}
\label{app:proof_fdr_decomp}

\begin{proof}[Proof of Lemma~\ref{lem:fdr_decomp}]
    We prove the result for three pruning options separately.

    \vspace{0.5em}
    \noindent \textbf{Case 1: Homogeneous pruning}.  By definition, we have
    \$
        \fdr &= \EE \Bigg[ \frac{\sum_{j=1}^{m} \ind\{j \in \cS, Y_{n+j} \leq c_{n+j}\}}{1 \vee |\cS|} \Bigg] 
        = \EE \Bigg[ \frac{\sum_{j=1}^m \ind\{p_j \leq s_j, Y_{n+j} \leq c_{n+j}\} \ind\{\xi_j \leq |\cS|/\hat{R}_j \} }{1 \vee |\cS|} \Bigg]
    \$  
    where the second inequality relies on that $r^* = |\cS|$, a property that follows from the proof of Proposition~\ref{prop:Rj}. Fix any $j \in [m]$. Once $j \in \cS$, the rejection set of our pruning method will not change if $\xi_j$ is sent to 0 while keeping other $\xi$'s unchanged, by the step-up nature of the pruning process. Denoting $\cS_{\xi_j \rightarrow 0}$ as the rejection set obtained by replacing $\xi_j$ by 0 in the pruning step, we have
    \$
        \fdr &= \sum_{j=1}^m \sum_{k=1}^m \EE \Bigg[ \frac{\ind\{|\cS| = k\}}{k} \ind\{p_j \leq s_j, Y_{n+j} \leq c_{n+j}\} \ind\{\xi_j \leq k/\hat{R}_j \} \Bigg] \\
        &\leq  \sum_{j=1}^m \sum_{k=1}^m \EE \Bigg[ \frac{\ind\{|\cS_{\xi_j \rightarrow 0}| = k\}}{k} \ind\{p_j \leq s_j, Y_{n+j} \leq c_{n+j}\} \ind\{\xi_j \leq k/\hat{R}_j \} \Bigg] \\
        &= \sum_{j=1}^m \sum_{k=1}^m \EE \Bigg[ \frac{\ind\{|\cS_{\xi_j \rightarrow 0}| = k\}}{\hat{R}_j} \ind\{p_j \leq s_j, Y_{n+j} \leq c_{n+j} \} \Bigg] = \sum_{j=1}^m \EE \Bigg[ \frac{\ind\{p_j \leq s_j, Y_{n+j}\leq c_{n+j} \} }{\hat{R}_j}  \Bigg] 
    \$
    where the second equality follows from the independence between $\xi_j$ and $(\cS_{\xi_j \rightarrow 0}, p_j,\hat{R}_j)$. This shows the homogeneous case.

    \vspace{0.5em}
    \noindent \textbf{Case 2: Heterogeneous pruning}. In this case, we use a conditional PRDS property of $\{\xi \hat{R}_j\}_{j=1}^m$ which we will specify later. Let $\EE_\cD$ be the conditional expectation given all the data, so the only randomness lies in $\xi$. By the independence of $\xi$, we would still have $\xi \sim \text{Unif}([0,1])$ after conditioning. We define the conditional FDR:
    \$
        \fdr(\cD) := \EE_\cD \Bigg[ \frac{\sum_{j=1}^{m} \ind\{j \in \cS, Y_{n+j} \leq c_{n+j}\}}{1 \vee |\cS|} \Bigg]
    \$
    where $\EE_\cD$ is with respect to the randomness in $\xi$ conditional on $\cD := \cD_\labelled \cup \{(X_{n+j}, c_{n+j})\}_{j=1}^m$. Note that $\ind\{p_j \leq s_j\}$ are deterministic conditional on the data. We denote $\cS^{(1)} := \{j: p_j \leq s_j\}$ as the first-step rejection set, and $\cH_0 := \{j: Y_{n+j} \leq c_{n+j}\}$ as the set of null hypotheses, both being deterministic given the data. Then, we have the decomposition
    \@ \label{eq:hete_decomp}
        \fdr(\cD) &= \sum_{k=1}^m \EE_\cD \Bigg[ \ind\{|\cS| = k\} \frac{\sum_{j \in \cS^{(1)} \cap \cH_0} \ind\{\xi\hat{R}_j \leq k \} }{k} \Bigg] \nonumber \\
        &= \sum_{j \in \cS^{(1)} \cap \cH_0} \sum_{k=1}^m \EE_\cD \Bigg[ \frac{ \ind\{\xi\hat{R}_j \leq k \} }{k} (\ind\{|\cS| \leq k\} - \ind\{|\cS| \leq k - 1\} ) \Bigg].
    \@
    For any $j \in [m]$, we define $\fdr(\cD, j)$ as the summand in~\eqref{eq:hete_decomp}.
    Since $|\cS| \leq |\cS^{(1)}|$ deterministically, we know 
    \$
        \ind\{\xi\hat{R}_j \leq k \}(\ind\{|\cS| \leq k\} - \ind\{|\cS| \leq k - 1\} ) = 0
    \$
    for any $k > |\cS^{(1)}| =: m_1$. Hence,
    \$
        &\fdr(\cD, j) 
        = \sum_{k=1}^{m_1} \EE_\cD \Bigg[ \frac{1}{k}\ind\{\xi\hat{R}_j \leq k \} \cdot (\ind\{|\cS| \leq k\} - \ind\{|\cS| \leq k - 1\} ) \Bigg] \\
        &\leq \frac{1}{m_1} + \sum_{k=1}^{m_1-1} \frac{1}{k} \EE_\cD \big[ \ind\{\xi\hat{R}_j \leq k \} \ind\{|\cS| \leq k\} \big] - \sum_{k=1}^{m_1-1} \frac{1}{k+1} \EE_\cD \big[ \ind\{\xi\hat{R}_j \leq k+1 \} \ind\{|\cS| \leq k\} \big] \\
        &\leq \frac{1}{m_1} + \sum_{k=1}^{m_1-1} \Bigg\{ \frac{\PP_\cD(\xi \hat{R}_j \leq k)}{k} \PP_\cD\Big( |\cS| \leq k \Biggiven \xi \hat{R}_j \leq k \Big) - \frac{\PP_\cD(\xi \hat{R}_j \leq k+1)}{k+1} \PP_\cD\Big( |\cS| \leq k \Biggiven \xi \hat{R}_j \leq k+1 \Big) \Bigg\}.
    \$
    Here, $\PP_\cD$ is with respect to the conditional distribution of $\xi$ given $\cD$. Now, we would rely on the \emph{positive dependence on a subset} (PRDS) property~\citep{benjamini2001control} of $\xi \hat{R}_j$ as follows to control $\fdr(\cD, j)$. For completeness, we include the definition of PRDS below. Lemma~\ref{lemma:b1} is the same as Lemma C.2 of~\cite{jin2023model}, hence we omit the proof here.

    \begin{definition}[PRDS]
\label{def:prds}
A random vector $X=(X_1,\dots,X_m)$ is PRDS on a subset $\cI$ if for any $i\in \cI$ and any increasing set $D$, 
the probability $\PP(X\in D\given X_i=x)$ is increasing in $x$. 
\end{definition}

    \begin{lemma}[Lemma C.2 of~\cite{jin2023model}] \label{lemma:b1}
        Let $a_1, \dots, a_k \in \mathbb{R}$ be nonnegative fixed constants, and let $\xi \sim \textnormal{Unif}([0,1])$. Then, the random variables $\{a_1\xi, \dots, a_{j-1}\xi, a_{j+1}\xi, \dots, a_k\xi\}$ are PRDS in $a_j\xi$ for any $j\in [k]$.
    \end{lemma}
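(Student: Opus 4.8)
The plan is to exploit the fact that all the variables in question are driven by the single latent uniform $\xi$, so that conditioning on $a_j\xi$ either pins $\xi$ down completely or is vacuous. Concretely, regard the full vector $X=(a_1\xi,\dots,a_k\xi)\in\RR^k$; by Definition~\ref{def:prds} it suffices to show that for every \emph{increasing} (upper) set $D\subseteq\RR^k$ the map $x\mapsto \PP(X\in D\mid X_j=x)$ is non-decreasing on the support of $X_j=a_j\xi$, since the conditional law of the sub-vector indexed by $\{1,\dots,k\}\setminus\{j\}$ is then automatically monotone as well.

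First I would dispose of the degenerate case $a_j=0$. Then $X_j\equiv 0$, the support of $a_j\xi$ is the single point $\{0\}$, and $\PP(X\in D\mid X_j=0)=\PP(X\in D)$ is a constant function of $x$ on that support, hence trivially non-decreasing. The same observation shows that any coordinate with $a_i=0$, $i\neq j$, is harmless, since it contributes a fixed $0$ entry that never affects membership in an upper set.

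For the main case $a_j>0$, conditioning on $X_j=x$ forces $\xi=x/a_j$ almost surely, for $x$ ranging over $[0,a_j]$. Hence $X$ becomes the deterministic vector $X(x):=\big((a_1/a_j)x,\dots,(a_k/a_j)x\big)$, and since every ratio $a_i/a_j$ is nonnegative, $x\mapsto X(x)$ is coordinatewise non-decreasing. Consequently, for any upper set $D$, if $x\le x'$ and $X(x)\in D$, then $X(x')\ge X(x)$ coordinatewise and so $X(x')\in D$; thus $\PP(X\in D\mid X_j=x)=\ind\{X(x)\in D\}$ is non-decreasing in $x$, which is exactly the claimed PRDS property.

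I do not anticipate a genuine obstacle here; the only care needed is making the conditioning on the measure-zero event $\{a_j\xi=x\}$ rigorous (e.g.\ via a regular conditional distribution, which in this case is just the point mass at $\xi=x/a_j$) and checking the boundary and degenerate cases noted above. If one prefers to sidestep the conditional-distribution bookkeeping entirely, an equivalent route is to observe that on $\{a_j>0\}$ the vector $X$ is almost surely a common coordinatewise non-decreasing function of the single variable $a_j\xi$, and then invoke the standard fact that such a vector is PRDS with respect to that variable.
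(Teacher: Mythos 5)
Your proof is correct. Note that the paper itself gives no proof of this lemma, simply deferring to Lemma C.2 of \cite{jin2023model}; your argument --- for $a_j>0$, conditioning on $a_j\xi=x$ pins down $\xi=x/a_j$, so the whole vector becomes the deterministic, coordinatewise nondecreasing map $x\mapsto((a_1/a_j)x,\dots,(a_k/a_j)x)$ and the conditional probability of any increasing set is a nondecreasing indicator of $x$, with the degenerate case $a_j=0$ trivial --- is exactly the standard argument behind that cited result, including the harmless lift from the subvector $(a_i\xi)_{i\neq j}$ to the full vector via cylinder sets.
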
 

    By Lemma~\ref{lemma:b1}, we see that
    \$
        \PP_\cD\Big( |\cS| \leq k \Biggiven \xi \hat{R}_j \leq k \Big) \leq \PP_\cD\Big( |\cS| \leq k \Biggiven \xi \hat{R}_j \leq k+1 \Big).
    \$
    Since $\xi$ is independent from $\hat{R}_j$, 
    \$
        \fdr(\cD, j) &\leq \frac{1}{m_1} + \sum_{k=1}^{m_1-1} \bigg\{ \frac{\PP_\cD(\xi \hat{R}_j \leq k)}{k} - \frac{\PP_\cD(\xi \hat{R}_j \leq k+1)}{k+1}  \bigg\} \PP_\cD\Big( |\cS| \leq k \Biggiven \xi \hat{R}_j \leq k \Big) \\
        &= \frac{1}{m_1} + \sum_{k=1}^{m_1-1} \bigg\{ \frac{\min\{1,k/\hat{R}_j\} }{k} - \frac{\min \{1,(k+1)/\hat{R}_j\} }{k+1}  \bigg\} \PP_\cD\Big( |\cS| \leq k \Biggiven \xi \hat{R}_j \leq k \Big)
    \$
    If $\hat{R}_j \geq m_1$, both minimum term evaluate to $1/\hat{R}_j$, leading to $\fdr(\cD, j) \leq \frac{1}{m_1} \leq \frac{1}{\hat{R}_j}$. Otherwise,
    \$
        \fdr(\cD, j) &\leq \frac{1}{m_1} + \sum_{k=\hat{R}_j}^{m_1-1} \bigg\{ \frac{1}{k}-\frac{1}{k+1} \bigg\} \PP_\cD\Big( |\cS| \leq k \Biggiven \xi \hat{R}_j \leq k \Big) \\
        &\leq \frac{1}{m_1} + \sum_{k=\hat{R}_j}^{m_1-1} \bigg\{ \frac{1}{k}-\frac{1}{k+1} \bigg\} = \frac{1}{\hat{R}_j}. \\
    \$
    Putting above bounds together, we have
    \$
        \fdr(\cD) \leq \sum_{j \in \cS^{(1)} \cap \cH_0} \frac{1}{\hat{R}_j} = \sum_{j=1}^m \frac{\ind\{p_j \leq s_j, Y_{n+j} \leq c_{n+j}\}}{\hat{R}_j}.
    \$
    Finally, marginalizing over the randomness in $\cD$ concludes the proof of the homogeneous case.

    \vspace{0.5em}
    \noindent \textbf{Case 3: Deterministic pruning}. In this case, we have
    \$
        \cS = \{j: p_j \leq s_j, \hat{R}_j \leq r^*\} \quad \text{and} \quad r^* = |\cS|.
    \$
    by Proposition~\ref{prop:Rj}. Therefore, for any $j \in \cS$, we have $\hat{R}_j \leq |\cS|$, and
    \$
        \fdr =
        \EE \Bigg[ \frac{\sum_{j=1}^{m} \ind\{j \in \cS, Y_{n+j} \leq c_{n+j}\}}{1 \vee |\cS|} \Bigg] 
        &\leq \EE \Bigg[ \frac{\sum_{j=1}^{m} \ind\{j \in \cS, Y_{n+j} \leq c_{n+j}\}}{\hat{R}_j} \Bigg] \\
        &\leq \EE \Bigg[ \frac{\sum_{j=1}^{m} \ind\{p_j \leq s_j, Y_{n+j} \leq c_{n+j}\}}{\hat{R}_j} \Bigg].
    \$
    This concludes the proof of the deterministic case and hence also the proof of Lemma~\ref{lem:fdr_decomp}.
\end{proof}

\section{Additional details and results for numerical experiments}

\subsection{Simulation setups for Section~\ref{subsec:simu_score_sel}} \label{subsec:simulation_setup}

For Liang's settings, the configuration of $X_i$, $\theta_i$ and $\epsilon_i$ for $i \in [d]$ are summarized in Table~\ref{tab:liang_settings}. The feature dimension is taken to be $d = 300$ and the noise level is taken to be $\sigma = 3$. The degree of freedom used in $t$-distribution is $\nu = 3$. The regression function is defined as $\mu(x) = X_i^\top\theta_i$.

\renewcommand{\arraystretch}{1.5}
\begin{table}[h]
    \centering
    \small
    \begin{tabular}{c|c|c|c}
        \hline
        Setting & $X_i$ & $\theta_i$ & $\epsilon_i$ \\
        \hline
        1 & $\cN(0, \mathbf{I}_d)$ & $\ind\{i \text{ mod } 20 = 0\}$ & $\sigma \cdot \cN(0,1)$ \\
        \hline
        2 & $\cN(0, \mathbf{I}_d)$& $\ind\{i \text{ mod } 20 = 0\}$ & $\sigma \cdot t_\nu(1)$ \\
        \hline
        3 & $\cN(0, \mathbf{I}_d)$ & $1/d$ & $\sigma \cdot \cN(0,1/d)$ \\
        \hline
        4 & $t_\nu(0, \mathbf{I}_d)$ & $\ind\{i \text{ mod } 20 = 0\}$ & $\sigma \cdot \cN(0,1)$ \\
        \hline
    \end{tabular}
    \captionsetup{font=small}
    \caption{Details of the four data generating processes used in Liang's settings.}
    \label{tab:liang_settings}
\end{table}

In Liang's settings, we consider eleven combinations of models and scores for the model/score selection process. Following the setups in~\cite{liang2024conformal}, we fit nine linear conditional quantile regression models  at quantiles $\alpha \in\{0.1,0.2,\dots,0.9\}$ using a randomly selected subset of features of size $d / 10$. For these models, the conformity score used is $M\cdot \ind\{Y > c\} - \hat{q}_{\alpha}(x)$ with $M=100$, where $\hat{q}_{\alpha}$ is the fitted quantile regression model at level $\alpha$. The remaining two model options are a shared linear model $\hat\mu(\cdot)$ estimating the conditional mean of $Y$ given $X$, fitted using $d / 5$ randomly selected features, with two choices of conformity scores: the standard clipped score $M\cdot\ind\{Y > c\} - \hat\mu(x)$ and a studentized version $M\cdot\ind\{Y > c\} - \hat\mu(x) / \hat\sigma(x)$ with $M=1000$, where $\hat\sigma$ is a linear conditional standard error model estimating the absolute prediction error $\EE\big[ |Y-\hat\mu(x)| \biggiven X=x \big]$. The conditional standard deviation model $\hat\sigma$ is trained on an additional dataset of 100 observations (since all methods have access to the same set of pre-trained model, the comparison is still fair). All the models are fitted using functions in the \texttt{scikit-learn} Python package. 
In this setting, the quality of models mainly depends on how many ``true'' features are included in the regression modeling. 

The details for Jin's settings are summarized in Table~\ref{tab:jin_settings}. In Section~\ref{subsec:simu_score_sel}, we set the feature dimension at $d = 20$, and the noise level at $\sigma = 1$. Settings $(1,3)$ and $(2,4)$ have the same regression function, yet with different heterogeneity structure in the noise distribution. 

\renewcommand{\arraystretch}{2}
\begin{table}[h]
    \centering
    \small
    \begin{tabular}{c|c|c|c}
        \hline
        Setting & $X_i$ & $\mu(\cdot) $ & $\epsilon_i$ \\
        \hline
        1 & $\text{Unif}[0, 1]^{d}$ & \makecell{$4x_1 \ind\{x_2 > 0\} \cdot \max\{0.5, x_3\}$ \\ \hspace{0.5cm} $+ 4x_1 \ind\{x_2 \leq 0\} \cdot \min\{x_3, -0.5\}$} & $\sigma \cdot \cN(0,1)$ \\
        \hline
        2 & $\text{Unif}[0, 1]^{d}$ & $2(x_1x_2 + e^{x_4} - 1)$ & $\frac{3}{2}\sigma \cdot \cN(0,1)$ \\
        \hline
        3 & $\text{Unif}[0, 1]^{d}$ & \makecell{$4x_1 \ind\{x_2 > 0\} \cdot \max\{0.5, x_3\}$ \\ \hspace{0.5cm} $+ 4x_1 \ind\{x_2 \leq 0\} \cdot \min\{x_3, -0.5\}$} & $\sigma \cdot (5.5 - |\mu(x)|)/2$ \\
        \hline
        4 & $\text{Unif}[0, 1]^{d}$ & $2(x_1x_2 + e^{x_4} - 1)$ & $\sigma \cdot (5.5 - |\mu(x)|)/2$ \\
        \hline
    \end{tabular}
    \captionsetup{font=small}
    \caption{Details of the four data generating processes used in Jin's settings.}
    \label{tab:jin_settings}
\end{table}

For Jin's settings, we consider $24$ options for the combination of prediction models and scores. We fitted random forest (using the \texttt{quantile-forest} Python package) and gradient boosting (using the \texttt{scikit-learn} Python package) quantile regressors at nine quantile levels $\alpha\in \{0.1,0.2,\dots,0.9\}$, yielding $18$ models. For these models, the conformity score is defined as $V(x,y) = M\cdot\ind\{Y > c\} - \hat{q}_{\alpha}(x)$ with $M=1000$. Additionally, we fit  a random forest conditional mean model (with \texttt{scikit-learn} Python package), $\hat\mu$, which is shared across the remaining six model options. These six options use conformity scores of the form $M\cdot\ind\{Y > c\} - \hat\mu(x) / \hat\sigma(x)$ with $M=1000$ and $6$ options for the function $\hat\sigma(\cdot)$: a constant value of 1 (equivalent to the standard clipped score) and five conditional mean squared error models, including gradient boosting, random forest, support vector machine, Lasso, and Ridge regression, all using the \texttt{scikit-learn} package. 
The model $\hat\sigma$ is trained on an additional dataset of 100 observations.

\subsection{Additional results for Section~\ref{subsec:simu_score_sel}} \label{subsec:more_simu}

In this part, we include additional simulation results that show the performance of OptCS-MSel and baselines when the training and calibration sample sizes vary. 
Figure~\ref{fig:simu_sel_liang_scaleup} presents the empirical FDR and power for Liang's settings, and Figure~\ref{fig:simu_sel_jin_scaleup} presents these results for Jin's settings.

\begin{figure}[H]
    \centering
    \small
    \includegraphics[width=0.9\linewidth]{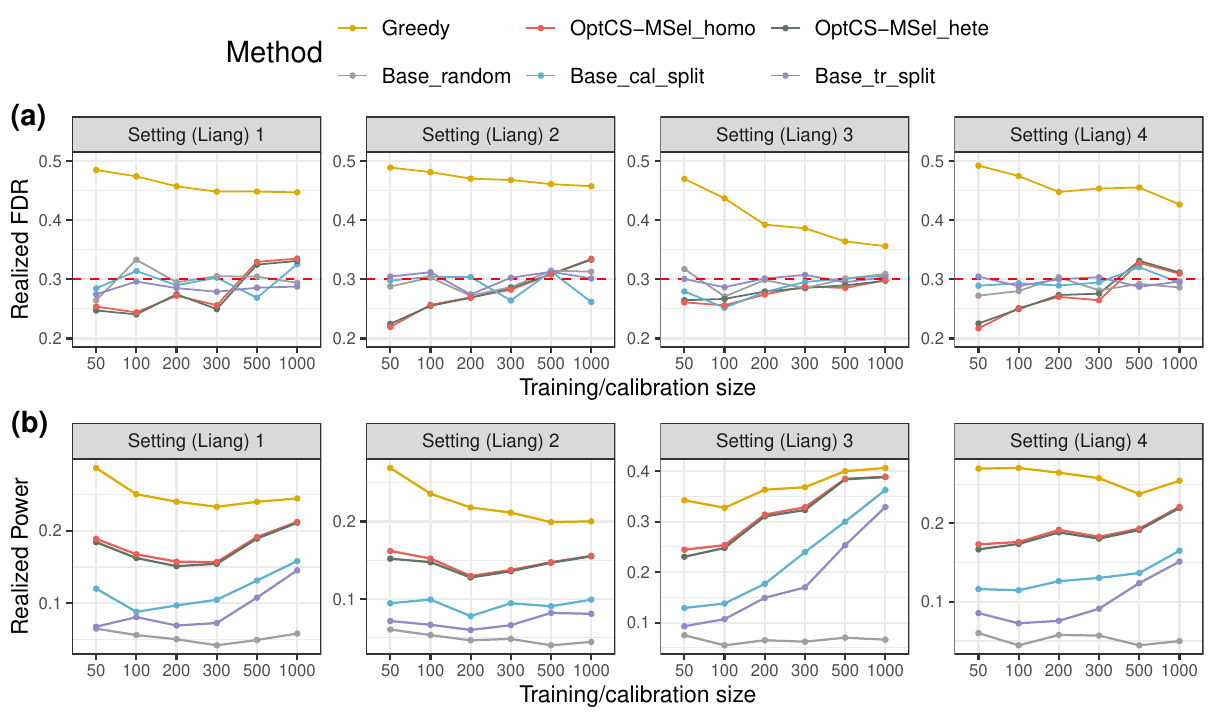}
    \captionsetup{font=small}
    \caption{Additional results for Section~\ref{subsec:simu_score_sel} when varying sample sizes for Liang's settings. Each subplot corresponds to one data generating process, with $x$-axis being $n=n_\train=n_\calib$, while fixing  $q=0.3$.}
    \label{fig:simu_sel_liang_scaleup}
\end{figure}

\begin{figure}[H]
    \centering
    \small
    \includegraphics[width=0.9\linewidth]{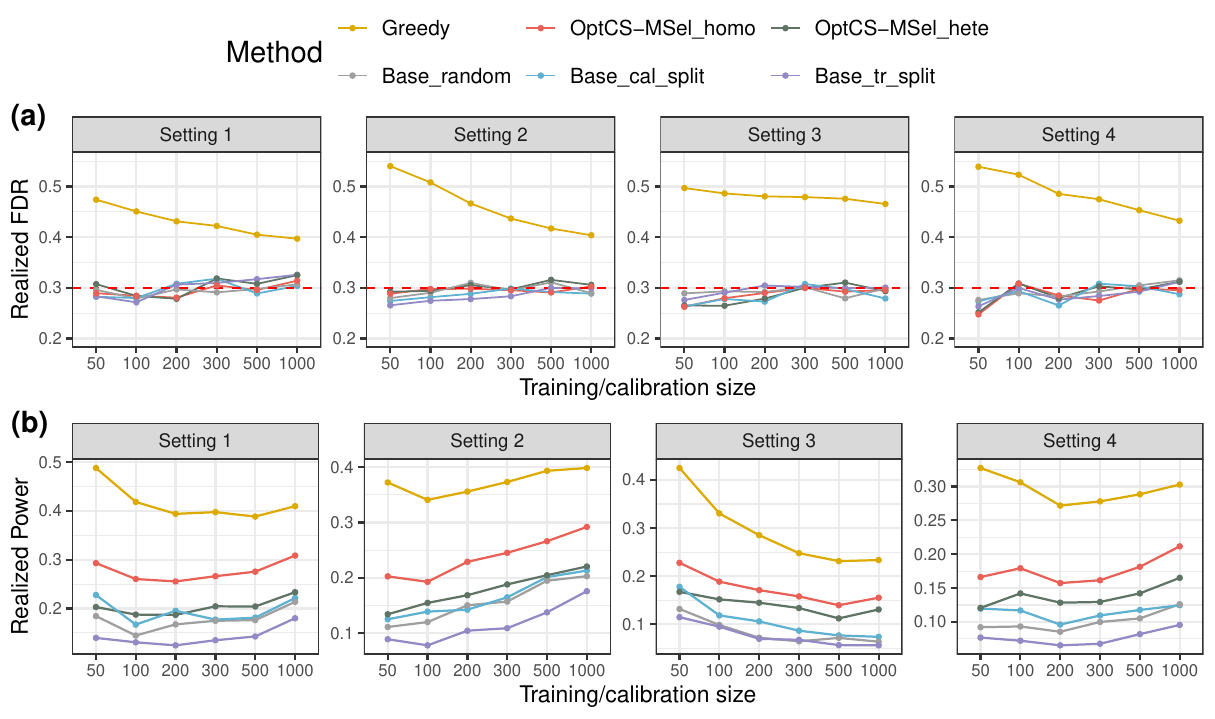}
    \captionsetup{font=small}
    \caption{Additional results for experiments in Section~\ref{subsec:simu_score_sel} when varying sample sizes for Jin's settings. Details are otherwise the same as Figure~\ref{fig:simu_sel_liang_scaleup}.}
    \label{fig:simu_sel_jin_scaleup}
\end{figure}

\subsection{Simulation setups for Section~\ref{subsec:simu_loo}} \label{subsec:loo_setup}

The details for Jin's settings used in Section~\ref{subsec:simu_loo} can be found in Table~\ref{tab:loo_jin_settings}. We adapt the settings from~\cite{jin2023model} to a classification setting,  where the binary labels are defined via $Y_i = \ind\{\mu(X_i) + \epsilon_i >0\}$.  
The feature dimension is $d=10$ to limit the computation costs, and the noise level is $\sigma=0.5$.  
In Table~\ref{tab:loo_jin_settings}, the settings 1 and 3 slightly modify the settings 1 and 3 in Table~\ref{tab:jin_settings} to make sure that increasing the training sample size leads to more powerful models at least in the split conformal selection procedure. 
We note that in the original settings 1 and 3 in Table~\ref{tab:jin_settings}, increasing the sample size reduces selection power in \basename potentially due to the specific choice function classes, yet this does not affect the results in Section~\ref{subsec:simu_score_sel} since the sample sizes are fixed there. 
We will thus use these four settings for both Section~\ref{subsec:simu_loo} and Section~\ref{subsec:simu_full_msel}. 

\renewcommand{\arraystretch}{2}
\begin{table}[h]
    \centering
    \small
    \begin{tabular}{c|c|c|c}
        \hline
        Setting & $X_i$ & $\mu(\cdot) $ & $\epsilon_i$ \\
        \hline
        1 & $\text{Unif}[0, 1]^{d}$ & $\frac{1}{2}\ind\{x_1x_2 > 0\} + \ind\{x_1x_2 \leq 0\} + x_4$ & $2\sigma \cdot \cN(0,1)$ \\
        \hline
        2 & $\text{Unif}[0, 1]^{d}$ & $2(x_1x_2 + e^{x_4} - 1)$ &  $\frac{3}{2}\sigma \cdot \cN(0,1)$ \\
        \hline
        3 & $\text{Unif}[0, 1]^{d}$ & $\frac{1}{2}\ind\{x_1x_2 > 0\} + \ind\{x_1x_2 \leq 0\} + x_4$ & $\sigma \cdot (5.5 - |\mu(x)|)/2$ \\
        \hline
        4 & $\text{Unif}[0, 1]^{d}$ & $2(x_1x_2 + e^{x_4} - 1)$ & $\sigma \cdot (5.5 - |\mu(x)|)/3$ \\
        \hline
    \end{tabular}
    \captionsetup{font=small}
    \caption{Details of the four data generating processes used in Jin's settings for Section~\ref{subsec:simu_loo}.}
    \label{tab:loo_jin_settings}
\end{table}

\subsection{Additional results for Section~\ref{subsec:simu_loo}}
\label{app:subsec_simu_loo}

Figure~\ref{fig:additional_loo_results} presents the empirical FDR and power for all three model classes (random forest and support vector regression using \texttt{scikit-learn} Python library, and XGBoost using \texttt{xgboost} Python library) considered in \cite{jin2023selection} for the four data generating processes. Consistent with the results in Section~\ref{subsec:simu_loo}, OptCS-Full outperforms sample splitting baselines by utilizing all sample in training.  Moreover, the over-sampling and separate-training variants yield similar performance. 

\begin{figure}[H]
    \centering
    \includegraphics[width=\linewidth]{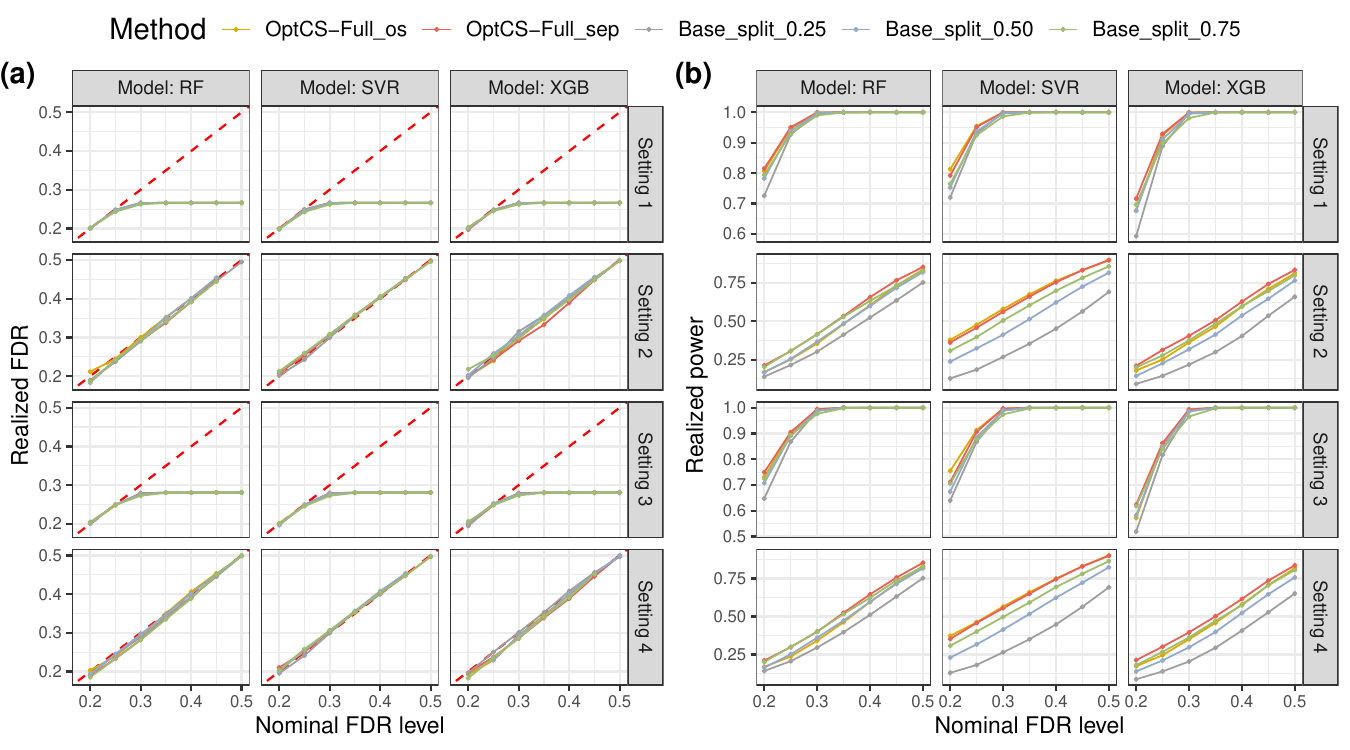}
    \caption{Results for OptCS-Full and split conformal selection with all three model classes: random forest, support vector regression and gradient boosting. Details are the same as Figure~\ref{fig:simu_LOO_onemodel}.}
    \label{fig:additional_loo_results}
\end{figure}

\subsection{Simulation setups for Section~\ref{subsec:simu_full_msel}}
\label{app:subsec_simu_full_msel}

In this section, we use the same data-generating processes as in~\ref{subsec:simulation_setup} for Liang's settings and~\ref{subsec:loo_setup} for Jin's settings. To repeat the coefficients again, we use a feature dimension of $d=10$, and noise level $\sigma=0.5$ for Jin's settings. For Liang's settings, we use feature dimension $d=300$, noise level $\sigma=3$, and degree of freedom $\nu=3$ for the $t$-distributions involved in the data generation.

For Liang's settings, we consider 7 combinations of models and scores. Three linear conditional quantile models at quantile level $\alpha = 0.25, 0.5, 0.75$ were fitted using randomly selected subset of features of size $d/10$. Similar to setups in section~\ref{subsec:simulation_setup}, the conformity scores used here takes the form $V(x,y) = M\cdot\ind\{Y > c\} - \hat{q}_{\alpha}(x)$ for $M=1000$.
For the remaining 4 options, we fit two linear regression models with $d/10$ and $d/5$ randomly chosen features, and pair them with either an linear regression error model $\hat\sigma$ or without error estimation, i.e. $\hat\sigma \equiv 1$. In the case where an error model is employed, the set of all labeled data is first partitioned to two subsets for fitting the linear regression model (70\%) and the error model (30\%) in the first place, respectively. The conformity scores for these options are $M\cdot\ind\{Y > c\} - \hat\mu(x) / \hat\sigma(x)$ for $M=1000$. All these models are fitted using functions in the \texttt{scikit-learn} Python package.  
As we adopt two options with $d/5$ features as supposed to $d/10$, the model accuracies vary significantly in this setting. This explains the relatively big performance gap between the random baselines and other competing methods.

For Jin's settings, we consider 9 pairs of candidate model and scores. We fitted three random forest quantile regressions at level $\alpha = 0.25, 0.5, 0.75$ (from \texttt{quantile-forest} package), and again use conformity scores of the form $V(x,y) = M\cdot\ind\{Y > c\} - \hat{q}_{\alpha}(x)$. We also consider $3$ conditional mean models: support vector regression, Lasso and Ridge regression. Each of them is paired with either a conditional error model $\hat\sigma$ fitted with SVR or without error estimation (i.e., the clipped score). Again, the conformity scores for these options are $M\cdot\ind\{Y > c\} - \hat\mu(x) / \hat\sigma(x)$, and when an error model is used, a 70\%-30\% data split is done to train the conditional mean model and error model respectively.

\subsection{Datasets and pre-trained model for Section~\ref{subsec:app_LLM}}
\label{app:subsec_data_LLM}

In Section~\ref{subsec:app_LLM}, we use the same subset (p10, p11 and p12 folders) of the MIMIC-CXR dataset~\citep{johnson2019mimic} as in~\cite{gui2024conformal}, which is accessed from the PhysioNet project page \url{https://physionet.org/content/mimic-cxr/2.0.0/} under the PhysioNet Credentialed Health Data License 1.5.0. In our experiments, we draw a subset of images in the test folder determined by the same split as \cite{gui2024conformal}. In this way, the randomness is purely from randomly splitting the data into labeled data and test samples. 

The foundation model for generating the radiology reports is the one fine-tuned in \cite{gui2024conformal}. We include the details here for completeness. Specifically, this vision-language model combines the Vision Transformer \texttt{google/vit\-base-patch16-224-in21k} pre-trained on ImageNet-21k\footnote{\url{https://huggingface.co/google/vit-base-patch16-224-in21k}} as the image encoder and GPT2 as the text decoder.  
Each raw image is resized to $224\times 224$ pixels. 
The model  is fine-tuned on a hold-out dataset with a sample size of $43,300$ for $10$ epochs with a batch size of $8$, and other hyperparameters are set to default values. 
When generating reports, all the parameters are kept the same as the conformal alignment paper; we refer the readers to \cite[Appendix C.2]{gui2024conformal} for these details.

\subsection{Detailed setup for Section~\ref{subsec:app_LLM}}
\label{app:subsec_detail_LLM}

We use exactly the same procedures as \cite{gui2024conformal} to compute $12$ features which (heuristically) measure the uncertainty of LLM-generated outputs:
\vspace{0.5em}
\begin{itemize}
    \item \emph{Input uncertainty scores} (\texttt{Lexical\_Sim}, \texttt{Num\_Sets}, \texttt{SE}). Following~\cite{kuhn2023semantic}, we compute a set of features that measure the uncertainty of each LLM input through similarity among multiple answers. The features include lexical similarity (\texttt{Lexical\_Sim}), the \texttt{rouge-L} similarity among the  answers. In addition, we use a natural language inference (NLI) classifier to categorize the $M$ answers into semantic groups, and compute the number of semantic sets (\texttt{Num\_Sets}) and semantic entropy (\texttt{SE}).  Following \cite{kuhn2023semantic, lin2023generating}, 
    we use an off-the-shelf DeBERTa-large model \citep{he2020deberta} as the NLI predictor.
    \item \emph{Output confidence scores} (\texttt{EigV(J/E/C)}, \texttt{Deg(J/E/C)}, \texttt{Ecc(J/E/C)}). We also follow~\citep{lin2023generating} to compute features that measure the so-called output confidence: with $M$ generations, we compute the eigenvalues of the graph Laplacian (\texttt{EigV}), the pairwise distance of generations based on the degree matrix (\texttt{Deg}), and the Eccentricity (\texttt{Ecc}) which incorporates the embedding information of each generation. Note that each quantity is associated with a similarity measure; we follow the notations in \cite{lin2023generating} and use the suffix \texttt{J}/\texttt{E}/\texttt{C} to differentiate similarities based on the Jaccard metric, NLI prediction for the entailment class, and NLI prediction for the contradiction class, respectively. 
\end{itemize}
\vspace{0.5em}

For experiments with OptCS-Full, we consider three model classes, logistic regression and random forests from the \texttt{scikit-learn} Python library and XGBoost from the \texttt{xgboost} Python library, utilizing all the above measures as the features. 

The first setup in experiments with OptCS-Full-MSel is as follows. 
We treat the task as a regression problem, and fit 3 linear quantile regressors and 3 random forest quantile regressors at quantile level $\alpha = 0.25, 0.5, 0.75$. They are paired with the conformity score $V(x, y) = My -\hat{q}_\alpha(x)$ with $M=1000$. For the remaining 6 options, we use conformity score of the form $My - \hat\mu(x)/\hat\sigma(x)$, where $\hat\mu$ is taken to be a random forest model, support vector regression model or linear regression model. Each of them is paired with no error estimation, i.e. $\hat\sigma \equiv 1$, or a Lasso error model fitted using 70\% of the total training data.


The second setup for OptCS-Full-MSel experiments is as follows. 
We group the features into four groups: 
\vspace{0.5em}
\begin{enumerate}[label=(\alph*)]
    \item Input uncertainty scores: (\texttt{Lexical\_Sim}, \texttt{Num\_Sets}, \texttt{SE}).
    \item Output confidence scores with Jaccard metric: (\texttt{EigV(J)}, \texttt{Deg(J)}, \texttt{Ecc(J)}).
    \item Output confidence scores with NLI prediction for entailment: (\texttt{EigV(E)}, \texttt{Deg(E)}, \texttt{Ecc(E)}).
    \item Output confidence scores with NLI prediction for contradiction: (\texttt{EigV(C)}, \texttt{Deg(C)}, \texttt{Ecc(C)}).
\end{enumerate}
\vspace{0.5em}

We train three classification model classes (logistic regression, random forest, and XGBoostRF, using the \texttt{scikit-learn} and \texttt{xgboost} Python library) with each individual group of features and all four groups of features, respectively, leading to a total of $15$ classification models $g$. Then, we use the same conformity score $V(x,y) = My-g(x)$ for $M=1000$. 

The baselines considered in above two setups are very similar to those in Section~\ref{subsec:simu_full_msel}. For clarity, we summarize them again here:

\vspace{0.5em}
\begin{enumerate}[label=(\roman*).]
    \item \texttt{OptCS-Full-MSel}: Our OptCS-Full-MSel procedure where the training process uses over-sampling.
    \item \texttt{Greedy}: Randomly split the labeled data into  $\cD_{\train}$ (62.5\%) and $\cD_\calib$ (37.5\%). After training each candidate model on $\cD_\train$, we use the conformity score function which leads to the largest selection set in SCS, with $\cD_\calib$ as the calibration data.
    This configuration of data splitting is to agree with the ratio suggested in~\cite{gui2024conformal}.
    \item \texttt{Base\_random}: Split the data with the same ratio as in (ii), and then apply \basename with a randomly chosen model.
    \item \texttt{Base\_split\_112}: Randomly split the labeled data into $\cD_\train$, $\cD_\sel$, and $\cD_\calib$ with ratio 1:1:2. We use $\cD_\train$ to train all the models, use $\cD_\sel$  to run \basename (after additional splitting) and select the model with largest selection set, then use $\cD_\calib$ to run SCS with the test data and the selected model. 
    \item \texttt{Base\_split\_121}: Similar to (iii), with  data split ratio 1:2:1 for $\cD_\train$, $\cD_\sel$, and $\cD_\calib$.
    \item \texttt{Base\_split\_211}: Similar to (iii), with  data split ratio 2:1:1 for $\cD_\train$, $\cD_\sel$, and $\cD_\calib$.
    \item \texttt{Base\_split\_111}: Similar to (iii), with  data split ratio 1:1:1 for $\cD_\train$, $\cD_\sel$, and $\cD_\calib$.
\end{enumerate}
\vspace{0.5em}
\end{document}